\documentclass[11pt]{article}
\usepackage{amsmath,amsthm,amsfonts,amscd,eucal,latexsym,amssymb,mathrsfs}
%amsmath,amsthm,amsfonts,amscd,eucal,latexsym,amssymb,bm,amsbsy,mathrsfs
\usepackage{epsfig}  % ciao
\usepackage{simplewick}
\usepackage{tikz-cd}
\usepackage{hyperref}
\usepackage{color}

%\usepackage{chngcntr}
%\counterwithin{figure}{section}
%\counterwithout{figure}{section}

\interfootnotelinepenalty=10
%%%%%%%%%%%%%%%%%%%%%%%%%%%%%%%%%%%%%%%%%%%%%%%%%%%%%%%%%%%%%%%%%%%%%%%%%%%% da togliere serve per le ricerche
%%%% aspell -t -c p3.new.v3.tex --lang=en   
%\usepackage[active]{srcltx}
%\usepackage{epic}{\cal F}
%\usepackage{eepic}
%\renewcommand{\baselinestretch}{1.5}      %%%  double spacing 
\oddsidemargin 0cm      % left margin of right page 
\evensidemargin 0cm     % left margin of left page 
\headheight 0pt         % height of head\ 
\headsep 0pt            % separation between head and text
\textheight 20cm        % height of text  24 
\textwidth 16cm         % width of text  
 %%%%%%%%%%%%%%%%%%%%%%%%%%  Mathematical Symbols  
\hyphenation{ho-mo-mor-phism}
%%%%%%%%%%%%%%%% 

\input xy
\xyoption{all}
 
%\DeclareMathSymbol\bt 1202           %%% \boxtimes 

\def\be{\beta}

\def\si{\sigma}
\def\om{\omega}

\def\Om{\Omega}

\def\WF{{\text{WF}}}

\def\bC{{\mathbb C}}           %%%  complex numbers and so on 

\def\bN{{\mathbb N}}

\def\bR{{\mathbb R}}

\def\b1{{\boldsymbol 1}}
 
%\newsymbol\rest 1316         %%% restriction symbol 
 
       %%% Ghotic 

%\def\gZ{{\mathfrak Z}} 

\def\mA{\mathscr{A}}

\def\mE{\mathscr{E}} 
 
\def\mH{\mathscr{H}}

\def\mF{\mathscr{F}}  
\def\mI{\mathscr{I}}  
\def\mM{\mathscr{M}}

\def\mR{\mathscr{R}} 
\def\mP{\mathscr{P}}

\def\beq{\begin{eqnarray}}
\def\eeq{\end{eqnarray}}
\def\pa{\partial}
               %%  open bracket
\def\at{\left(}               %%  open (
               %%  open [
              %%  open {

              %%  close bracket
              %%  close )
              %%  close ]
             %%  close }
         %%  calligraphic

\def\be{\beta}

\def\si{\sigma}
\def\om{\omega}

\def\Om{\Omega}

\def\supp{\mathrm{supp}\;}

\newcommand{\bra}[1]{\langle{#1}|}
\newcommand{\ket}[1]{|{#1}\rangle}

% nicola
\def\bx{{\boldsymbol{x}}}
\def\by{{\boldsymbol{y}}}
\def\bz{{\boldsymbol{z}}}
\def\bp{{\boldsymbol{p}}}
\def\bk{{\boldsymbol{k}}}
\def\bq{{\boldsymbol{q}}}
\def\bn{\boldsymbol{n}}
\def\bl{\boldsymbol{l}}
\def\bpp{\boldsymbol{p}_\parallel}
\def\bkp{\boldsymbol{k}_\parallel}
\def\bqp{\boldsymbol{q}_\parallel}
\def\blp{\boldsymbol{l}_\parallel}
\def\bmp{\boldsymbol{m}_\parallel}
\def\bnp{\boldsymbol{n}_\parallel}

\def\bxp{\boldsymbol{x}_\parallel}
\def\bxpI{\boldsymbol{x}_{\parallel 1}}
\def\bxpII{\boldsymbol{x}_{\parallel 2}}
\def\byp{\boldsymbol{y}_\parallel}
\def\bvp{\boldsymbol{v}_\parallel}
\def\bzp{\boldsymbol{z}_\parallel}
\def\bm{\boldsymbol{m}}
\def\omp{\omega_{\boldsymbol{p}}}
\def\omk{\omega_{\boldsymbol{k}}}
\def\omq{\omega_{\boldsymbol{q}}}
\def\omn{\omega_{\boldsymbol{n}}}

\def\omm{\omega_{\boldsymbol{m}}}

\def\ompq{\omega_{\boldsymbol{p}-\boldsymbol{q}}}

\def\ABU{\mathscr{A}_{\mathrm{BU}}}

\def\AV{\mathscr{A}_{ V}}

\def\AO{\mathscr{A}_{0}}
\def\AOon{\mathscr{A}^{\text{on}}_{0}}
\def\moller{\mR_{ V}}

\def\omb{\omega_\beta}

\def\ombi{\omega_{\beta_i}}
\def\ombV{\omega^{ V}_\beta}
\def\omG{\omega_G}
\def\omGV{\omega^{ V}_G}
\def\omN{\omega_N}
\def\omNV{\omega^{V}_N}
\def\ombV{\omega^{V}_{\beta}}

\def\Fmuc{\mF_{\mu \text{c}}} 
\def\Floc{\mF_{\mathrm{loc}}} 
\def\FTloc{\mF_{T\mathrm{loc}}} 
\def\Freg{\mF_{\mathrm{reg}}}

\def\ren{\text{ren}}

\def\at{\alpha_t}
\def\atV{\alpha^{ V}_t}

\newcommand{\wick}[1]{:\!{#1}\!:}

\def\vlim{\mathrm{vH}\text{-}\!\lim}

%%%%%%%%%%%%%%%%%%%%%%propositions, theorems etc

%\def\proof #1 {\vspp\ifhmode{\par}\fi\noindent {\it Proof.} {#1} $\Box$\vsp\par}

%\newtheorem{theorem}{Theorem}[section]
%\newtheorem{corollary}{Corollary}[section]
%\newtheorem{proposition}{Proposition}[section]
%\newtheorem{lemma}{Lemma}[section]
%\theoremstyle{definition}
%\newtheorem{definition}{Definition}[section]
%\newtheorem{remark}{Remark}[section]

\newtheorem{theorem}{Theorem}[section]

\newtheorem{proposition}[theorem]{Proposition}

\theoremstyle{definition}

%====== UPDATES

\definecolor{NiColor}{RGB}{32, 130, 150}

%======
 
%%%%%%%%%%%%%%%%%%%%%%%%%%%%%%%%%%%%%%%%%%%%%
%%%%%%%%%%%%%%%%%%%%%%%%%%%%%%%%%%%%%%%%%%%%%
%%%%%%%%%%%%%%%%%%%%%%% 

\begin{document} 
% 
%\hfill{\sl Preprint UTM XYZ - February 2013} 
%\par 
%\bigskip 
%\par 
%\rm 
% 
%%%%%%%%%%%%%   Title %%%%%%%%%%%%%%%%%%%%%%%%%% 
 
\par 
\bigskip 
\LARGE 
\noindent 
{\bf Non-equilibrium steady states for the interacting Klein-Gordon field in 1+3 dimensions} 
\bigskip 
\par 
\rm 
\normalsize 
 
%%%%%%%%%%%%%%%%%%%%%%%%%%%%%%%%%%%%%%%%%%%%%
%%%%%%%%%%%% Authors %%%%%%%%%%%%%%%%%%%%%%%%

\large
\noindent 
{\bf Thomas-Paul Hack$^{1,a}$}, {\bf Rainer Verch$^{1,b}$} \\
\par
\small
\noindent$^1$ Institute f\"ur Theoretische Physik, Universit\"at Leipzig,
\smallskip

\noindent E-mail: 
$^a$thomas-paul.hack@itp.uni-leipzig.de, 
$^b$rainer.verch@uni-leipzig.de\\ 

\normalsize

\par 
 
\rm\normalsize 

%\linespread{1.5} 
\rm\normalsize 
 
%%%%%%%%%%%% Date %%%%%%%%%%%%%%%%%%%%%%%%%%  
%%%%%%%%%%%% Date %%%%%%%%%%%%%%%%%%%%%%%%%% 
 
\par 
\bigskip 
 
\rm\normalsize 
\noindent {\small Version of \today}

\par 
\bigskip

%\linespread{1.5} 
\rm\normalsize

\small 

\noindent Non-equilibrium steady states (NESS) describe particularly simple and stationary non-equilibrium situations. A possibility to obtain such states is to consider the asymptotic evolution of two infinite heat baths brought into thermal contact. In this work we generalise corresponding results of Doyon~et.~al. (J.\ Phys.\ A 18 (2015) no.9) for free Klein-Gordon fields in several directions. Our analysis is carried out directly at the level of correlation functions and in the algebraic approach to QFT. We discuss non-trivial chemical potentials, condensates, inhomogeneous linear models and homogeneous interacting ones. We shall not consider a sharp contact at initial time, but a smooth transition region. As a consequence, the states we construct will be of Hadamard type, and thus sufficiently regular for the perturbative treatment of interacting models. Our analysis shows that perturbatively constructed interacting NESS display thermodynamic properties similar to the ones of the NESS in linear models. In particular, perturbation theory appears to be insufficient to describe full thermalisation in non-linear QFT. Notwithstanding, we find that the NESS for linear and interacting models is stable under small perturbations, which is one of the characteristic features of equilibrium states.

\normalsize

\vskip .3cm

\tableofcontents

\section{Introduction}

Non-equilibrium steady states (NESS) describe particularly simple and stationary situations in non-equilibrium thermodynamics. Many examples of such states have been studied and an axiomatic analysis in the context of quantum statistical mechanics has been carried out in \cite{Ru00}. A possibility to obtain a NESS is to consider long-time limits of initial configurations of systems whose dynamics is such that the initial state can not fully equilibrise. An example of such an initial state is the case of two semi-infinite heat baths brought into thermal contact at a $d-2$-dimensional surface in $d$ spacetime dimensions. The dynamics of such states has been investigated in several models including conformal hydrodynamics \cite{Doyon2} and conformal QFT in $d=2$ spacetime dimensions \cite{Doyon3,HollandsLongo}. 

In this work, we shall consider this situation for the Klein-Gordon field in $d=4$ spacetimes dimensions, with parts of the analysis applying to any $d\ge2$. This setup has already been studied in \cite{Doyon} for linear models, i.e. free fields, where the authors have studied a sharp contact at $x^1 = 0$. In more detail, the initial state analysed in \cite{Doyon} is in equilibrium at inverse temperatures $\beta_1$ and $\beta_2$ for $x^1<0$ and $x^1>0$, respectively, with suitable boundary conditions  at $x^1=0$. The authors of \cite{Doyon} show that this state converges to a limit NESS which is independent of the boundary conditions chosen at $x^1=0$. They also show that the limit NESS is a proper equilibrium state in a different rest frame if and only if $d=2$ and $m=0$, where $m$ is the mass of the field. The failure of the limit NESS to be in proper equilibrium is attributed to the fact that linear QFT models possess infinitely many conserved charges, which, in view of the concept of generalised Gibbs ensembles \cite{Rigol}, obstruct proper thermalisation. Consequently, it has been conjectured in \cite{Doyon} that non-linear models should display a better thermalisation behaviour.

Taking this as a motivation, we shall study NESS arising as asymptotic limits of initial states of the aforementioned kind in interacting models in $d=4$ spacetime dimensions. However, we shall also generalise the work of \cite{Doyon} on linear models in various aspects. Our analysis  will be carried out directly at the level of correlation functions and in the algebraic approach to QFT. Thus we will be able to avoid Fock-space constructions and the introduction of a finite box cut-off in order to make some expressions well-defined. We shall further discuss non-trivial chemical potentials, as well as related condensates, and also inhomogeneous linear models in order to analyse the effect of breaking momentum conservation. Most importantly, we shall not consider a sharp contact at initial time, but a smooth transition region. While this does, as we will show, not have an effect on the limit NESS, the initial states we prepare will be more regular than the ones considered in \cite{Doyon}. In fact, the states we construct for homogeneous linear models will be of the so-called Hadamard type, and thus they will be sufficiently regular  for all our perturbative constructions for interacting models to be well-defined.

After preparing initial states for interacting models, we show that these states converge to a limit NESS. Our results in that respect are explicit only at first order in perturbation theory, though we suspect and provide indications that convergence holds at all orders. The authors of \cite{Doyon} already argued that the limit NESS they found is separately in equilibrium for left- and right-moving modes, though at different temperatures. We will argue that the interacting NESS can also be characterised in this way, which may be traced back to the fact that a non-linear interaction can not mix left- and right-moving modes in perturbation theory unless it breaks translation invariance in space. In that sense, the interacting NESS we construct does not appear  to have thermodynamic properties which deviate considerably from those of the corresponding free NESS. We verify this assessment by computing the entropy production in an interacting NESS with respect to the corresponding free NESS, and find that it is vanishing. However, we show that both the interacting and the free NESS are stable with respect to local perturbations, which demonstrates that they possess at least one of the characteristic properties of equilibrium states.

This paper is organised as follows. In Section \ref{sec_preliminaries}, we briefly review the algebraic description of free and interacting QFT models, as well as the definition of equilibrium and Hadamard states. We also develop a rigorous way to define regular initial states which have prescribed properties in different regions of space at initial time. In the sections \ref{sec_NESSlinear} and \ref{sec_NESSinteracting} we discuss the preparation and convergence of initial states to NESS for linear and interacting models, respectively. We conclude by analysing various properties of these NESS in Section \ref{sec_properties}. Many results require extensive computations and arguments, which are collected in several appendices.

\section{Preliminaries}
\label{sec_preliminaries}
\subsection{KMS states, condensates and thermal domination}
\label{sec_KMS}
In this paper we shall discuss aspects of non-equilibrium thermodynamics in the language of the algebraic approach to QFT. In this algebraic framework, the observables of a QFT model form a topological $*$-algebra $\mA$ containing a unit which we denote by $\b1$. The involution $*$ defined on $\mA$ satisfies $(A^*)^* = A$ for all $A\in\mA$ and $(c A B)^* = \overline{c} B^* A^*$ for all $A,B\in \mA$ and $c\in\bC$ and is an algebraic abstraction of the Hermitean conjugation of operators. States are complex-valued linear functionals on $\mA$ which are continuous with respect to the topology on $\mA$ and satisfy $\om(\b1)=1$ (normalisation) and $\om(A^*A)\ge0$ for all $A\in\mA$ (positivity). For any such state, a represention $\pi_\om$ of $\mA$ on a Hilbert space $\mH_\om$, which is unique up to unitary equivalence, can be obtained by the GNS construction. In this representation, $\om$ corresponds to a vector $\ket{\Om_\om}\in\mH_\om$ and $\om(A) = \bra{\Om_\om}\pi_\om(A)\ket{\Om_\om}$ for all $A\in\mA$. Further details may be found for example in \cite{Haag,Khavkine}.

We consider a linear Klein-Gordon model on $d$-dimensional Minkowski spacetime with equation of motion
\beq\label{eq_kgo}
K\Phi = 0\,,\qquad K \doteq \partial^2_{0} + L\,,\qquad L \doteq - D + V
\eeq
where $\partial_\mu \doteq \frac{\partial}{\partial x^\mu}$, $D = \sum^{d-1}_{i=1}\partial^2_{i}$ is the $d-1$-dimensional Laplacian and $V$ is a time-independent potential such that $L$ is semi-positive and essentially self-adjoint on $C^\infty_0(\bR^{d-1})$, the compactly supported complex-valued smooth functions on $\bR^{d-1}$. In order to allow for non-trivial chemical potentials, the scalar field $\Phi$ is complex-valued. In the discussion of non-linear models, we will consider only real scalar fields indicated by $\phi$. The basic algebra of observables for $\Phi$ is the Borchers-Uhlmann algebra $\ABU$, which is the the algebra generated by a unit $\b1$ and symbols $\Phi(f)$, $\overline{\Phi}(f)$, $f\in C^\infty_0(\bR^{d})$, factored by suitable algebraic relations, such that, for any $f_1,\dots,f_n\in C^\infty_0(\bR^{d})$, $(\Phi(f_1)\dots \Phi(f_n))^* = \overline{\Phi}(\overline{f_n})\dots \overline{\Phi}(\overline{f_1})$, $\Phi(Kf_1)=0$ and $[\overline{\Phi}(f_1),\Phi(f_2)] = 2 i \Delta(f_1,f_2)\b1$. Here, $\Delta = \Delta_R - \Delta_A$ is the retared-minus-advanced Green's function of the Klein-Gordon operator $K$, frequently called commutator function, causal propagator, or Pauli-Jordan distribution. The topology on $\ABU$ is the one induced by the topology of compactly supported smooth functions, see e.g. \cite{Khavkine} for further details. The symbols $\Phi(f)$, $\overline{\Phi}(f)$ may be interpreted as smeared quantized fields and their complex conjugates, respectively, e.g. $\Phi(f) = \int_{\bR^{d}} \Phi(x) f(x) dx$. Consequently, $\Phi(x)$ and $\overline{\Phi}(x)$ become operator valued distributions in a GNS representation of $\ABU$ and a state $\om$ on $\ABU$ is entirely determined by specifying $n$-point correlation functions $\om(\Phi^\sharp(x_1)\dots \Phi^\sharp(x_n))$, for all $n$, where $\Phi^\sharp$ stands for either $\Phi$ or its complex conjugate.

Consider an arbitrary $*$-algebra $\mA$ as well as a one-parameter automorphism group $\alpha_t$ on $\mA$, i.e. $\alpha_t \circ \alpha_s = \alpha_{t+s}$, $\alpha_t(AB) = \alpha_t(A)\alpha_t(B)$, $\at(A^*) = \at(A)^*$ for all $t,s\in\bR$ and $A,B\in\mA$. A state $\omega$ on $\mA$ is said to satisfy the KMS condition with respect to $\alpha_t$ for inverse temperature $\beta$ if, for any pair $A,B\in\mA$ there exists a function $F_{AB}(z)$ which is analytic in the strip $S_\beta \doteq \{z\in \bC: 0 < \Im(z) < \beta\}$ and continuous and bounded on the boundaries of $S_\beta$ such that, for all $t \in \bR$
$$
F_{AB}(t) = \om(A \alpha_t(B))\,,\qquad F_{AB}(t+i\beta) = \om(\alpha_t(B)A)\,.
$$
States satisfying the KMS condition are called KMS states and are the proper generalisation of Gibbs ensembles in the thermodynamic limit \cite{Haag:1967sg}. The KMS condition implies in particular the invariance of any KMS state $\omb$ under $\alpha_t$, $\omb\circ\alpha_t = \omb$, the time-evolution with respect to which $\omb$ is in equilibrium.

In order to discuss KMS states for the linear Klein-Gordon field $\Phi$, we introduce a time-evolution $\alpha^\mu_t$ on $\ABU$ by setting 
\beq\label{eq_timeevol}
\alpha^\mu_t(\Phi(x^0,\bx)) = e^{-i \mu t} \Phi(x^0+t,\bx)\,,\qquad \alpha^\mu_t(\overline{\Phi}(x^0,\bx)) = e^{i \mu t} \overline{\Phi}(x^0+t,\bx)
\eeq
 where $\mu\in\bR$ is the chemical potential. Denoting the unique self-adjoint extension of the semi-positive operator $L$ defined in \eqref{eq_kgo} by $L$ as well, and taking advantage of functional calculus, the KMS states $\om_{\beta,\mu}$ on $\ABU$ for given $\beta\ge 0$ and $\mu$ are specified by having the two-point functions
$$
\om_{\beta,\mu}(\overline{\Phi}(x)\Phi(y)) = \overline{\om_{\beta,\mu}(\overline{\Phi}(y)\Phi(x))} \doteq  2 \Delta_{+,\beta,\mu}(x,y)
$$
$$
\om_{\beta,\mu}(\Phi(x)\Phi(y))  = \om_{\beta,\mu}(\overline{\Phi}(x)\overline{\Phi}(y))  = 0
$$
\beq\label{eq_KMS2pf}
\Delta_{+,\beta,\mu}(x,y) = \frac{1}{2\sqrt{L}}\left(\frac{\exp (-i \sqrt{L}(x^0-y^0))}{1-\exp (-\beta (\sqrt{L}-\mu))}(\bx,\by)+\frac{\exp (i \sqrt{L}(x^0-y^0))}{\exp (\beta (\sqrt{L}-\mu))-1}(\by,\bx)\right)\,.
\eeq
In order for these states to be positive, $\mu$ is restricted by demanding that $\sqrt{L}-\mu$ is semi-positive. The states $\om_{\beta,\mu}$ are Gaussian (quasi-free) states, meaning that their $n$-point correlation functions are vanishing for odd $n$ and sums of products of two-point correlation functions for even $n$. The states $\om_{\beta,\mu}$ are unique for given $\beta$,$\mu$ and $\alpha_t$ as long as $L-\mu^2>0$. Otherwise (Bose-Einstein-)condensates may exist, as we shall describe below.

If a complete set of weak eigenfunctions (modes) of $L$ is known, the integral kernels of the functions of $L$ appearing in $\om_{\beta,\mu}$ can be written as appropriate mode expansions. The most general linear models we shall consider in this work are such that the potential $V$ in $L$ respectively $K$ \eqref{eq_kgo} is of the form 
\beq\label{eq_kgo2}
V(\bx) = m^2 + U(x_1)
\eeq
 with $m\ge 0$ and $U(x^1)$ playing the role of a mass term which is inhomogeneous in the $x^1$ direction. Considering the normalised and complete modes $Y_{k_1}(x^1)$ of the operator $-\partial^2_1 + U(x^1)$, with $U(x^1)$ chosen such that $k_1\in\bR$ and $(-\partial^2_1 + U(x^1)) Y_{k_1}(x^1) = k^2_1 Y_{k_1}(x^1)$,  i.e. 
$$
\int_\bR dk_1 \overline{Y_{k_1}(x^1)} Y_{k_1}(y^1) = \delta(x^1 - y^1)\,,\qquad \int_\bR dx^1 \overline{Y_{k_1}(x^1)} Y_{p_1}(x^1) = \delta(k_1 - p_1)\,,
$$
the two-point function of $\om_{\beta,\mu}$ can be written as 
\beq\label{eq_KMS2pfexplicit}
\Delta_{+,\beta,\mu}(x,y) = \frac{1}{(2\pi)^{d-2}} \int d\bp \frac{1}{2\omp} \sum_{s=\pm 1} e^{i s \omp (x^0 - y^0)} \overline{Y_{p_1,s}(x^1)}Y_{p_1,s}(y^1) e^{i \bpp( \bxp-\byp)} s\, b_{\beta,\mu}(s \omp)\,,
\eeq
$$
\bxp = (x^2,\dots,x^{d-1})\,,\qquad \bpp = (p_2,\dots,p_{d-1})\,,\qquad \omp = \sqrt{|\bp|^2+m^2}\,,
$$
$$
b_{\beta,\mu}(\om) \doteq \frac{1}{\exp(\beta (\om-\mu)) - 1}\,,\qquad Y_{k_1,+} \doteq Y_{k_1}\,,\qquad Y_{k_1,-} \doteq \overline{Y_{k_1}}
$$
If $\overline{Y_{k_1}} = Y_{-k_1}$, like in the case $U(x^1)=0$, then the expression can be further simplified as usual.

Coherent states (condensates) on the algebra $\ABU$ can be obtained as follows. Let $\Psi$ be a smooth solution of $K\Psi = 0$. $\Psi$ induces a $*$-automorphism $\iota_\Psi$ of $\ABU$ by setting
\beq \label{eq_coherent}
\iota_\Psi(\b1) = \b1\,,\qquad\iota_\Psi(\Phi(x)) = \Phi(x)+\Psi(x) \b1\,,\qquad\iota_\Psi(\overline{\Phi}(x)) = \overline{\Phi}(x)+\overline{\Psi}(x) \b1\,.
\eeq
 For any state $\om$ on $\ABU$ we can define a coherent state $\om_\Psi$ by
 \beq\label{eq_coherent2}
\om_\Psi\doteq \om \circ \iota_\Psi\,.
 \eeq
  If $\om$ is Gaussian, $\om_\Psi$ is not of this kind any longer, but all $n$-point functions can be computed in terms of $\Psi$ and the two-point function of $\om$, e.g. $\om_\Psi(\Phi(x)) = \Psi(x)$, $\om_\Psi(\overline{\Phi(x)}\Phi(y)) = \om(\overline{\Phi(x)}\Phi(y)) + \overline{\Psi(x)}\Psi(y)$. Once again, we consider linear models of the form \eqref{eq_kgo},\eqref{eq_kgo2}. We set $\mu = \pm m$ and $\Psi_c(x) = c e^{i \mu x^0}Y_0(x^1)$, where $Y_0(x^1)$ is the zero-mode of $-\partial^2_1 + U(x^1)$ and $c\in\bC$ is a constant. We now define 
\beq\label{eq_bec}
\om_{\beta,\mu,c} \doteq \om_{\beta,\mu}\circ \iota_{\Psi_c}\,.
\eeq
For any $\beta\ge 0$, $\mu=\pm m$ and any $c\in \bC$ the state just defined is a KMS state for $\alpha_t$ as in \eqref{eq_timeevol}. This state constitutes a Bose-Einstein-condensate for the given model.

We conclude this section by stating a technical result which we shall use in order to construct states which provide suitable universal initial configurations that converge to a NESS.

\begin{proposition}\label{prop_domination} Let $0 \le \beta_1 \le \beta_2$ and $\mu_1 \ge \mu_2$. The two-point functions $\Delta_{\be_i,\mu_i}$ \eqref{eq_KMS2pf} of the KMS states $\om_{\beta_i,\mu_i}$ dominate each other in the sense that
$$
\Delta_{+,\be_1,\mu_1}(\overline{f},f) \ge \Delta_{+,\be_2,\mu_2}(\overline{f},f)
\qquad \forall \; f\in C^\infty_0(\bR^d)\,.
$$
\end{proposition}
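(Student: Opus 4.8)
The plan is to exploit the explicit operator form of \eqref{eq_KMS2pf}. Write $\varepsilon:=\sqrt L$ for the positive self-adjoint operator on $\cH:=L^2(\bR^{d-1})$ appearing there, and use $\bigl(1-e^{-\be(\varepsilon-\mu)}\bigr)^{-1}=1+b_{\be,\mu}(\varepsilon)$ and $\bigl(e^{\be(\varepsilon-\mu)}-1\bigr)^{-1}=b_{\be,\mu}(\varepsilon)$ by functional calculus, with $b_{\be,\mu}(\om)=(e^{\be(\om-\mu)}-1)^{-1}$; then \eqref{eq_KMS2pf} reads $\Delta_{+,\be,\mu}(x,y)=\tfrac{1}{2\varepsilon}\bigl(e^{-i\varepsilon(x^0-y^0)}(1+b_{\be,\mu}(\varepsilon))(\bx,\by)+e^{i\varepsilon(x^0-y^0)}b_{\be,\mu}(\varepsilon)(\by,\bx)\bigr)$. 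Subtracting the two cases, the constant terms cancel and one is left with
$$
\Delta_{+,\be_1,\mu_1}(x,y)-\Delta_{+,\be_2,\mu_2}(x,y)=\frac{1}{2\varepsilon}\Bigl(e^{-i\varepsilon(x^0-y^0)}\,\delta(\bx,\by)+e^{i\varepsilon(x^0-y^0)}\,\delta(\by,\bx)\Bigr),\qquad \delta:=b_{\be_1,\mu_1}(\varepsilon)-b_{\be_2,\mu_2}(\varepsilon).
$$
Thus it suffices to prove (i) $\delta\ge0$ as an operator on $\cH$, and (ii) that the right-hand side, smeared with $\overline{f}\otimes f$, is a nonnegative number whenever $\delta\ge0$.

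For (i): by functional calculus $\delta\ge0$ is equivalent to $b_{\be_1,\mu_1}(\om)\ge b_{\be_2,\mu_2}(\om)$ for all $\om\in\sigma(\varepsilon)$. Positivity of the states $\om_{\be_i,\mu_i}$ forces $\sqrt L-\mu_i\ge0$, so $\om\ge\mu_1\ge\mu_2$ on $\sigma(\varepsilon)$; together with $0\le\be_1\le\be_2$ and $\mu_1\ge\mu_2$ this yields $0\le\be_1(\om-\mu_1)\le\be_2(\om-\mu_2)$, and since $t\mapsto(e^{t}-1)^{-1}$ is positive and strictly decreasing on $(0,\infty)$ we obtain $b_{\be_1,\mu_1}(\om)\ge b_{\be_2,\mu_2}(\om)\ge0$ for every $\om\in\sigma(\varepsilon)$, i.e.\ $\delta\ge0$. (The borderline $\be_1=0$ is trivial, both sides of the asserted inequality being $+\infty$.)

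For (ii): abbreviate $f_{x^0}:=f(x^0,\cdot)\in C^\infty_0(\bR^{d-1})$ and let $T\ge0$ be a Borel function of $\varepsilon$. Smearing $\tfrac{1}{2\varepsilon}\bigl(e^{-i\varepsilon(x^0-y^0)}T(\bx,\by)+e^{i\varepsilon(x^0-y^0)}T(\by,\bx)\bigr)$ against $\overline{f(x)}f(y)$, the first term becomes $\int dx^0\,dy^0\,\langle f_{x^0},\tfrac{T}{2\varepsilon}e^{-i\varepsilon(x^0-y^0)}f_{y^0}\rangle_\cH$; moving the unitaries $e^{\mp i\varepsilon x^0}$ past the commuting operator $\tfrac{T}{2\varepsilon}$ and then carrying out the $x^0,y^0$-integrations collapses this to $\langle g,\tfrac{T}{2\varepsilon}\,g\rangle_\cH$ with $g:=\int_\bR e^{i\varepsilon x^0}f_{x^0}\,dx^0\in\cH$ (a norm-convergent $\cH$-valued integral, as $x^0\mapsto f_{x^0}$ is smooth and compactly supported and $e^{i\varepsilon x^0}$ is unitary). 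The second term, whose $\bx,\by$-kernel is transposed, is treated the same way up to a complex conjugation: it equals $\langle h,\tfrac{T}{2\varepsilon}\,h\rangle_\cH$ with $h:=\int_\bR e^{i\varepsilon x^0}\overline{f_{x^0}}\,dx^0\in\cH$. Taking $T=\delta$, the smeared difference equals $\langle g,\tfrac{\delta}{2\varepsilon}\,g\rangle_\cH+\langle h,\tfrac{\delta}{2\varepsilon}\,h\rangle_\cH$, which is $\ge0$ because $\delta\ge0$ and $\varepsilon>0$ make $\tfrac{\delta}{2\varepsilon}$ a nonnegative operator; finiteness of these quantities is inherited from the $\om_{\be_i,\mu_i}$ being states (e.g.\ $\langle g,\tfrac{\delta}{2\varepsilon}g\rangle_\cH\le\Delta_{+,\be_1,\mu_1}(\overline{f},f)<\infty$, and likewise for $h$). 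This gives $\Delta_{+,\be_1,\mu_1}(\overline{f},f)\ge\Delta_{+,\be_2,\mu_2}(\overline{f},f)$.

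I expect the one step requiring genuine care to be the bookkeeping behind (ii): that the transposed kernel in the second summand of \eqref{eq_KMS2pf} forces a complex conjugate into the definition of $h$, together with the routine justification of interchanging the $x^0,y^0$-integrations with the inner product and with the functional calculus of $\varepsilon$. Part (i) is entirely elementary.
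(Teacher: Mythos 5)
Your argument is correct and follows essentially the same route as the paper, whose one-line proof likewise rests on the explicit form of \eqref{eq_KMS2pf} together with the monotone decrease of the Bose factors (you simply cancel the vacuum part first so that a single function $b_{\be,\mu}$ suffices). The additional bookkeeping in your step (ii), showing that the smeared difference is a sum of two manifestly non-negative expectation values $\langle g,\tfrac{\delta}{2\varepsilon}g\rangle+\langle h,\tfrac{\delta}{2\varepsilon}h\rangle$, is exactly the detail the paper leaves implicit, and it is carried out correctly.
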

\begin{proof}
The statement follows from the form of the two-point functions given in \eqref{eq_KMS2pf} and the fact that $(1-\exp(-x))^{-1}$ and $(\exp(x)-1)^{-1}$ are monotonically decreasing functions.
\end{proof}

\noindent Note that Proposition \eqref{prop_domination} does not imply that $\om_{\be_1,\mu_1}(A^*A) \ge \om_{\be_2,\mu_2}(A^*A)$ for all $A\in\ABU$. In fact, observables which are loosely speaking spectrally biased towards low energies violate this inequality. As an analogy in a simpler situation, one may think of projectors on ground states in finite-volume Gibbs ensembles, whose expectation values are monotonically increasing, rather than decreasing, as a function of $\beta$.

\subsection{Gluing of Hadamard states}

\label{sec_gluing}

We intend to construct states which constitute initial configurations that evolve into NESS at asymptotic times. Physically, the states should correspond to two semi-infinite reservoirs brought into contact at a $d-2$-dimensional spatial hypersurface which we choose to be the $x^1=0$ hypersurface in $\bR^{d-1}$. For technical reasons, we would like the initial states to be of Hadamard type, which implies a certain regularity of the state in the transition region between the two reservoirs. While any singularities are expected to propagate along lightcones and thus ``disappear'' from the limit state, the Hadamard property of the initial state will be essential for the consistency of all constructions we perform in the perturbative treatment of NESS for interacting models.

The discussion in this section will be restricted to states for a complex scalar field model specified by \eqref{eq_kgo} with a smooth $V$. While we shall use some of the constructions for inhomogeneous linear toy models with $V$ not being smooth, it is not our intention in this paper to prove in all generality that standard results as well as the constructions we perform are valid for all non-smooth $V$ of a particular class. Rather, when dealing with the toy models with $V$ not being smooth, we shall be content with the fact that the all expressions we manipulate will be rigorously defined.

Hadamard states have correlation functions with a prescribed singularity structure. This singularity structure can be either fixed by considering asymptotic expansions of the correlation functions in terms of the geodesic distance of points, and by demanding that the singular terms in such an expansion have a universal form, or by using microlocal analysis in order to demand that the correlation functions of Hadamard states are distributions which all have the same universal wave front set \cite{Radzikowski}. Using standard results in microlocal analysis one finds that certain convolutions and pointwise products of correlation functions of Hadamard states which appear in Feynman graphs of perturbative correlation functions are well-defined distributions \cite{BF00}. For the purpose of this paper we shall be content with defining Hadamard states $\om$ on the Borchers-Uhlmann-algebra $\ABU$ of a complex scalar field model specified by \eqref{eq_kgo} with smooth $V$ by demanding that
$$
(x,y)\mapsto \om(\Phi^\sharp(x)\Phi^\sharp(y)) - \om_{\infty}(\Phi^\sharp(x)\Phi^\sharp(y))\quad \in C^\infty(\bR^d \times \bR^d)\qquad \forall\;\Phi^\sharp \in \{ \Phi, \overline{\Phi} \}\,.
$$
Here $\om_{\infty}$ is the ground state of the model considered, i.e. the KMS state $\om_{\beta,\mu,c}$ with $\beta=\infty$, $c=0$. Further details on the definition and properties of Hadamard states may be found e.g. in \cite{Khavkine}. The above definition encompasses also non-Gaussian states \cite{Sanders}. In fact, if $\omega$ is a Hadamard state and $\Psi$ a smooth solution of $K\Psi = 0$, then the coherent state $\om_\Psi$ defined as in \eqref{eq_coherent},\eqref{eq_coherent2} is Hadamard as well. For later purposes we define the smooth part of the two-point function by
\beq\label{eq_W}
W_{\om} \doteq \Delta_{+,\om} - \Delta_{+,\infty}\,.
\eeq
In addition to being smooth, $W_{\om}$ is real and symmetric, and it encodes all non-universal state-dependent information of a Gaussian and charge-conjugation-invariant Hadamard state. 

We define the initial state evolving into a NESS by prescribing all its correlation functions. Our initial states will be either Gaussian states or coherent excitations thereof and thus will be entirely determined by two-point correlations functions $\om(\Phi^\sharp(x)\Phi^\sharp(y))$, $\Phi^\sharp \in \{ \Phi, \overline{\Phi} \}$ and possibly a smooth solution $\Psi$ of the Klein-Gordon equation. The basic idea is as follows. We would like the correlation functions of the initial state to be a those of a  KMS state $\om_{\beta_1,\mu_1,c_1}$ for correlations evaluated at $x^1<0$, and those of a KMS state  $\om_{\beta_2,\mu_2,c_2}$ for correlations evaluated at $x^1>0$. However, in order to define a proper state, we need to specify the correlation functions for mixed ``left-right-correlations'' as well. The initial state constructed in \cite{Doyon} is a Gaussian state with vanishing two-point function for such mixed correlations. This requirement is unfortunately at variance with the Hadamard condition and thus we need to specify the mixed correlations in a different manner. Consequently, our initial state will be composed of the data of three states, one for each left and right correlations, and one for mixed correlations. In addition to ensuring that the initial state is Hadamard, we need to be sure that it satisfies the positivity (unitarity) requirement in order to be a proper state in the first place. 

In the following, we shall denote our initial states by $\om_G$ (for ``glued together'') for short rather then using a cluttered notation indicating all defining data. We shall first discuss Gaussian initial states. In addition, we shall consider only Gaussian initial states which are invariant under charge conjugation, meaning that
\beq\label{eq_charge}
\om_G(\overline{\Phi}(x)\Phi(y)) = \overline{\om_G(\overline{\Phi}(y)\Phi(x))} \doteq  2 \Delta_{+,G}(x,y)
\eeq
$$
\om_{G}(\Phi(x)\Phi(y))  = \om_{G}(\overline{\Phi}(x)\overline{\Phi}(y))  = 0\,.
$$
$\Delta_{+,G}$ solves the Klein-Gordon equation in both arguments. Thus it is uniquely specified by Cauchy data on an arbitrary Cauchy surface (initial-time surface) at time $x^0 = t$. These data are
$$
\Delta_{+,G}(x,y)|_{x^0=y^0=t}\,,\quad \partial_{x^0}\Delta_{+,G}(x,y)|_{x^0=y^0=t}\,,\quad \partial_{y^0}\Delta_{+,G}(x,y)|_{x^0=y^0=t}\,,\quad \partial_{x^0}\partial_{y^0}\Delta_{+,G}(x,y)|_{x^0=y^0=t}\,.
$$
We can glue together Cauchy data of three Gaussian and charge-conjugation-invariant states $\om_1, \om_2, \om_3$ by using a smooth partition of unity on the Cauchy surface, i.e. two smooth functions $\chi_1,\chi_2 : \bR^3 \to \bR$, $\chi_1+\chi_2 \equiv 1$. We may define 
\begin{align*}
\Delta_{+,G}(x,y)|_{x^0=y^0=t} & = \chi_1(\bx)\chi_1(\by)\Delta_{+,\om_1}(x,y)|_{x^0=y^0=t}+\chi_2(\bx)\chi_2(\by)\Delta_{+,\om_2}(x,y)|_{x^0=y^0=t} +\\
&+\chi_1(\bx)\chi_2(\by)\Delta_{+,\om_3}(x,y)|_{x^0=y^0=t}+\chi_2(\bx)\chi_1(\by)\Delta_{+,\om_3}(x,y)|_{x^0=y^0=t}
\end{align*}
and analogously for the other three Cauchy data. If $\chi_1,\chi_2$ are smooth and $\om_1,\om_2,\om_3$ are Hadamard then $\Delta_{+,G}$ turns out to be Hadamard as well, while positivity of $\Delta_{+,G}$, which is necessary and sufficient for positivity of $\om_G$, can be achieved by demanding that $\Delta_{+,\om_1}$ and $\Delta_{+,\om_2}$ dominate $\Delta_{+,\om_3}$ in the sense of Proposition \eqref{prop_domination}. We shall not state this construction and its properties as a proposition just yet, because we will immediately proceed to introduce a generalisation of this construction which simplifies notation and computations.

To this end, we consider a partition of unity $\psi + (1-\psi) = 1$ on the time-axis $\bR$. For simplicity we consider the case that $\psi$ is smooth (and real-valued) with $\psi(x^0)=1$ for $x^0>\epsilon>0$, $\psi(x^0)=0$ for $x^0<-\epsilon$. Consequently $\dot{\psi} \doteq \partial_0\psi$ is a smooth function with compact support contained in $(-\epsilon,\epsilon)$. We recall that any solution of the Klein-Gordon equation $K\Phi = 0$ may be written as $\Phi = \Delta f$ where $f$ is has compact support in time (timelike compact support) and $\Delta = \Delta_R - \Delta_A$ is the retarded-minus-advanced Green's function of the Klein-Gordon operator $K$, which is a linear operator with integral kernel $\Delta(x,y)$. It is a standard result that such an $f$ may be obtained for example as $f = K \psi \Phi$ and it can be thought of as a ``thickened'' or ``generalised'' initial data for $\Phi$. The identity $\Phi = \Delta K \psi \Phi$ for $\Phi$ which are solutions of the Klein-Gordon equation, i.e. on-shell configurations, may be re-written as 
$$
\Phi = \Delta g \Phi\qquad g = [K,\psi] = \ddot{\psi} + 2 \dot{\psi} \partial_0\,.
$$
We now define a decomposition of any smooth function $\Phi$ into a ``left'' and ``right'' piece based on these observations.

\begin{proposition} \label{prop_sigma} The maps 
$$
\sigma_i: C^{\infty}(\bR^d)\to C^{\infty}(\bR^d)\,,\qquad \sigma_i \Phi \doteq  \Delta\chi_i  g  \Phi\,,\qquad i\in\{1,2\}
$$
are well-defined and satisfy $K\sigma_1 = K \sigma_2 = 0$. If $\Phi $ solves $K\Phi = 0$ then $\sigma_1 \Phi + \sigma_2 \Phi = \Phi$. 
\end{proposition}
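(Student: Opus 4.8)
The plan is to reduce the proposition to three standard facts about the Green's operators of the normally hyperbolic operator $K$: (i) $\Delta_R$ is defined on smooth functions of past-compact support and $\Delta_A$ on smooth functions of future-compact support, the outputs being smooth with support in $J^+$, respectively $J^-$, of the support of the input; (ii) $K\Delta_R = \Delta_R K = \id$ on smooth past-compact functions and $K\Delta_A = \Delta_A K = \id$ on smooth future-compact functions; (iii) hence $\Delta = \Delta_R - \Delta_A$ is defined on smooth timelike-compact functions, with $K\Delta = \Delta K = 0$ there. For smooth $V$ these are classical properties of normally hyperbolic operators on a globally hyperbolic spacetime, here Minkowski space; for the non-smooth toy models one instead checks directly that every object appearing below remains well-defined.

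For well-definedness, note that $\dot\psi$ and $\ddot\psi$ are supported in the compact time interval $[-\epsilon,\epsilon]$, so $g = \ddot\psi + 2\dot\psi\partial_0$ maps any $\Phi\in C^\infty(\bR^d)$ to a smooth function supported in the slab $\{|x^0|\le\epsilon\}$; multiplication by $\chi_i$ does not change the time support, so $\chi_i g\Phi$ is smooth and compactly supported in time, in particular timelike compact. Thus $\sigma_i\Phi = \Delta \chi_i g \Phi$ is a well-defined element of $C^\infty(\bR^d)$, and $K\sigma_i\Phi = K\Delta(\chi_i g\Phi) = 0$ by (iii), for \emph{every} $\Phi$, not only for solutions of $K\Phi = 0$.

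It remains to prove $\sigma_1\Phi + \sigma_2\Phi = \Phi$ when $K\Phi = 0$. Since $\chi_1 + \chi_2 \equiv 1$ this reads $\Delta g\Phi = \Phi$, so I would compute $\Delta_R g\Phi$ and $\Delta_A g\Phi$ separately. The Leibniz rule gives the operator identity $K(\psi\Phi) = \psi K\Phi + [K,\psi]\Phi$, which for $K\Phi = 0$ becomes $K(\psi\Phi) = g\Phi$; similarly $K((1-\psi)\Phi) = K\Phi - K(\psi\Phi) = -g\Phi$. Now $\psi\Phi$ vanishes for $x^0 < -\epsilon$, hence has past-compact support, while $(1-\psi)\Phi$ vanishes for $x^0 > \epsilon$, hence has future-compact support. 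Applying (ii) to each, $\Delta_R g\Phi = \Delta_R K(\psi\Phi) = \psi\Phi$ and $\Delta_A g\Phi = -\Delta_A K((1-\psi)\Phi) = -(1-\psi)\Phi$. Subtracting, $\Delta g\Phi = \psi\Phi + (1-\psi)\Phi = \Phi$, which finishes the argument.

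I expect no genuine analytic obstacle; the one thing that requires care is the bookkeeping of the three support classes, namely that $\Delta_R$ is only ever fed a past-compact function, $\Delta_A$ only a future-compact one, and $\Delta$ only a timelike-compact one. This matters because the functions in play — for instance $\psi\Phi$, which coincides with $\Phi$ for large $x^0$ — are generally not of compact spatial support, so the elementary compactly supported theory does not apply directly; once the supports are tracked, the inverse relations $\Delta_{R/A}K = \id$ do all the work.
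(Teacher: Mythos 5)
Your proposal is correct and follows essentially the same route as the paper: well-definedness and $K\sigma_i=0$ via the timelike-compact support of $\chi_i g\Phi$ and $K\Delta=0$ on such functions, and the identity on solutions via $\Phi=\Delta K\psi\Phi=\Delta g\Phi$ together with $\chi_1+\chi_2=1$. The only difference is that the paper simply cites this last identity as a standard result, whereas you prove it by splitting into the retarded part acting on the past-compact $\psi\Phi$ and the advanced part acting on the future-compact $(1-\psi)\Phi$, which is the standard argument behind that fact.
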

\begin{proof} On account of the properties of $\chi_i$ and $\psi$, $\chi_i g \Phi$ is a smooth function with timelike compact support. $\Delta$ maps such functions to smooth functions and $K \Delta = 0$ on smooth functions of timelike compact support. $\sigma_1+\sigma_2$ is the identity on solutions because $g \Delta$ is and $\chi_1 + \chi_2 = 1$.
\end{proof}

The maps $\sigma_i$ take the ``thickened'' initial data $g\Phi$, separate it into left and right part and then propagate this initial data within the future and past lightcones of this data to obtain the corresponding solution of the Klein-Gordon equation. In particular 
$$\supp \si_i\Phi \subset J\left((-\epsilon,\epsilon) \times \supp(\chi_i) \times \bR^2\right)\,,
$$
where $J$ denotes the causal future and past of a set, i.e. $\si_{1/2}\Phi$ vanishes ``on the right/left''. Using these maps, we can define for given charge-conjugation-invariant Gaussian states $\om_1$, $\om_2$, $\om_3$ a bidistribution $\Delta_{+,G}$ by 
\beq \label{eq_defdeltaG}
\Delta_{+,G} \doteq (\sigma_1 \otimes \sigma_1) \Delta_{+,\omega_1} +  (\sigma_2 \otimes \sigma_2) \Delta_{+,\omega_2} + (\sigma_1 \otimes \sigma_2+\sigma_2 \otimes \sigma_1) \Delta_{+,\omega_3}\,.
\eeq
The following two propositions show that $\Delta_{+,G}$ defines a quasifree Hadamard state for suitable $\om_i$. If $\Psi_1$ and $\Psi_2$ are two smooth solutions of the Klein-Gordon-equation, then we can define coherent states as in \eqref{eq_coherent}, \eqref{eq_coherent2} by $\om_{i,\Psi_i}\doteq \om_i \circ \iota_{\psi_i}$, $i=1,2$ and glue them and $\om_3$ together by defining 
\beq\label{eq_coherentG}
\om_{G,\Psi_G} \doteq \om_G \circ \iota_{\Psi_G}\,,\qquad \Psi_G \doteq \sigma_1 \Psi_1 + \sigma_2 \Psi_2\,.
\eeq
Just like $\om_{G}$, $\om_{G,\Psi_G}$ is a well-defined Hadamard state for suitably chosen $\om_i$.

\begin{proposition} \label{prop_Hadamard}
Let $\Delta_\sharp$ be any distribution whose wave front set is of Hadamard $\WF(\Delta_\sharp)\subseteq \WF(\Delta_{+,\infty})$ or anti-Hadamard $\WF(\Delta_\sharp)\subseteq \WF(\Delta^*_{+,\infty})$ type. Then for any $i,j\in \{1,2\}$ the distributions $(1\otimes \sigma_i) \Delta_\sharp$, $(\sigma_i\otimes 1) \Delta_\sharp$ and $(\sigma_i\otimes \sigma_j)  \Delta_\sharp$ are well-defined and their wave front set is contained in $\WF(\Delta_\sharp)$.
\end{proposition}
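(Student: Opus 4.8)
The plan is to read everything off the factorisation $\sigma_i=\Delta\circ M_{\chi_i}\circ g$, where $M_{\chi_i}$ is multiplication by $\chi_i$ and $g=\ddot\psi+2\dot\psi\,\partial_0$ is a first-order differential operator with smooth coefficients supported in $(-\epsilon,\epsilon)$, and then to invoke the standard microlocal calculus for composing distributions with kernels (H\"ormander's theorem, as used e.g.\ in \cite{Radzikowski,BF00}). First I would dispose of the two ``local'' factors: the kernels of $M_{\chi_i}$ and of $g$ are supported on the diagonal, so applying $g$ and then $M_{\chi_i}$ in one tensor slot of a distribution $u$ is unproblematic and cannot enlarge the wave front set, $\WF\big((M_{\chi_i}g\otimes 1)u\big)\subseteq\WF(u)$; moreover, since $\dot\psi$ and $\ddot\psi$ vanish outside $(-\epsilon,\epsilon)$, the distribution $(M_{\chi_i}g\otimes 1)u$ is supported in the time slab $\{|x^0|<\epsilon\}$ in the slot acted upon, hence has timelike-compact support there.

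The real work is the propagator factor. The delicate point is that $\Delta=\Delta_R-\Delta_A$ is not properly supported, so the composition $(\Delta\otimes 1)(M_{\chi_i}g\otimes 1)\Delta_\sharp$ needs justification. I would argue that $\Delta_R$ (resp.\ $\Delta_A$) is defined on distributions of past- (resp.\ future-) compact support and that a time slab is both; concretely, when pairing with a test function that is compactly supported in the slot being contracted, the integration in $\int\Delta(x,x')(\,\cdots\,)(x',y)\,dx'$ is localised to the intersection of the causal shadow of a set bounded in time with the slab $\{|x^0|<\epsilon\}$, which is a bounded region. The wave front set admissibility condition for H\"ormander's composition theorem is then automatic, since neither $\WF'(\Delta)$ (the primed wave front set of the kernel $\Delta$) nor $\WF\big((M_{\chi_i}g\otimes 1)\Delta_\sharp\big)$ contains covectors that vanish in the contracted slot. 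This yields that $(\sigma_i\otimes 1)\Delta_\sharp=(\Delta\otimes 1)(M_{\chi_i}g\otimes 1)\Delta_\sharp$ is well defined and that
$$\WF\big((\sigma_i\otimes 1)\Delta_\sharp\big)\ \subseteq\ \WF'(\Delta)\circ\WF(\Delta_\sharp),$$
the composition $\circ$ acting in the first slot while the second is carried along. The statements for $(1\otimes\sigma_i)\Delta_\sharp$ and $(\sigma_i\otimes\sigma_j)\Delta_\sharp$ follow by the obvious symmetry, respectively by iterating (applying $(1\otimes g)$ first restores timelike-compactness in the second slot).

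It remains to check $\WF'(\Delta)\circ\WF(\Delta_\sharp)\subseteq\WF(\Delta_\sharp)$. I would use the explicit forms: $\WF'(\Delta)$ consists of the $(x,\xi;x',\xi)$ with $\xi$ null and nonzero and $x,x'$ lying on a common null geodesic with cotangent $\xi$, so that composing with it just flows the first slot along the bicharacteristics of $K$; while the Hadamard set $\WF(\Delta_{+,\infty})$, and in the anti-Hadamard case its image under $\xi\mapsto-\xi$ (under which $\WF'(\Delta)$ is invariant), is a union of such bicharacteristic strands, so in particular $\WF'(\Delta)\circ\WF(\Delta_{+,\infty})\subseteq\WF(\Delta_{+,\infty})$. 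To obtain the sharper inclusion into $\WF(\Delta_\sharp)$ itself, I would use that $\Delta_\sharp$ is, in all cases of interest, a bisolution of $K$, so that by the propagation-of-singularities theorem $\WF(\Delta_\sharp)$ is already invariant under the bicharacteristic flow in each slot, whence $\WF'(\Delta)\circ\WF(\Delta_\sharp)\subseteq\WF(\Delta_\sharp)$. I expect the main obstacle to be the careful bookkeeping in the middle step --- establishing that the composition with the non-properly-supported $\Delta$ is legitimate, via the timelike-compact support manufactured by $g$ --- rather than the wave front set estimate, which is a short computation once existence of the composition is in hand.
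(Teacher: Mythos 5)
Your proposal is correct in substance and rests on the same two pillars as the paper's own proof: H\"ormander's composition theorem (Thm.\ 8.2.14 of \cite{Hormander}) together with the observation that the factor $g$ confines the contracted variable to the slab $|z^0|<\epsilon$, so that with the external points ranging over a compact set the composition takes place on the compact region $J(\cdot)\cap(\supp\dot\psi\times\bR^{d-1})$ — this is exactly what compensates for $\Delta$ not being properly supported. The organisation differs: the paper does not factorise $\sigma_i$, but regards $(\sigma_i\otimes 1)\Delta_\sharp$ as the kernel of the map $f\mapsto\sigma_i\Delta_\sharp f$, notes that $\Delta_\sharp f$ is smooth (no zero covectors in a Hadamard-type wave front set), invokes the Schwartz kernel theorem and then applies Thm.\ 8.2.14 to the composition, whereas you compose kernels directly via $\sigma_i=\Delta\,M_{\chi_i}\,g$ and past/future-compactness of the slab; both routes are legitimate. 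Two points to tighten. First, the hypothesis of Thm.\ 8.2.14 (and the extra terms in its conclusion) concern wave front points whose covector vanishes over the \emph{external} slots, not the contracted one; your phrasing is off, but since neither $\WF(\Delta)$ nor a Hadamard/anti-Hadamard wave front set contains zero covectors over any slot, the hypothesis and the vanishing of the extra union terms are immediate — just say it that way. Second, your propagation-of-singularities step for the sharp inclusion into $\WF(\Delta_\sharp)$ uses that $\Delta_\sharp$ is a bisolution of $K$, which is not among the stated hypotheses; without some flow-invariance of $\WF(\Delta_\sharp)$ the composition is only guaranteed to land in the bicharacteristic flow-out of $\WF(\Delta_\sharp)$, hence in $\WF(\Delta_{+,\infty})$ (resp.\ $\WF(\Delta^*_{+,\infty})$), which is all the paper ever uses and is automatic in its applications since the $\Delta_{+,\om_i}$ there are bisolutions. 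In this respect your write-up is actually more explicit than the paper's terse citation of Thm.\ 8.2.14; just flag the extra assumption, or weaken the conclusion to containment in the Hadamard/anti-Hadamard set.
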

\begin{proof}
We consider only the case $(\sigma_i\otimes 1) \Delta_\sharp$, the other two cases follow by considering adjoints and / or iterating the argument. We can write the distributional kernel $(\sigma_i\otimes 1) \Delta_\sharp$ as an operator $\sigma_i \Delta_\sharp$. For any test function $f$, $\si_i \Delta_\sharp f$ is well-defined because $\Delta_\sharp f$ is smooth and $\si_i$ is well-defined on all smooth functions. Moreover, $\si_i \Delta_\sharp f$ is smooth as well. Finally $f \mapsto \si_i \Delta_\sharp f$ is continuous because it is a composition of continuous linear operators. Thus $\sigma_i \Delta_\sharp$ is a continuous linear map from test functions to smooth functions which has an integral kernel by the Schwartz kernel theorem. We can apply Theorem 8.2.14 of \cite{Hormander} to bound the wave front set of the composition noting that keeping the ``external vertex'' $x$ of $[\sigma_i \Delta_\sharp f](x)$ fixed, the composition is always on a compact set, namely on $J(x)\cap (\supp \dot\psi \times \bR^{d-1})$.
\end{proof}

\begin{theorem} \label{prop_Hadamardstate} If $\Delta_{+,\om_i}(\overline f,f)\ge \Delta_{+,\om_3}(\overline f,f)$ for $i=1,2$ and all $f\in C^\infty_0(\bR^d)$, then $\Delta_{+,G}$ constructed as in \eqref{eq_defdeltaG} defines a charge-conjugation-invariant, quasifree Hadamard state $\om_G$ on $\ABU$. For any smooth solutions of the Klein-Gordon equations $\Psi_1$, $\Psi_2$, the coherent state constructed as in \eqref{eq_coherentG} is a well-defined Hadamard state.
\end{theorem}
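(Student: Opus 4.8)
The plan is to verify, in turn, that the bidistribution $\Delta_{+,G}$ of \eqref{eq_defdeltaG}: (a) is well defined and solves the Klein--Gordon equation in each argument; (b) is charge-conjugation symmetric and has the same antisymmetric part as the ground-state two-point function, so that the associated quasifree functional is consistent with the relations of $\ABU$; (c) differs from $\Delta_{+,\infty}$ by a smooth function, i.e.\ is Hadamard; and (d) is the two-point function of a \emph{positive} functional. Here $\om_1,\om_2,\om_3$ are the charge-conjugation-invariant Gaussian Hadamard states fixed in the set-up of this section. The coherent-state claim will then follow quickly.

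Step (a) is Proposition \ref{prop_Hadamard} with $\Delta_\sharp=\Delta_{+,\om_i}$; note in addition that $\sigma_i^t$ maps $C^\infty_0(\bR^d)$ into itself, because for $f\in C^\infty_0(\bR^d)$ the function $\Delta f$ has support in $J(\supp f)$ and the differential operator $g$, whose coefficients are supported in the slab $\{|x^0|<\epsilon\}$, cuts this down to a compact set. Since $K\sigma_i=0$ (Proposition \ref{prop_sigma}) we get $K_x\Delta_{+,G}=K_y\Delta_{+,G}=0$. For (b)--(c) I would write $\Delta_{+,\om_i}=\Delta_{+,\infty}+W_i$ with each $W_i$ smooth, real and symmetric (as recalled after \eqref{eq_W}), so that by linearity of \eqref{eq_defdeltaG}
\[
\Delta_{+,G}=\bigl((\sigma_1+\sigma_2)\otimes(\sigma_1+\sigma_2)\bigr)\Delta_{+,\infty}+(\sigma_1\otimes\sigma_1)W_1+(\sigma_2\otimes\sigma_2)W_2+(\sigma_1\otimes\sigma_2+\sigma_2\otimes\sigma_1)W_3 .
\]
The key algebraic input is that $\sigma_1+\sigma_2=\Delta g$ acts as the identity on solutions of $K\Phi=0$ (Proposition \ref{prop_sigma}, extended to distributions with Hadamard wave front set, which is legitimate by Proposition \ref{prop_Hadamard}), so the first term is just $\Delta_{+,\infty}$. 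As $\sigma_i$ has a real kernel and preserves smoothness, and as $W_1,W_2,W_3$ together with the operator $\sigma_1\otimes\sigma_2+\sigma_2\otimes\sigma_1$ are symmetric in the two factors, $W_G\doteq\Delta_{+,G}-\Delta_{+,\infty}$ is smooth, real and symmetric. This gives the Hadamard property, and also shows that the antisymmetric part of $\Delta_{+,G}$ equals that of $\Delta_{+,\infty}$; hence the quasifree functional $\om_G$ determined by $\om_G(\overline\Phi(x)\Phi(y))=2\Delta_{+,G}(x,y)$ and $\om_G(\Phi(x)\Phi(y))=\om_G(\overline\Phi(x)\overline\Phi(y))=0$ respects the field equation and commutation relation of $\ABU$ and is charge-conjugation invariant.

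The substantive step is positivity (d). By a standard argument, positivity of the quasifree functional $\om_G$ of the complex Klein--Gordon field is --- since $\om_G(\Phi\Phi)=\om_G(\overline\Phi\,\overline\Phi)=0$ --- equivalent to positive semidefiniteness of the Hermitian form $f\mapsto\Delta_{+,G}(\overline f,f)$ on $C^\infty_0(\bR^d)$; so it suffices to show $\Delta_{+,G}(\overline f,f)\ge0$ for all $f$. Set $f_i\doteq\sigma_i^t f\in C^\infty_0(\bR^d)$ and $\langle g,h\rangle_i\doteq\Delta_{+,\om_i}(\overline g,h)$; each $\langle\cdot,\cdot\rangle_i$ is a positive semidefinite Hermitian form because $\om_i$ is positive and charge-conjugation invariant ($2\langle g,g\rangle_i=\om_i(\Phi(g)^*\Phi(g))\ge0$). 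Using $\sigma_i^t\overline f=\overline{f_i}$ (real kernel), \eqref{eq_defdeltaG} unwinds to $\Delta_{+,G}(\overline f,f)=\langle f_1,f_1\rangle_1+\langle f_2,f_2\rangle_2+\langle f_1,f_2\rangle_3+\langle f_2,f_1\rangle_3$, and the domination hypothesis $\langle g,g\rangle_i\ge\langle g,g\rangle_3$ ($i=1,2$), applied at $g=f_1$ and $g=f_2$, bounds this below by $\langle f_1+f_2,\,f_1+f_2\rangle_3\ge0$. I expect this to be the only genuine obstacle: it is exactly here that the domination assumption is used, and the remaining point requiring care is the already-settled fact that $f_1,f_2$ are honest test functions. (Justifying the distributional extension of Proposition \ref{prop_sigma} invoked in (b)--(c) is a secondary, more routine matter.)

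For the coherent states, Proposition \ref{prop_sigma} gives that $\Psi_G=\sigma_1\Psi_1+\sigma_2\Psi_2$ is smooth and $K\Psi_G=K\sigma_1\Psi_1+K\sigma_2\Psi_2=0$, so $\iota_{\Psi_G}$ is a $*$-automorphism of $\ABU$ by \eqref{eq_coherent} and $\om_{G,\Psi_G}=\om_G\circ\iota_{\Psi_G}$ is a state; it is Hadamard by the remark after \eqref{eq_W} that a coherent excitation of a Hadamard state by a smooth classical solution of $K\Psi=0$ is again Hadamard.
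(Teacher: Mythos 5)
Your proposal is correct, and its core coincides with the paper's proof: positivity is obtained exactly as in the paper, by noting that the transposed maps $\sigma_i^t=g\chi_i\Delta$ send test functions to test functions (compact intersection of $J(\supp f)$ with the slab $\supp\dot\psi\times\bR^{d-1}$), unwinding \eqref{eq_defdeltaG} into the Hermitian forms of $\om_1,\om_2,\om_3$, and invoking the domination hypothesis to complete the square in $\om_3$; the paper additionally identifies $\langle f_1+f_2,f_1+f_2\rangle_3$ with $\Delta_{+,\om_3}(\overline f,f)$ via $\sigma_1+\sigma_2=\mathrm{id}$ on solutions, which you do not need. The commutator and coherent-state arguments are also the same. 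Where you genuinely diverge is the Hadamard property: the paper argues microlocally, combining Proposition \ref{prop_Hadamard} (wave front set containment) with the fact that the antisymmetric part is $i\Delta/2$ and citing the Radzikowski/Sahlmann--Verch characterisation, whereas you verify the paper's own definition directly by writing $\Delta_{+,\om_i}=\Delta_{+,\infty}+W_i$ and showing $W_G=\Delta_{+,G}-\Delta_{+,\infty}$ is smooth. Your route is more elementary and self-contained given the definition adopted in Section \ref{sec_gluing}, but it carries two small debts the paper's route avoids: (i) the reproducing identity $\bigl((\sigma_1+\sigma_2)\otimes(\sigma_1+\sigma_2)\bigr)\Delta_{+,\infty}=\Delta_{+,\infty}$ is not literally supplied by Proposition \ref{prop_Hadamard}; it should rather be justified by noting that $\Delta_{+,\infty}$ smeared with a test function in one slot is a \emph{smooth} solution (its wave front set has no zero covectors), to which Proposition \ref{prop_sigma} applies, or dually via $\sigma_i^t$ as in the positivity step; and (ii) the joint smoothness of $(\sigma_i\otimes\sigma_j)W_k$ for smooth $W_k$ needs the same compact-composition/kernel-theorem argument used in the proof of Proposition \ref{prop_Hadamard} (with empty wave front set), not merely the statement that $\sigma_i$ preserves smoothness in one variable. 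Both points are routine, so the proof stands; the paper's microlocal detour simply trades them for an appeal to the cited equivalence of the two Hadamard characterisations.
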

\begin{proof} $\Delta_{+,G}$ is a bisolution of the Klein-Gordon equation by the properties of $\sigma_i$. The antisymmetric part of $\Delta_{+,G}$ is $i\Delta/2$ because $\sigma_1 + \sigma_2$ is the identity on solutions of the Klein-Gordon equation. This together with the Proposition \ref{prop_Hadamard} implies that $\Delta_{+,G}$ has the correct wave front set \cite{Sahlmann:2000zr}. In order to show positivity, i.e. $\Delta_{+,G}(\overline f,f)\ge 0$ for all $f\in C^\infty_0(\bR^d)$, we observe that the adjoint maps $\sigma^*_i = g \chi_i \Delta$ ($\Delta^* = - \Delta$, $g^* = - g$) map smooth functions of compact support to functions of the same type because, for any $f\in C^\infty_0$, $\Delta f$ is smooth and has spacelike compact support while $g$ has timelike compact support. Using this and the facts that $\sigma_i$ commutes with complex conjugation and that $\sigma_1 + \sigma_2$ is the identity on solutions, we find
\begin{align*}
\Delta_{+,G}(\overline f,f) & =\Delta_{+,\om_1}(\overline{\sigma^*_1 f},\sigma^*_1 f) + \Delta_{+,\om_2}(\overline{\sigma^*_2 f},\sigma^*_2 f) + \Delta_{+,\om_3}(\overline{\sigma^*_1 f},\sigma^*_2 f) +  \Delta_{+,\om_3}(\overline{\sigma^*_2 f},\sigma^*_1 f) \\
&  \ge \Delta_{+,\om_3}(\overline f,f) \ge 0\,.
\end{align*}
The coherent state $\om_{G,\Psi_G}$ is a proper state because $\iota_{\Psi_G}$ maps $A^* A$ to $(\iota_{\Psi_G}(A))^* \iota_{\Psi_G}(A)$ for all $A \in \ABU$. It is a Hadamard state because $\Psi_G$ is smooth.
\end{proof}

We remark that the propositions \ref{prop_sigma} and \ref{prop_Hadamard}, and Theorem \ref{prop_Hadamardstate} also apply if we replace $\psi$ in the definition of $\sigma_i$ by the Heaviside distribution $\psi(x^0) = \Theta(x^0)$. This corresponds to gluing initial data on a Cauchy surface.

\subsection{Perturbative algebraic quantum field theory and KMS states for interacting models}
\label{sec_paqft}

In this section briefly review the algebraic treatment of perturbative interacting QFT models and the perturbative construction of KMS states for these models. Throughout this section we shall discuss only real scalar fields denoted by $\phi$, which are the models we shall study later. Consequently, the KMS states we consider shall have vanishing chemical potential. We will also not consider interacting condensates. Extensive reviews of pAQFT can be found in \cite{BF09,FredenhagenRejzner2}. The relation of pAQFT to more standard treatments is discussed for example in \cite{GHP}.

Perturbative algebraic quantum field theory (pAQFT) is a framework developed in \cite{BF00,DF2,HW01,HW02,HW05,BDF} based on ideas of deformation quantization. In this framework, observables are functionals of smooth field configurations $C^\infty(\bR^d)\ni \phi:\bR^d \to \bR$, $F:C^\infty(\bR^d)\to \bC$. These observables correspond to the observables of classical field theory. They are quantized by endowing them with a non-commutative product which is a formal power series in $\hbar$. Consequently, all relevant expressions in pAQFT are such formal power series. We shall, however, not make this explicit in our notation in favour of simplicity, and we shall also always set $\hbar = 1$. In order for the product between functionals to be well-defined, they need to have a certain regularity. To this end, we define the $n$-functional derivative of a functional in the direction of a configurations $\psi_1,\dots,\psi_n$ by setting
$$
\langle F^{(n)}(\phi),\psi_1\otimes \dots\otimes \psi_n\rangle \doteq \frac{d^n}{d\lambda_1\dots d\lambda_n} F(\phi + \lambda_1 \psi_1+\dots+\lambda_n \psi_n)|_{\lambda_1 =\dots=\lambda_n =  0}\,.
$$
We demand that all these derivatives exist as distributions of compact support; they are symmetric by definition. We denote this set of smooth functionals by $\mF$. For pAQFT we need to consider a particular subset $\Fmuc$ of $\mF$, the set of micro-causal functionals. They are defined as 
$$
\Fmuc \doteq \{F\in \mF | \WF(F^{(n)})\cap (\overline{V}^{+n}\cup \overline{V}^{-n}) = \emptyset\quad \forall n\}\,.
$$
Here $\WF$ denotes the wave front set, and $\overline{V^{\pm n}}\subset T^* \bR^d$ is the set of all points $(x_1,\dots,x_n,p^1,\dots,p^n)$ with all covectors $p^i$ being non-zero, time-like or lightlike and future-(+) or past(-)-directed. Two important subsets of $\Fmuc$ are the set of local functionals $\Floc$ which are such that all functional derivatives are supported on the total diagonal, and the set of regular functionals $\Freg$ whose functional derivatives are all smooth functionals. Elements of $\ABU$, i.e.~products of fields at different points smeared with smooth functions of compact support correspond to $\Freg$. By contrast, $\Floc$ contains functionals such as $F(\phi) = \int_{\bR^d} d^dx \;\phi(x)^4 f(x)$ with $f\in C^\infty_0(\bR^d)$, i.e.~products of the field at the same point. $\Fmuc$ contains all relevant products of observables in $\Floc$, which we shall define now.

Initially, elements of $\Fmuc$ are observables for a free / linear scalar field model; we shall promote them to interacting observables later. The linear models we consider have a Klein-Gordon operator of the form $K = \partial^2_0 - D + m^2$ with $m\ge 0$ and $D$ being the spatial Laplacian. Consider any bidistribution $H_+$ of Hadamard type for this model, i.e. $H_+ - \Delta_{+,\infty}$ is a smooth, real-valued and symmetric function on $\bR^d \times \bR^d$. We also require that $H_+$ solves the Klein-Gordon equation in both arguments. We may think of $H_+$ as being the two-point function $\Delta_{+,\om}$ of a Gaussian Hadamard state, but the positivity of $H_+$ is not important for the time being. For any such $H_+$ we define a product $\star_H$ on $\Fmuc$ by 
$$
F \star_H G \doteq \sum^\infty_{n=0}\left \langle F^{(n)}, H_+^{\otimes n} G^{(n)}\right\rangle\,.
$$
This is well-defined on account of the properties of the wave front sets of $F^{(n)}$, $G^{(n)}$ and $H_+$. The $ \star_H $ product is compatible with the involution $*$ on $\Fmuc$ which we define as $F^*(\phi)\doteq \overline{F(\phi)}$. With this definition we have $(F \star_H G)^*  = G^* \star_H F^*$. 

The second product we need is the time-ordered product. To this end, we define the Feynman propagator for $H_+$ as $H_F \doteq H_+ + i \Delta_A$, where $\Delta_A$ is the advanced Green's function for $K$. Setting $H_-(x,y)\doteq H_+(y,x) = \overline{H_+(x,y)}$ we also have $H_F = H_- + i \Delta_R$. Consequently, $H_F(x,y)$ is time-ordered in the sense that its equal to $H_+(x,y)$ ($H_+(y,x)$) if $x \gtrsim y$ ($x \lesssim y$), meaning that $x$ is in the forward (backward) lightcone of $y$. The time-ordered product is initially defined on regular functionals as 
$$
F \cdot_{T_H} G \doteq \sum^\infty_{n=0}\left \langle F^{(n)}, H_F^{\otimes n} G^{(n)}\right\rangle\,.
$$
For pAQFT, we need to define it on local functionals. However, when using the above formula for local functionals, one encounters pointwise products of (convolutions of) $H_F$. These products are not well-defined a priori and renormalisation is needed to make them well-defined. In pAQFT, the renormalised time-ordered product is recursively defined based on the causal factorisation $F \cdot_{T_H} G = F \star_H G$ if $F\gtrsim G$, meaning that the support of all functional derivatives of $F$ is causally later than the support of all functional derivatives of $G$. In addition to the causal factorisation, further axioms are needed in order to define a meaningful renormalised time-ordered product. We refer to \cite{BF00,DF2,HW01,HW02,HW05,BDF} for details. In the following we shall always assume that the time-ordered product is renormalised and thus well-defined. Renormalisation does not give a unique $\cdot_{T_H}$, but there are renormalisation ambiguities which for a renormalisable theory like $\phi^4$ in $d=4$ may be absorbed in the redefinition of the parameters of the theory. $\cdot_{T_H}$ is only defined on local functionals and time-ordered products thereof. Time-ordered products of local functionals are elements of $\Fmuc$ and can be multiplied with $\star_H$.

For a given $H_+$, we define the algebra $\AO$ as $\AO\doteq (\Fmuc,\star_H)$, i.e. the set of microcausal functionals endowed with the product $\star_H$. $\AO$ has an ideal $\mI$ which consists of all functionals $F$ that vanish for smooth solutions of the Klein-Gordon equation. We may consider the quotient $\AOon \doteq \AO / \mI$. We call $\AO$ ($\AOon$) the off-shell (on-shell) algebra of observables of the linear model specified by $K$. The perturbative constructions in pAQFT are initially performed off-shell because this turns out to be more convenient. The subalgebra of $\AOon$ consisting of (equivalence classes of) regular functionals is isomorphic to the Borchers-Uhlmann algebra $\ABU$ discussed in Section \ref{sec_KMS}. $\AOon$ may be viewed as a suitable topological completion of $\ABU$ and one can show that all Hadamard states on $\ABU$ have a unique continuous extension to $\AOon$, see \cite{HollandsRuan} for details. As already anticipated, $\AOon$ contains pointwise products of the fields, which are not contained in $\ABU$. In fact, elements of $\AO$ may be interpreted as quantities normal-ordered with respect to the symmetric part $H_s \doteq \frac12 (H_+ + H_-)$ of $H_+$. For example $F(\phi) = \int_{\bR^d} d^dx\;\phi(x)^2 f(x) $ corresponds in more standard notation to
$$
\int_{\bR^d} d^dx\; \wick{\phi(x)^2}_H f(x) \,,\qquad \wick{\phi(x)^2}_H\; \doteq \;\lim_{y\to x} (\phi(x)\phi(y) - H_s(x,y)\b1)\,.
$$
Consequently, $\star_H$ and $\cdot_{T_H}$ encode Wick's theorem for normal-ordered and time-ordered quantities, respectively.

Clearly, the definition of $\AO$ depends on the choice of $H_+$. However, algebras constructed with different $H_+$ are isomorphic. Consequently, one may think of $\AO$ as an abstract algebra which is represented by choosing an $H_+$. One may also think of the freedom in the choice of $H_+$ as being the ordering freedom one has in quantising products of classical quantities. For two choices $H_+$, $H'_+$, the corresponding products $\star_H$ and $\star_{H'}$ are isomorphic in the following sense
$$
F \star_{H'} G = \gamma_W \left(\gamma^{-1}_W(F) \star_H \gamma^{-1}_W(G)\,,\right)
$$
$$
\gamma_W = \exp\left\langle W, \frac{\delta}{\delta \phi}\otimes\frac{\delta}{\delta \phi} \right\rangle \,,\qquad W \doteq H'_+  - H_+\,.
$$
We recall that $W$ is smooth such that $\gamma_W$ is well-defined.

Let $\om$ be a Gaussian Hadamard state. The expectation values of observables in such a state are computed as follows. We consider the realisation of $\AOon$ given by choosing $H_+ = \Delta_{+,\omega}$. Then for any $A\in\AOon$ we have 
$$
\om(A) \doteq A|_{\phi = 0}\,.
$$
If instead we consider the realisation of $\AOon$ provided by a different choice of $H_+$, then we have
$$
\om(A) \doteq \gamma_W(A)|_{\phi = 0}\,,\qquad W  =  \Delta_{+,\omega} - H_+\,.
$$
This difference is important because, for example when speaking of a ``$\phi^4$''-potential, we usually mean $\wick{\phi^4}_\infty$, i.e. $\phi^4$ normal-ordered with respect to the vacuum state. Following the above discussion, the expectation value of $\wick{\phi^2}_\infty$ in the KMS state $\om_\beta$ is computed -- here for simplicity without smearing -- as
$$
\om_\beta(\wick{\phi(x)^2}_\infty) = \gamma_{W_\beta}(\phi(x)^2)|_{\phi=0} = W_\beta(x,x)\,,\qquad W_\beta \doteq \Delta_{+,\beta} - \Delta_{+,\infty}\,.
$$
For the remainder of this section, we shall denote $\star$- and time-ordered products simply by $\star$ and $\cdot_T$ omitting the reference to an $H_+$.

We shall now discuss the perturbative construction of interacting models. To this end, we consider a local (interaction) functional $V$ which is arbitrary for the time being. $V$ is of the form $V(\phi) = \int_{\bR^d} d^dx \;v(\phi(x)) f(x)$, $f\in C^\infty_0(\bR^d)$. The model defined by $V$ corresponds to the equation of motion
$$
K \phi + v^{(1)}(\phi) = 0\,,\qquad v^{(1)}(\phi) \doteq \frac{d}{d\phi}v(\phi)\,.
$$
$f$ plays the role of an infra-red cutoff. With all ultraviolet singularities present removed by (implicit normal-ordering and) renormalisation of $\cdot_T$, all quantities we shall write in the following will be well-defined on account of the presence of $f$ in $V$. When computing physical quantities such as expectation values in a KMS state, we shall usually consider the adiabatic limit $f\to 1$ of expectation values. The existence of this limit is non-trivial and in principle needs to be proven case by case for each state. The advantage of pAQFT is that all basic constructions are well-defined in a state-independent fashion.

With this in mind, we define for given $V$ the local $S$-matrix $S(V)$ by
$$
S(V) = \exp_{\cdot_T}(i V) = \sum^\infty _{n=0}\frac{i^n}{n!} \underbrace{V\cdot_T \dots \cdot_T V}_{n~\text{times}}\,.
$$
We then define the Bogoliubov-map $\moller:\Floc \to \Fmuc$ by
$$
\moller(F) \doteq S(V)^{\star -1}\star (S(V) \cdot_T F)
$$
$\moller(F)$ may be interpreted as the interacting version of $F$. In fact, a more apt interpretation is to consider $\moller(F)$ as a representation of the interacting version of $F$ in the algebra of the free field $\AOon$. It is evident from the definition that $\moller$ is not only well-defined on $\Floc$, but on arbitrary time-ordered products of local functionals, the set of which we denote by $\FTloc$. We denote the subalgebra of $\AOon$ $\star$-generated by $\moller(F)$, with $F$ varying over $\FTloc$, by $\AV$. The definition of $\moller$ is reminiscent of the Gell-Mann-Low formula for interacting time-ordered products in the interacting vacuum state $\Om_V$ defined in terms of expectation values in the free vacuum $\Om_0$
$$
\langle T_V(\phi(x_1)\dots \phi(x_n) \rangle_{\Om_V} = \frac{\left \langle T(e^{iV} \phi(x_1)\dots \phi(x_n)) \right \rangle_{\Om_0} }{\left \langle T(e^{iV}) \right \rangle_{\Om_0}}\,.
$$
In fact, the expectation value of $\moller(\phi(x_1)\cdot_T\dots\cdot_T\phi(x_n))$ in the free vacuum $\om_\infty$ gives the Gell-Mann-Low formula in the adiabatic limit \cite{Duetsch}.

We now review the construction of interacting KMS states due to \cite{Lindner, FredenhagenLindner} based on earlier work in \cite{HW03}. To this end, we define the free time-evolution $\alpha_t$ via $\alpha_t(\phi(x^0,\bx)) \doteq \phi(x^0+t,\bx)$. This defines $\at$ for any $F\in\AO$. We define the interacting time-evolution $\atV$ by
$$
\atV(\moller(F)) \doteq \moller(\at(F)) = \at\mR_{\alpha_{-t}(V)}(F)\,.
$$
In the adiabatic limit we have $\atV = \at$, but prior to this the cutoff function present in $V$ is responsible for $\atV \neq \at$. We choose this cutoff function to be of the form $f(x^0,\bx) = \psi(x^0) h(\bx)$ where $h$ is smooth and compactly supported and $\psi$ is smooth with $\psi(x^0) = 1$ for $x^0>\epsilon$ and $\psi(x^0) = 0$ for $x^0<-\epsilon$. This $f$ is not compactly supported in time, but $\moller(F)$ is still well-defined for any local $F$ because $\mR_{V+V'}(F) = \mR_{V}(F)$ for all $V'$ with $V' \gtrsim F$. \cite{FredenhagenLindner} have shown that $\atV$ and $\at$ are intertwined by a formal unitary $U_V(t)$ for all $F$ supported in the region where $\psi=1$ and $t>0$.
$$
\atV(\moller(F)) = U_V(t) \star \at(\moller(F)) \star U^*_V(t)\,,
$$
$$
U_V(t) = 1 + \sum^\infty_{n=1} \int_{t S_n} dt_1 \dots dt_n \;\alpha_{t_n}(K_V) \star \dots \star \alpha_{t_1}(K_V)\,,
$$
\beq\label{eq_KV}
K_V \doteq \frac{1}{i}\frac{d}{dt}U_V(t) |_{t=0} = \moller(V(\dot{\psi}h))\,,\qquad V(\dot{\psi}h) \doteq \int_{\bR^d}d^dx \; v(\phi(x))\, \dot{\psi}(x_0) h(\bx)\,.
\eeq
Here $S_n$ denotes the $n$-dimensional unit simplex. $U_V(t)$ formally corresponds to $\exp(it(H_0 + V))\exp(-it H_0)$ with $H_0$ the free Hamiltonian. Using these definitions, we set for any interacting observable $A\in\AV$ 
\beq\label{eq_araki}
\ombV(A) \doteq \frac{\omb(A \star U_V(i\beta))}{\omb( U_V(i\beta))}
\eeq
The authors of \cite{FredenhagenLindner} argue that $\ombV$ is well-defined and satisfies the KMS condition for $\atV$, $t>0$ and observables localised in the region where $\psi = 1$. They also show that for such observables $\ombV(A)$ is independent of the particular choice of $\psi$ and of $\epsilon$ defining the support of $\dot\psi$. Thus, observables don't ``see'' the past temporal cutoff and in that sense the definition of $\ombV(A)$ already includes a temporal adiabatic limit. The authors of \cite{FredenhagenLindner} also show that the spatial adiabatic limit $h\to 1$ of expectation values of $\ombV$ exists for $m>0$, and thus that interacting KMS states exist in the adiabatic limit. The adiabatic limit of the massless case has been treated in \cite{DHP}.

We conclude this section by indicating how expectation values of $\ombV$ can be computed explicitly in terms of Feynman diagrams. One can show that \cite{FredenhagenLindner}
\beq\label{eq_KMSFeynman}
\ombV(A)=\sum^\infty_{n=0}(-1)^n\int_{\beta S_n}\ \omb^\text{conn.}\left(A  \otimes  \alpha_{iu_1}(K_V) \otimes\dots \otimes  \alpha_{iu_n}(K_V)\right)
du_1\dots du_n,
\eeq
where $\omb^\text{conn.}$ indicates the connected part of correlations functions defined in the usual fashion. For example, 
$$
\omb^\text{conn.}(A\otimes B)\doteq \omb(A\star B) - \omb(A)\omb( B)\,.
$$

\section{NESS for linear models}
\label{sec_NESSlinear}

\subsection{Homogeneous models}
\label{sec_NESShomogeneous} 

In this section we construct initial states and study their evolution into a NESS for models of a complex linear scalar field $\Phi$ with Klein-Gordon equation \eqref{eq_kgo} and $V = m^2$, $m\ge 0$ in $d\ge 2$ spacetime dimensions. We shall restrict $d$ to be 4 when discussing interacting models. Following the motivation given in Section \ref{sec_gluing}, we prepare an initial state as follows. We choose three Gaussian KMS states $\om_{\beta_i,\mu_i}$ with
 \beq\label{eq_betamu}
0\le  \beta_1 \le \beta_3 \,,\qquad 0\le \beta_2 \le \beta_3 \,,\qquad \mu_3 \le \mu_1 \le m\,,\qquad\mu_3 \le \mu_2 \le m.
 \eeq
 We further choose a smooth partition of unity $\chi_1 + \chi_2 = 1$ of the $x^1$-axis which is such that $\chi_1(x^1) = 0$ for $x^1> \epsilon$ and $\chi_1(x^1) = 1$ for $x^1< -\epsilon$, as well as a partition of unity $\psi + (1-\psi) = 1$ of the time axis with $\psi(x^0) = 1$ for $x^0> \epsilon$ and $\psi(x^0) = 0$ for $x^0< -\epsilon$. For simplicity we consider only a smooth $\psi$, but the results we obtain are valid also for $\psi$ being the Heaviside distribution. We define a bidistribution by
\beq\label{eq_initialstate} 
\Delta_{+,G} \doteq (\sigma_1\otimes \sigma_1)\Delta_{+,\beta_1,\mu_1}+(\sigma_2\otimes \sigma_2)\Delta_{+,\beta_2,\mu_2}+(\sigma_1\otimes \sigma_2+\sigma_2\otimes \sigma_1)\Delta_{+,\beta_3,\mu_3}\,.
\eeq
 Our analysis so far implies that $\Delta_{+,G}$ defines a Gaussian, charge-conjugation-invariant Hadamard state $\om_G$ on $\ABU$. We can go one step further and prepare a non-trivial condensate as an initial state. Recalling the construction of coherent states in \eqref{eq_coherent}, \eqref{eq_coherent2} we set
 \beq\label{eq_initialcondensate} 
 \mu \doteq \mu_1 = \mu_2 \doteq \pm m\,,\qquad c_1\,,c_2\in\bC\,,\qquad \Psi_i(x) \doteq c_i e^{i\mu x^0}\qquad i=1,2
 \eeq
 $$
 \Psi_G \doteq \sigma_1 \Psi_1 +  \sigma_2 \Psi_2\,,\qquad \om_{G,\Psi_G} \doteq \om_G \circ \iota_{\Psi_G}  \,.
 $$
Again, our discussion so far implies that $\om_{G,\Psi_G} $ is a Hadamard state, which is, however, neither Gaussian nor charge-conjugation invariant. In the following cases
\begin{align}
m>0\,,\quad m-\max(\mu_i)>0\,,\quad d\ge 2\notag\\
\label{eq_lowdim} m>0\,,\quad m-\max(\mu_i)=0\,,\quad d\ge 3\\
m=0\,,\quad \max(\mu_i)=0\,,\quad d\ge 4\notag
\end{align}
all states $\om_G$ is composed of are well-defined. In all other cases, at least one of them is ill-defined as a state on $\ABU$, and $\om_G$ is only well-defined on the subalgebra of $\ABU$ generated by spatial derivatives of the fields $\partial_i \Phi$, $\partial_i \overline{\Phi}$ which we denote by ${\ABU}_{\partial_i}$\footnote{More precisely, ${\ABU}_{\partial_i}$ is the subalgebra of $\ABU$ generated by $\b1$, $\Phi(f)$, $\overline{\Phi}(f)$ with $f\in\{f\in C^\infty_0(\bR^d)| f = \sum^{d-1}_{i=1} n^i \partial_i g, \, \bn\in\bR^{d-1},\, g\in C^\infty_0(\bR^d)\}$.}. Correspondingly, only $\partial_i \otimes \partial_j \Delta_{+,G}$ is well-defined if \eqref{eq_lowdim} is not satisfied. The subalgebra ${\ABU}_{\partial_i}$ is insensitive to condensates because $\partial_i \Psi_G = 0$. Thus, we consider condensates only for the cases \eqref{eq_lowdim}. We collect these observations in the following

\begin{proposition} If one of the conditions in \eqref{eq_lowdim} is satisfied, the bidistribution $\Delta_{+,G}$ defined in \eqref{eq_initialstate}, with parameters as in \eqref{eq_betamu} and $\sigma_i$ defined in Proposition \ref{prop_sigma} defines a charge-conjugation-invariant, Gaussian Hadamard state $\om_G$ on $\ABU$, and the coherent state $\om_{G,\Psi_G}$ constructed as in \eqref{eq_initialcondensate} is a Hadamard state on $\ABU$. If none of the conditions in \eqref{eq_lowdim} is satisfied, the distributions $\partial_i \otimes \partial_j \Delta_{+,G}$, $i,j\in\{1,\dots,d-1\}$ define a charge-conjugation-invariant, Gaussian Hadamard state $\om_G$ on ${\ABU}_{\partial_i}$, the subalgebra of $\ABU$ generated by spatial derivatives of the fields $\partial_i \Phi$, $\partial_i \overline{\Phi}$.
\end{proposition}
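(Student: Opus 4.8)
The plan is to recognise this proposition as a repackaging of material already assembled, and to treat the two alternatives in \eqref{eq_lowdim} separately.

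\emph{If one of the conditions in \eqref{eq_lowdim} holds.} In each of the three regimes the only possible obstruction to $\Delta_{+,\beta,\mu}$ being a genuine bidistribution is the behaviour of the mode integral \eqref{eq_KMS2pfexplicit} near $\bp=0$, and the stated restrictions on $d$, $m$ and $\mu$ are exactly what makes that integral converge; hence all three of $\om_{\beta_1,\mu_1},\om_{\beta_2,\mu_2},\om_{\beta_3,\mu_3}$ are genuine Gaussian KMS states on $\ABU$, and they are Hadamard because $\Delta_{+,\beta_i,\mu_i}-\Delta_{+,\infty}$ only sees the subleading, smooth part of \eqref{eq_KMS2pfexplicit}. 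The ordering \eqref{eq_betamu} gives $\beta_1,\beta_2\le\beta_3$ and $\mu_1,\mu_2\ge\mu_3$, so Proposition \ref{prop_domination} yields $\Delta_{+,\beta_i,\mu_i}(\overline f,f)\ge\Delta_{+,\beta_3,\mu_3}(\overline f,f)$ for $i=1,2$ and all $f\in C^\infty_0(\bR^d)$. Since \eqref{eq_initialstate} is the gluing \eqref{eq_defdeltaG} with $\om_1=\om_{\beta_1,\mu_1}$, $\om_2=\om_{\beta_2,\mu_2}$, $\om_3=\om_{\beta_3,\mu_3}$, this is exactly the hypothesis of Theorem \ref{prop_Hadamardstate}, and that theorem gives at once that $\Delta_{+,G}$ defines a charge-conjugation-invariant Gaussian Hadamard state $\om_G$ on $\ABU$. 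For the coherent excitation \eqref{eq_initialcondensate} I only need that $\Psi_i(x)=c_i e^{i\mu x^0}$, $\mu=\pm m$, solves $K\Psi_i=0$ and is smooth: with $V=m^2$ one has $K=\partial_0^2-D+m^2$, and since $\Psi_i$ is constant in $\bx$, $D\Psi_i=0$ and $\partial_0^2\Psi_i=-\mu^2\Psi_i=-m^2\Psi_i$, so $K\Psi_i=0$; the coherent-state part of Theorem \ref{prop_Hadamardstate} then gives that $\om_{G,\Psi_G}$ is a Hadamard state on $\ABU$.

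\emph{If none of the conditions in \eqref{eq_lowdim} holds.} Now at least one of the $\om_{\beta_i,\mu_i}$ is ill-defined on $\ABU$ because of the divergence of \eqref{eq_KMS2pfexplicit} at $\bp=0$, and the plan is to run the same construction on $\ABU_{\partial_i}$. The defining test functions $f=\sum_k n^k\partial_k g$ of $\ABU_{\partial_i}$ have Fourier transform vanishing at zero spatial momentum, so smearing \eqref{eq_KMS2pfexplicit} with such $f$ produces an extra momentum factor at $\bp=0$ that makes the integral converge; hence each $\om_{\beta_i,\mu_i}$ does restrict to a Gaussian Hadamard state on $\ABU_{\partial_i}$, equivalently $\partial_k\otimes\partial_l\Delta_{+,\beta_i,\mu_i}$ is a well-defined bisolution with the Hadamard wave front set, and the estimate of Proposition \ref{prop_domination} still holds smeared with these $f$. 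To glue I would differentiate \eqref{eq_initialstate} and push the derivatives through the maps $\sigma_a=\Delta\chi_a g$ of Proposition \ref{prop_sigma}; this is harmless for $k,l\neq 1$ ($\partial_k$ commutes with $\Delta$, $\chi_a$ and $g$), while a $\partial_1$ produces in addition the term $\Delta\chi_a'g$ with $\chi_a'$ smooth and compactly supported and $\chi_1'+\chi_2'\equiv 0$. This writes $\partial_k\otimes\partial_l\Delta_{+,G}$ as a finite sum of terms $(A\otimes B)\Delta_{+,\beta_i,\mu_i}$ with $A,B$ built from $\sigma_a$'s, $\Delta\chi_a'g$'s and single spatial derivatives; Proposition \ref{prop_Hadamard} applies with $\Delta\chi_a'g$ in place of $\sigma_a$ (its proof only uses that the composition sits on a fixed compact set), giving well-definedness and the correct wave front set of each term, and in every term with a surviving derivative on $\Delta_{+,\beta_i,\mu_i}$ the accompanying momentum factor removes the potential $\bp\to 0$ divergence. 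The positivity/domination step of Theorem \ref{prop_Hadamardstate} then carries over, using that $(\partial_k\sigma_a)^*$ and $(\Delta\chi_a'g)^*=g\chi_a'\Delta$ again send $C^\infty_0$ into $C^\infty_0$ and that $\sigma_1^*+\sigma_2^*=g\Delta$, and charge-conjugation invariance and the Gaussian property are inherited from \eqref{eq_charge} and \eqref{eq_initialstate}; one concludes as in Theorem \ref{prop_Hadamardstate} that $\partial_k\otimes\partial_l\Delta_{+,G}$ defines a charge-conjugation-invariant Gaussian Hadamard state $\om_G$ on $\ABU_{\partial_i}$.

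The first case is a one-line application of Proposition \ref{prop_domination} and Theorem \ref{prop_Hadamardstate}; the real work is in the second case, and the step I expect to be the main obstacle is the terms with no surviving derivative on the $\Delta_{+,\beta_i,\mu_i}$, in particular $(\Delta\chi_1'g\otimes\Delta\chi_1'g)(\Delta_{+,\beta_1,\mu_1}+\Delta_{+,\beta_2,\mu_2}-2\Delta_{+,\beta_3,\mu_3})$ (using $\chi_2'=-\chi_1'$): one has to show that, despite containing the undifferentiated two-point functions, this combination is finite on $\ABU_{\partial_i}$. I would handle it by inserting the mode expansion \eqref{eq_KMS2pfexplicit}, isolating the zero-mode ($\bp\to 0$) contribution of each $\Delta_{+,\beta_i,\mu_i}$, and checking that the $\sigma_a$-images of these contributions are annihilated by the spatial derivatives --- the point being that the dangerous zero modes are proportional to $e^{\pm i\mu x^0}$, constant in $\bx$, so that the collapse $\sigma_1+\sigma_2=\mathrm{id}$ on solutions together with $\partial_k e^{\pm i\mu x^0}=0$ should eliminate them.
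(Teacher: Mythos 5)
Your first case coincides with the paper's (one-line) proof: the ordering \eqref{eq_betamu} feeds Proposition \ref{prop_domination} into Theorem \ref{prop_Hadamardstate}, and the coherent part only needs $K\Psi_i=0$. That part is fine.

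The second case is where your route genuinely fails. You read $\partial_i\otimes\partial_j\Delta_{+,G}$ as the derivative of the glued bidistribution, commuted through the maps $\sigma_a$, and you then claim (i) that every term with at least one surviving derivative on a $\Delta_{+,\beta_i,\mu_i}$ is IR-finite thanks to the accompanying momentum factor, and (ii) that the leftover term $(\Delta\chi_1'g\otimes\Delta\chi_1'g)(\Delta_{+,\beta_1,\mu_1}+\Delta_{+,\beta_2,\mu_2}-2\Delta_{+,\beta_3,\mu_3})$ is rescued by a zero-mode cancellation. Both claims break down exactly in the cases this half of the proposition covers. For (i): in $d=2$ (massless, or $m>0$ with $\max\mu_i=m$) the IR singularity of the mode density is of order $1/p_1^2$, so one factor of $p_1$ still leaves a logarithmic divergence; the mixed terms $(\Delta\chi_a'g\otimes\sigma_b\partial_1)\Delta_{+,\beta_{ab},\mu_{ab}}$ are individually ill-defined. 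For (ii): the dangerous contribution is not a single discrete mode but the whole $\bp\to0$ region of a continuum integral, and it enters the three gluing terms with different Bose-factor weights. Writing $R\doteq g\chi_1'\Delta f$, the divergent part of your combination, smeared with test functions, is proportional to $|\widehat{R}(\pm m,0)|^2\,\bigl(c_{\beta_1,\mu_1}+c_{\beta_2,\mu_2}-2c_{\beta_3,\mu_3}\bigr)\int dp_1/p_1^2$, where $c_{\beta,\mu}$ denotes the coefficient of the $1/p_1^2$ singularity (e.g. $c\propto 1/\beta$ when $\mu=m$, and $c_{\beta_3,\mu_3}=0$ if $\mu_3<m$ or $\beta_3=\infty$). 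The quantity $\widehat{R}(\pm m,0)$ does not vanish for generic $f$: for $\psi=\Theta$ it is $-\int dx^1\,\chi_1'(x^1)\bigl[\dot w\mp imw\bigr](0,x^1)$ with $w=\Delta f$, and only the sum over $\chi_1'+\chi_2'=0$ vanishes — a cancellation which the unequal weights $\beta_1,\beta_2,\beta_3$ destroy. So your "main obstacle'' term is genuinely infinite for generic admissible data; it is not a gap to be filled but a sign that the object, as you have defined it, does not exist.

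The proposition has to be read — and the paper's appeal to Propositions \ref{prop_domination}, \ref{prop_Hadamard} and Theorem \ref{prop_Hadamardstate} only works — with the derivatives placed directly on the KMS two-point functions before gluing, i.e. $\partial_i\otimes\partial_j\Delta_{+,G}\doteq\sum_{a,b}(\sigma_a\otimes\sigma_b)\,\bigl(\partial_i\otimes\partial_j\Delta_{+,\beta_{ab},\mu_{ab}}\bigr)$. This is not your object: the difference consists precisely of the $\chi_a'$-commutator terms carrying undifferentiated (non-existent) KMS kernels. With derivatives inside, each ingredient is a well-defined bisolution (the factors $q_iq_j$ sit on the same momentum as the singular Bose factor), Proposition \ref{prop_Hadamard} controls the wave front set, the antisymmetric part is the state-independent $\tfrac{i}{2}\,\partial_i\otimes\partial_j\Delta$ because all $\partial_i\otimes\partial_j\Delta_{+,\beta_{ab},\mu_{ab}}$ share it and $\sigma_1+\sigma_2=\mathrm{id}$ on solutions, and positivity on ${\ABU}_{\partial_i}$ follows from the differentiated form of Proposition \ref{prop_domination} (the mode density now contains $|\sum_i q_i\,\widetilde{\sigma_a^*g_i}|^2\ge0$ times the monotone Bose factors) exactly as in the proof of Theorem \ref{prop_Hadamardstate}.
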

\begin{proof}
The statement follows from Proposition \ref{prop_domination} and Theorem \ref{prop_Hadamardstate}.
\end{proof}

The state $\om_{G,\Psi_G}$ is by construction a KMS state with parameters $\beta_1,\mu_1,c_1$ for all observables localised in $M_1 \doteq \bR^d \setminus  J\left((-\epsilon,\epsilon) \times \supp(\chi_2) \times \bR^{d-2}\right)$, i.e. for all polynomials of the field $\Phi$ and its complex conjugate smeared with testfunctions supported in $M_1$. Similarly, $\om_{G,\Psi_G}$ is a KMS state with parameters $\beta_2,\mu_2,c_2$ for all observables localised in $M_2 \doteq \bR^d \setminus  J\left((-\epsilon,\epsilon) \times \supp(\chi_1) \times \bR^{d-2}\right)$. For example, the expectation value of the normal ordered square $\wick{|\Phi(x)|^2}$ for the case of $\psi = \Theta$ at $x^0=0$ is
\begin{align*}
\om_{G,\Psi_G}(\wick{|\Phi(0,\bx)|^2}) &\doteq \lim_{x,y\to (0,\bx)}\left(\om_{G,\Psi_G}(\overline{\Phi}(x)\Phi(y)) - \om_{\infty}(\overline{\Phi}(x)\Phi(y))\right) \\
&=  \chi^2_1(x^1) \left(\om_{\beta_1,\mu_1}(\wick{|\Phi(0,\bx)|^2})  + |c_1|^2\right)+ \chi^2_2(x^1)\left( \om_{\beta_2,\mu_2}(\wick{|\Phi(0,\bx)|^2}) + |c_2|^2\right) \\
&\quad + 2 \chi_1(x^1)\chi_2(x^1)\left(\om_{\beta_3,\mu_3}(\wick{|\Phi(0,\bx)|^2}) + \Re(\overline{c_1}c_2)\right)\,,
\end{align*}
where, $\om_{\beta_i,\mu_i}(\wick{|\Phi(0,\bx)|^2}) = 2 W_{\beta_i,\mu_i}(0,0)$, because $W_{\beta_i,\mu_i}(x,x)$ is constant, see \eqref{eq_W} for the definition of $W_{\beta_i,\mu_i}$. 

As already explained, we need the third state $\om_{\beta_3,\mu_3}$ in order to specify non-trivial two-point cross-correlations which are necessary to assure positivity (unitarity) of $\om_{G}$ and then of $\om_{G,\Psi_G}$. The initial state considered in \cite{Doyon} (with Dirichlet boundary conditions) corresponds to our construction of $\om_{G}$ with $\chi_2$ being a Heaviside distribution; in particular, condensates are not treated in \cite{Doyon}. With this choice it is consistent to set the two-point cross-correlations at initial time to be vanishing. However, the initial state constructed in this way is not of Hadamard type. For technical reasons, we chose the intermediate state $\om_{\beta_3,\mu_3}$, which loosely corresponds to the ``probe'' considered in \cite{HollandsLongo}, to be ``colder'' and with lower chemical potential than the semi-infinite reservoirs. However, there is no obvious physical reason for doing so. 

 In the case of $\om_{G,\Psi_G}$, we need $\mu_1=\mu_2=\pm m$, in order for the limit NESS to (exist and) be stationary with respect to $\alpha^\mu_t$ as in \eqref{eq_timeevol} with a definite $\mu=\mu_1=\mu_2$. If we consider only $\om_{G}$, i.e. no condensate, then the chemical potentials of the two semi-infinite reservoirs can be arbitrary, as long as they are bounded by $m$. This is because the limit NESS in the absense of condensates turns out to be stationary with respect to $\alpha^\mu_t$ for any $\mu$.
 
We shall now investigate the asymptotic time-evolution of the initial state. We set
\beq\label{eq_defNESS}
\om_N \doteq \lim_{t\to\infty}\om_G \circ \alpha^0_t \,,\qquad \om_{N,\Psi_N}\doteq \lim_{t\to\infty}\om_{G,\Psi_G} \circ \alpha^\mu_t
\eeq
 and find the following result. 
 \begin{theorem} \label{prop_NESS} (1) Assume that one of the conditions in \eqref{eq_lowdim} is satisfied. For any admissible choice of data $\beta_1, \beta_2, \beta_3, \mu_1, \mu_2, \mu_3, \psi, \chi_1, c_1, c_2$ the states $\om_N$ and $\om_{N,\Psi_N}$ defined in \eqref{eq_defNESS} are well-defined on $\ABU$ and independent of $\beta_3$, $\mu_3$, $\chi_1$, $\psi$. $\om_N$ is an $\alpha^0_t$-invariant, charge-conjugation-invariant, Gaussian Hadamard state with two-point function
\beq\label{eq_NESS2pf}
\Delta_{+,N}(x,y)\doteq \frac{1}{(2\pi)^{d-1}} \int_{\bR^{d-1}} d\bp \frac{1}{2\omp}\sum_{s=\pm 1} e^{i s \omp (x^0 - y^0)} e^{i \bp( \bx-\by)} s\, b_{\beta(s p_1),\mu(s p_1)} (s\omp)\,,
\eeq
with $\omp$, $b_{\beta,\mu}$ defined in \eqref{eq_KMS2pfexplicit} and
$$
\beta(p_1) \doteq \Theta(p_1) \beta_1  + \Theta(-p_1) \beta_2\,,\qquad \mu(p_1) \doteq \Theta(p_1) \mu_1  + \Theta(-p_1) \mu_2\,.
$$
$\om_{N,\Psi_N}$ is an $\alpha^\mu_t$-invariant coherent excitation of $\om_{N}$, i.e. 
$$
\om_{N,\Psi_N} = \om_N \circ \iota_{\Psi_N}\,,\qquad \Psi_N(x^0)=  \frac{c_1+c_2}{2} e^{i\mu x^0}\,.
$$
(2) Assume that none of the conditions in \eqref{eq_lowdim} is satisfied.  For any admissible choice of data $\beta_1, \beta_2, \beta_3, \mu_1, \mu_2, \mu_3, \psi, \chi_1, c_1, c_2$ the state $\om_N$ is well-defined on ${\ABU}_{\partial_i}$ -- the subalgebra of $\ABU$ generated by spatial derivatives of the fields $\partial_i \Phi$, $\partial_i \overline{\Phi}$ -- and independent of $\beta_3$, $\mu_3$, $\chi_1$, $\psi$. It is an $\alpha^0_t$-invariant, charge-conjugation-invariant, Gaussian Hadamard state with two-point functions $\partial_i \otimes \partial_j \Delta_{+,N}$, $i,j\in\{1,\dots,d-1\}$ defined by applying $\partial_i \otimes \partial_j$ to \eqref{eq_NESS2pf}.
 \end{theorem}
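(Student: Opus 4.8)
\medskip\noindent\textbf{Sketch of a proof.} The plan is to reduce the statement to the asymptotics of a single two-point function (and, in the coherent case, of the one-point function) and then to analyse how the operators $\sigma_i$ of Proposition \ref{prop_sigma}, conjugated by time translations, act on the plane-wave modes of the homogeneous KMS states. Since $\om_G$ is Gaussian and charge-conjugation invariant and $\om_{G,\Psi_G}=\om_G\circ\iota_{\Psi_G}$ is a coherent excitation, the states $\om_G\circ\alpha^0_t$ and $\om_{G,\Psi_G}\circ\alpha^\mu_t$ of \eqref{eq_defNESS} are completely determined by the bidistribution $(x,y)\mapsto\Delta_{+,G}(x+te_0,y+te_0)$, by the function $x\mapsto e^{-i\mu t}\Psi_G(x^0+t,\bx)$, and by the phases $e^{\mp i\mu t}$ carried by $\alpha^\mu_t$ in \eqref{eq_timeevol} (these cancel in the $\overline{\Phi}\Phi$ two-point function and are carried entirely by the factor $e^{-i\mu t}$ multiplying $\Psi_G$); here $e_0$ denotes the unit time direction. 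It therefore suffices to show: (i) $\Delta_{+,G}(\,\cdot+te_0,\cdot+te_0)\to\Delta_{+,N}$ in $\mathcal{D}'(\bR^d\times\bR^d)$ with $\Delta_{+,N}$ the kernel in \eqref{eq_NESS2pf}; (ii) $e^{-i\mu t}\Psi_G(\,\cdot+te_0)\to\Psi_N$ locally uniformly; (iii) the resulting $\omN$ is a charge-conjugation invariant, $\alpha^0_t$-invariant, quasifree Hadamard state, on $\ABU$ when one of the conditions \eqref{eq_lowdim} holds and on ${\ABU}_{\partial_i}$ otherwise. The claimed independence of $\beta_3,\mu_3,\chi_1,\psi$ will then be read off the limiting formulae.

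The heart of the argument is a single-mode limit. I would write the homogeneous KMS kernels \eqref{eq_KMS2pfexplicit} (with $U\equiv0$) in terms of the plane-wave solutions $e_{s,\bp}(x)=e^{is\omp x^0+i\bp\bx}$, $s=\pm1$; since $\sigma_i$ is real-linear and commutes with complex conjugation, $(\sigma_i\otimes\sigma_j)\Delta_{+,\beta,\mu}$ becomes the corresponding superposition of $(\sigma_ie_{s,\bp})\otimes\overline{\sigma_je_{s,\bp}}$, so the problem reduces to the behaviour of $\sigma_ie_{s,\bp}(\,\cdot+te_0)$. The statement I would establish is that, weakly in $x$ and as $t\to\infty$, $\sigma_1e_{s,\bp}(\,\cdot+te_0)$ differs from $\Theta(\pm sp_1)\,e_{s,\bp}(\,\cdot+te_0)$ by a term tending to $0$, and $\sigma_2e_{s,\bp}(\,\cdot+te_0)$ from the complementary expression $\Theta(\mp sp_1)\,e_{s,\bp}(\,\cdot+te_0)$, with the orientation of the half-space fixed so as to reproduce $\beta(sp_1),\mu(sp_1)$ in \eqref{eq_NESS2pf}. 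To prove this I would compute $\sigma_ie_{s,\bp}=\Delta\chi_i g\,e_{s,\bp}$ explicitly: one has $g\,e_{s,\bp}=\rho_{s,\bp}(x^0)e^{i\bp\bx}$ with $\rho_{s,\bp}=(\ddot\psi+2is\omp\dot\psi)e^{is\omp x^0}$ supported in $(-\epsilon,\epsilon)$; one multiplies by $\chi_i(x^1)$ and applies $\Delta$; and, passing to the Fourier transform in $(x^0,x^1)$ and using $\widehat{\dot\psi}(0)=\int\dot\psi=1$ together with $\widehat{\ddot\psi}(\xi)=i\xi\,\widehat{\dot\psi}(\xi)$, one finds that $\sigma_ie_{s,\bp}$ is an integral over $k_1$ of on-shell modes $e^{is'\omega_{(k_1,\bpp)}x^0+ik_1x^1+i\bpp\bxp}$, weighted by $\widehat{\chi_i}(k_1-p_1)$ and by a smooth coefficient which on the resonant locus $\{k_1=p_1,\,s'=s\}$ reduces to $1$ and vanishes there for $s'=-s$.

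The limit $t\to\infty$ is then taken as follows. Time translation by $t$ multiplies the $k_1$-mode by $e^{is'\omega_{(k_1,\bpp)}t}$; I split $\widehat{\chi_i}(k_1-p_1)=\pi\,\delta(k_1-p_1)+\mathrm{p.v.}\,\frac{\widehat{\chi_i'}(k_1-p_1)}{i(k_1-p_1)}$, where $\chi_i'\in C^\infty_0(\bR)$ with $\int\chi_i'=\chi_i(+\infty)-\chi_i(-\infty)$, so that $\widehat{\chi_i'}$ is Schwartz. The $\delta$-term gives $\tfrac12\,e_{s,\bp}(\,\cdot+te_0)$, the universal weight $\tfrac12$ coming from the normalisation of $\chi_i$ and from $\widehat{\dot\psi}(0)=1$ and hence independent of the shape of $\chi_i$ and $\psi$; the principal-value term, by the elementary limit $\lim_{t\to\infty}\int\mathrm{p.v.}\,\frac{g(k)}{k-a}e^{i\lambda(k)t}\,dk=i\pi\,\mathrm{sgn}(\lambda'(a))\,g(a)$ valid for $g$ rapidly decreasing and $\lambda$ smooth with $\lambda'(a)\neq0$ --- here $\lambda(k_1)=s'\omega_{(k_1,\bpp)}$, $\lambda'(p_1)=s'p_1/\omp$, and only $s'=s$ survives the resonant evaluation --- gives $\pm\tfrac12\,\mathrm{sgn}(p_1)\,e_{s,\bp}(\,\cdot+te_0)$; and the remaining Schwartz part of $\widehat{\chi_i'}$ off the resonance gives an oscillatory integral decaying like $t^{-1/2}$ by stationary phase, the phase $\omega_{(k_1,\bpp)}$ having the single stationary point $k_1=0$ (and in the exceptional case $m=0,\,\bpp=0$ the phase $|k_1|$ has nowhere vanishing derivative off the origin, which is even more favourable). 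Adding these contributions yields the single-mode limit. Feeding it into \eqref{eq_initialstate}: the cross term $(\sigma_1\otimes\sigma_2+\sigma_2\otimes\sigma_1)\Delta_{+,\beta_3,\mu_3}$ acquires on each mode the weight $\Theta(\pm sp_1)\Theta(\mp sp_1)=0$ and disappears --- which is the independence of $\beta_3,\mu_3$ --- while $(\sigma_1\otimes\sigma_1)\Delta_{+,\beta_1,\mu_1}$ and $(\sigma_2\otimes\sigma_2)\Delta_{+,\beta_2,\mu_2}$ acquire the complementary weights, the residual phases $e^{is'\omega t}$ cancelling against their conjugates; the sum is exactly \eqref{eq_NESS2pf}, with the independence of $\chi_1$ and $\psi$ now manifest. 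To interchange $\lim_t$ with $\int d\bp$ I would use a $t$-uniform majorant $|\langle\sigma_ie_{s,\bp}(\,\cdot+te_0),f\rangle|\le C_f\,\Xi(\bp)$ coming from the same oscillatory-integral estimates, with $\int d\bp\,\frac{|b_{\beta_j,\mu_j}(s\omp)|}{2\omp}\,\Xi(\bp)^2<\infty$; this integrability is precisely what the conditions \eqref{eq_lowdim} guarantee (they single out the cases in which the Bose--Einstein factors are integrable against the relevant measure in $d$ dimensions), and outside \eqref{eq_lowdim} the same estimates applied after one $\partial_i$ on each leg give the statement on ${\ABU}_{\partial_i}$.

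It remains to verify the properties of $\omN$ and the coherent case, and to locate the main difficulty. By construction $\Delta_{+,N}$ is real, symmetric and a bisolution of the Klein--Gordon equation with antisymmetric part $i\Delta/2$, since $\sigma_1+\sigma_2$ is the identity on solutions (Proposition \ref{prop_sigma}); and $\Delta_{+,N}-\Delta_{+,\infty}$ is, by \eqref{eq_NESS2pf} and the exponential decay in $|\bp|$ of $b_{\beta,\mu}(\pm\omp)-b_\infty(\pm\omp)$ (with $b_\infty$ the $\beta\to\infty$ limit), an absolutely convergent momentum integral with bounded, piecewise-smooth weight, hence in $C^\infty(\bR^d\times\bR^d)$; thus $\omN$ is Hadamard in the sense of Section \ref{sec_gluing}, and likewise $\partial_i\otimes\partial_j\Delta_{+,N}$ in the ${\ABU}_{\partial_i}$ case. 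The $\alpha^0_t$- and charge-conjugation invariance are immediate from the form of \eqref{eq_NESS2pf}, and positivity follows from $\Delta_{+,N}(\overline f,f)=\lim_t\Delta_{+,G}(\overline{f_t},f_t)\ge0$ with $f_t(x)=f(x^0-t,\bx)$, using Theorem \ref{prop_Hadamardstate}. The condensate is the zero-momentum instance of the above: $\Psi_i=c_ie^{i\mu x^0}$ is the mode with $\bp=0$, so the single-mode limit at $p_1=0$ (where $\Theta(0)=\tfrac12$) gives $\sigma_i\Psi_i(\,\cdot+te_0)\to\tfrac12\Psi_i(\,\cdot+te_0)$, whence $\Psi_G(\,\cdot+te_0)\to\tfrac{c_1+c_2}{2}e^{i\mu(x^0+t)}$, and composing with $e^{-i\mu t}$ from $\alpha^\mu_t$ gives $\Psi_N(x^0)=\tfrac{c_1+c_2}{2}e^{i\mu x^0}$, so that $\om_{N,\Psi_N}=\omN\circ\iota_{\Psi_N}$ is Hadamard because $\Psi_N$ is smooth. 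The hard part is the single-mode limit together with a $t$-uniform remainder bound sharp enough in $\bp$ to survive integration against the Bose--Einstein factor --- exactly where \eqref{eq_lowdim} enters --- and the sign bookkeeping that matches the half-spaces to \eqref{eq_NESS2pf}; the rest is routine, and the detailed oscillatory-integral estimates would naturally be relegated to an appendix.
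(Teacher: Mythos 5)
Your overall route coincides with the paper's: you reduce the theorem to the distributional convergence of $\Delta_{+,G}(\,\cdot+te_0,\cdot+te_0)$ and of $e^{-i\mu t}\Psi_G(\,\cdot+te_0)$ (exactly the reduction made in the paper's proof), and your splitting $\widehat{\chi_i}=\pi\delta+\mathrm{p.v.}$ is the same mechanism as the representation \eqref{eq_chiform} of $\chi_{1/2}$ through $\widetilde{\nu}(\lambda)/(\lambda\mp i\epsilon)$ used in Appendix \ref{sec_convergence2pf}; likewise your resonance bookkeeping (only $s'=s$ survives, the cross terms carry $\Theta(sp_1)\Theta(-sp_1)=0$, hence independence of $\beta_3,\mu_3,\chi_1,\psi$) reproduces what the paper extracts by the successive ``butterfly'' contour integrations. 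The difference is organisational: you take the limit mode-by-mode and interchange $\lim_{t\to\infty}$ with $\int d\bp$ by a dominated-convergence majorant, while the paper smears with spatial test functions and controls everything by residue/Paley--Wiener estimates with explicit $O(t^{-1})$ errors from the diagonal contour pieces. The paper's packaging delivers the $O(t^{-1})$ rate quoted after the theorem; yours, as stated, gives only convergence, and the uniformity of your majorant when the pole $k_1=p_1$ approaches the stationary point $k_1=0$ of the phase (i.e.\ as $p_1\to 0$, where the limit $\Theta(sp_1)$ is discontinuous) is asserted rather than proved --- acceptable for a sketch, but it is precisely the delicate estimate.

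There is one step that fails as justified: the condensate. You obtain $\sigma_i\Psi_i(\,\cdot+te_0)\to\tfrac12\Psi_i(\,\cdot+te_0)$ by evaluating your single-mode limit ``at $p_1=0$, where $\Theta(0)=\tfrac12$'', but for $\Psi_i=c_ie^{i\mu x^0}$ one has $\bp=0$, so the pole of the principal-value term sits at $a=0$ where $\lambda'(0)=s'p_1/\omp|_{p_1=0}=0$: the hypothesis $\lambda'(a)\neq 0$ of your quoted asymptotic lemma is violated, because the pole coincides with the stationary point of the phase. (Your parenthetical about the exceptional case concerns $m=0$ and the Schwartz remainder, not this coalescence, which occurs for $m>0$, $\mu=\pm m$.) This is exactly the point the paper singles out in Appendix \ref{sec_convergencecondensate} --- ``we can not use a butterfly-contour integral because the pole is at the origin'' --- and resolves by the rescaling $\rho=\lambda t$, reducing to $\int d\rho\, e^{ia|\rho|}/(\rho-is_2\epsilon)=s_2\pi i$. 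Note also that one cannot simply read the value $\tfrac12$ off \eqref{eq_NESS2pf}, since that kernel is only fixed up to sets of measure zero in $\bp$, whereas the condensate lives exactly on the exceptional ray $p_1=0$. The fix in your language is short (rescale $k_1=\rho/t$, or use the parity of the quadratic phase against the odd kernel, to show the principal-value contribution vanishes so that only the $\pi\delta$ term survives with weight $\tfrac12$), but as written the condensate statement $\Psi_N=\tfrac{c_1+c_2}{2}e^{i\mu x^0}$ is not proved. Two cosmetic slips: $\Delta_{+,N}$ is not real and symmetric (it is hermitian, with antisymmetric part $i\Delta/2$; it is $W_N$ that is real and symmetric), and in the Hadamard argument the momentum-space weight of $W_N$ is not bounded in the edge cases $\mu_i=m$ of \eqref{eq_lowdim}, only locally integrable, which still suffices for smoothness.
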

 \begin{proof} Since $\om_G$ is a Gaussian state, it is necessary and sufficient to show that $\lim_{t\to \infty}\Delta_{+,G}(x^0+t,\bx,y^0+t,\by) = \Delta_{+,N}(x,y)$ in the sense of distributions. This convergence is computed in Appendix \ref{sec_convergence2pf}. For (2) one has to compute instead the convergence of $\partial_i\otimes \partial_j \Delta_{+,G}$ to $\partial_i\otimes \partial_j \Delta_{+,N}$. We don't do this explicitely in Appendix \ref{sec_convergence2pf} because the computational steps are entirely analogous to the case without derivatives. $\om_N$ is Hadamard because $W_N \doteq \Delta_{+,N}- \Delta_{+,\infty}$ ($\partial_i\otimes \partial_j W_N \doteq \partial_i\otimes \partial_j (\Delta_{+,N}- \Delta_{+,\infty})$ for (2)) is a smooth function, as can be inferred from \eqref{eq_NESS2pf}. Moreover, the form of $\Delta_{+,N}$ implies that  $\om_N \circ \alpha^\mu_t = \om_N$ for all $\mu$. Given the convergence of $\om_N$, the convergence of $\om_{N,\Psi_N}$ is equivalent to the convergence $\lim_{t\to \infty} e^{-i\mu t} \Psi_G(x^0+t) = \Psi_N(x^0)$. The latter is computed in Appendix \ref{sec_convergencecondensate}.
 \end{proof}
 
The state $\om_N$, for $\mu_1=\mu_2=0$, corresponds to the NESS found in \cite{Doyon}. It satisfies the analyticity part of the KMS condition with respect to $\alpha^\mu_t$ for arbitrary $\mu$ in a strip corresponding to $\min(\beta_1,\beta_2)$ but satisfies the full KMS condition with respect to $\alpha^\mu_t$ only if $\mu=\mu_1=\mu_2$ and $\beta_1 = \beta_2$. The convergence of  $\om_{G,\Psi_G}$ to $\om_{N,\psi_N}$ is $O(t^{-1})$ in the sense that $\Delta_{+,G}(x^0+t,\bx,y^0+t,\by) = \Delta_{+,N}(x,y) +  O(t^{-1})$ and $e^{-i\mu t} \Psi_G(x^0+t) = \Psi_N(x^0)+  O(t^{-1})$ for large $t$. This is evident from the computation in Appendix \ref{sec_convergence} and, for $\omega_N$, corresponds to the Dirichlet-case studied in \cite{Doyon}.  We will analyse further properties of $\om_N$ in the sections to come.

\begin{figure}[ht]
\begin{minipage}[c]{0.6\textwidth}
\includegraphics[scale=0.6]{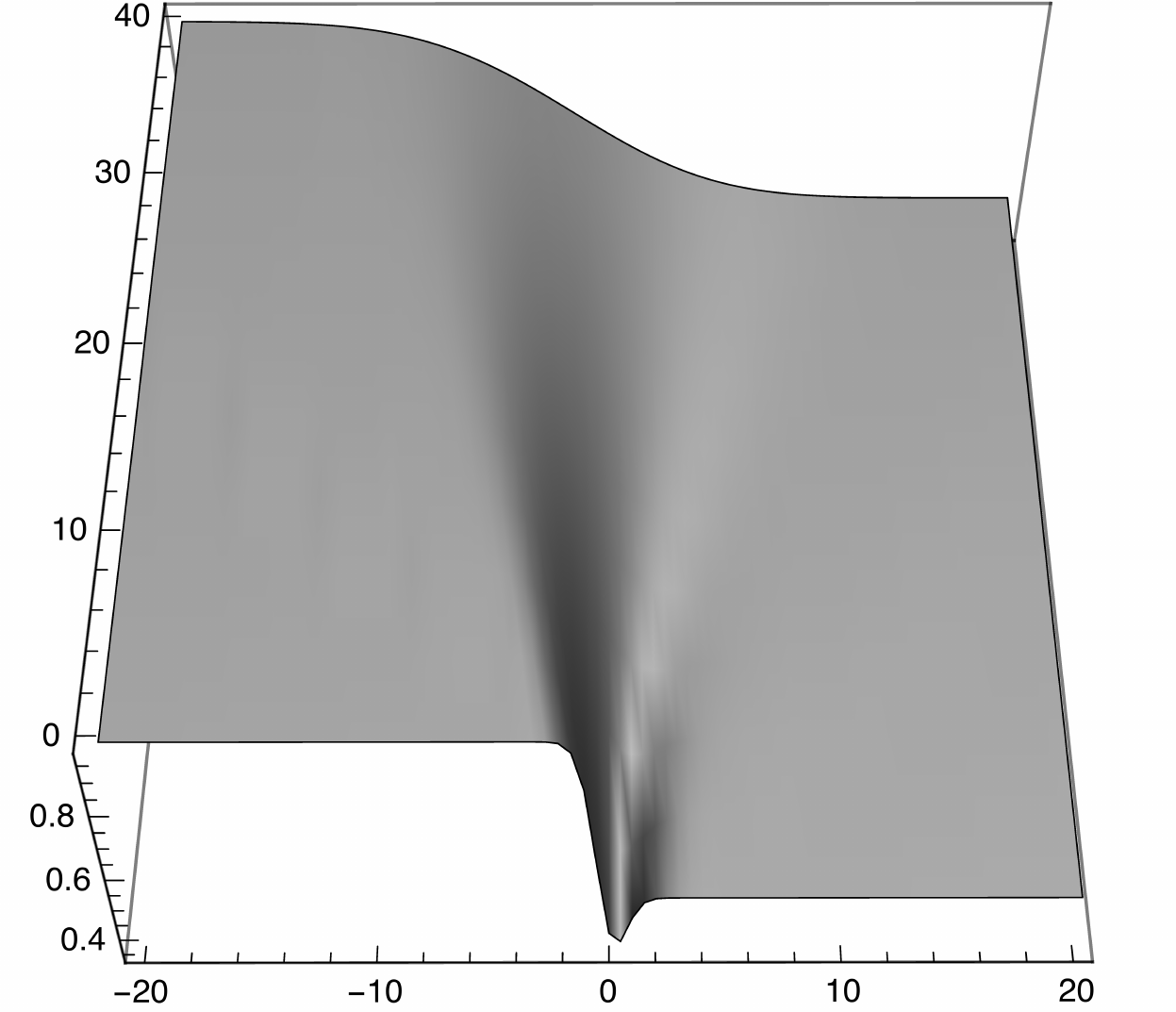}
\end{minipage}
\begin{minipage}[c]{0.4\textwidth}
\caption{$\omega_G(\rho(x))/\omega_{\beta_1}(\rho(x))$ for $\beta_1 m = 50$, $\beta_2$ such that $\omega_{\beta_2}(\rho)=\omega_{\beta_1}(\rho)/2$, $a m = 10^2\sqrt{\beta_1 m}$, $\mu_1=\mu_2=\mu_3=0$, $\beta_3 = \infty$, $d=4$, $\partial_1 \chi_2$  a normalised Gaussian with width $a$. $x^0$ and $x^1$ in units of $a$ with $x^1$ on the horizontal axis and $x^0$ on the vertical axis.\label{fig_phi2}}
\end{minipage}
\end{figure}

\begin{figure}[ht]
\begin{minipage}[c]{0.6\textwidth}
\includegraphics[scale=0.6]{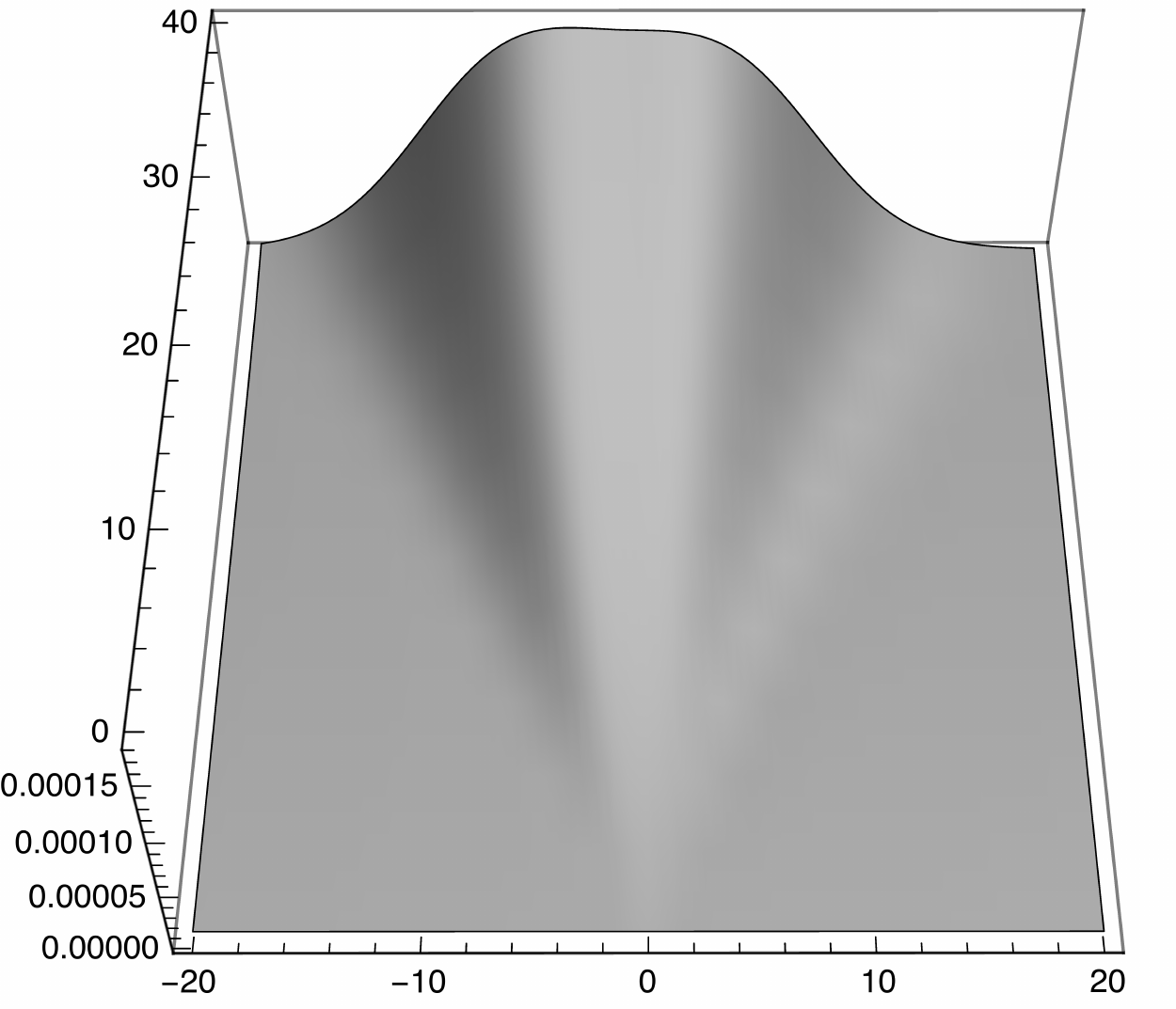}
\end{minipage}
\begin{minipage}[c]{0.4\textwidth}
\caption{$ \omega_G(j_1(x))  /\omega_N(j_1(x)) $ with the same parameters and specifications as in Figure \ref{fig_phi2}. \label{fig_j}}
\end{minipage}
\end{figure}

Using numerical computations, it is possible to investigate the dynamics of the convergence of $\om_G$ at finite times by considering the observables $\rho = T_{00}$ and $j_1 = T_{01}$ with
$$
T_{\mu\nu} = 2 \Re \left(\partial_\mu \overline{\Phi} \partial_\nu \Phi\right) -\frac12 g_{\mu\nu}(\partial_\rho \overline{\Phi} \partial^\rho \Phi + m^2 |\Phi|^2)
$$
In order to simplify computations we choose $\beta_3 = \infty$, $\mu_1=\mu_2=\mu_3=0$ and find
$$
\omega_G(\rho(x)) = \left[(\partial_{0} \otimes \partial_{0} + \partial_{i} \otimes \partial^i + m^2) W_G\right](x,x)\,,\qquad \omega_G(j_1(x)) = \left[ \partial_0 \otimes \partial_1 W_G\right](x,x)
$$
$$
W_G \doteq  \Delta_{+,G} -  \Delta_{+,\infty} =  (\sigma_1 \otimes \sigma_1) W_{\beta_1} + (\sigma_2 \otimes \sigma_2) W_{\beta_2}\,.
$$
We also choose $\psi = \Theta$, $d=4$ and $\chi_1$ such that $\partial_1\chi_1$ is a normalised Gaussian with width $a$. The details of the computation are collected in Appendix \ref{sec_convergencefinite}.

As one can infer from Figure \ref{fig_phi2}, the convergence is so ``slow'' that the shape of the initial temperature profile is not maintained as in the hydrodynamic approximation for strong interactions discussed in \cite{Doyon2}. Still one can say something about the expansion speed of the transition region, which is clearly subluminal. \cite{Doyon2} found that for scale-invariant dynamics in the strongly coupled regime the intermediate region at large times is tilted towards the lower temperature such that colder side approaches the NESS faster than the hotter side. In our case the tilt, which is best seen in Figure \ref{fig_j}, is rather towards the higher temperature. However, it is also possible that the tilt or rather the asymmetry one can see is a remnant of the asymmetry in the magnitude of the gradient of the initial temperature profile.
 
\subsection{NESS and local thermal equilibrium states}

Even if $\om_N$ is not a KMS state with respect to $\alpha_t$, one may wonder if $\om_N$ is a KMS state in a different rest-frame. The authors of \cite{Doyon} have observed that, for vanishing chemical potentials, this is the case if any only if $m=0$ and $d=2$, which is a special case of CFTs in $d=2$ which all exhibit this behaviour \cite{Doyon3,HollandsLongo}. \cite{BOR} have introduced a concept of local thermal equilibrium (LTE) states for describing near-equilibrium non-equilibrium situations, essentially (for real scalar fields, i.e. no chemical potentials) these states have to satisfy
$$
\omega_{LTE}(A(x)) = \omega_{\beta}(A)|_{\beta = \beta(x)}
$$
for sufficiently many local observables $A$, see also \cite{GPV}. This definition may be generalised to mixtures of $\omega_{\beta}|_{\beta = \beta(x)}$ in different rest frames. Using only the Wick-square and the components of the stress-energy tensor as test-observables, we have verified that, for vanishing chemical potentials, $\omega_N$ is not a mixed LTE except for $m=0, d=2$, where it is a proper KMS state in a different rest-frame, as already mentioned. We shall not provide the details of this computation, but only remark that, for $m>0$ or $d>2$, $\om_N$ is too anisotropic to be an LTE state. This suggests a possible anisotropic generalisation of LTE states by allowing for mixtures of thermal currents.

\subsection{NESS are ``modewise'' KMS}

The form of the two-point function $\Delta_{+,N}$ in \eqref{eq_NESS2pf} suggests that $\om_N$ is a KMS state in the following weaker sense: It is a KMS state with respect to $\alpha^{\mu_1}_t$ ($\alpha^{\mu_2}_t$) at inverse temperate $\beta_1$ ($\beta_2$) for all modes which are right-moving (left-moving) in the $x^1$-direction. This observation has already been made in \cite{Doyon} (for vanishing chemical potentials). We can formalise this observation rigorously in the following way. For given $\beta_1$, $\beta_2$, $\mu_1$, $\mu_2$ we define a time evolution $\alpha^{\beta_1,\beta_2,\mu_1,\mu_2}_t$ which is state-independent but well-defined only in states whose correlation functions are Schwartz distributions in all variables. Denoting these correlation functions collectively (for all possible combinations of $\Phi$ and $\overline{\Phi}$) by 
$$
\Delta_{\om,n}(x_1,\dots,x_n)\doteq \om(\Phi^\sharp(x_1),\dots,\Phi^\sharp(x_n))\,,\quad \Phi^\sharp\in \{\Phi,\overline{\Phi}\}\,,
$$
we can consider the Fourier transforms $\widetilde \Delta_{\om,n}$ of $\Delta_{\om,n}$ such that
$$
\Delta_{\om,n}(x_1,\dots,x_n) = \frac{1}{\sqrt{2\pi}^d} \int_{\bR^{nd}} dp^1\dots dp^n \exp\left(i\sum^n_{i=1}(\bp^i\cdot \bx_i - p^i_0 x^0_i )\right)\widetilde \Delta_{\om,n}(p^1,\dots,p^n)\,.
$$
We now set
$$
\om\left(\alpha^{\beta_1,\beta_2,\mu_1,\mu_2}_t(\Phi^\sharp(x_1)),\dots,\alpha^{\beta_1,\beta_2,\mu_1,\mu_2}_t(\Phi^\sharp(x_n))\right)\doteq \frac{1}{\sqrt{2\pi}^d} \int_{\bR^{nd}} dp^1\dots dp^n\quad \times 
$$
\beq\label{eq_timeevolutionleftright}
\times\quad  \exp\left(i\sum^n_{i=1}\Big(\bp^i\cdot \bx_i - p^i_0 (x^0_i) - \big(p^i_0 - s_i \mu(-p^1_0 p^1_1)\big)\beta(-p^1_0 p^1_1)t \Big)\right)\widetilde \Delta_{\om,n}(p^1,\dots,p^n)\,,
\eeq
$$
\beta(x) \doteq \Theta(x) \beta_1  + \Theta(-x) \beta_2\,,\qquad \mu(x) \doteq \Theta(x) \mu_1  + \Theta(-x) \mu_2\,, 
$$
$$
 s_i \doteq  +1\,/\, -1 \quad\text{for} \quad \Phi^\sharp(x_i) = \Phi(x_i) \,/\, \overline{\Phi}(x_i)\,.
$$
Here we assume that possible singularities of $\widetilde \Delta_{\om,n}(p^1,\dots,p^n)$ at $p^i_0=0$ and $p^i_i=0$ are at most of the type ``Heaviside distribution $\times$ continuous function'', such that $\alpha^{\beta_1,\beta_2,\mu_1,\mu_2}_t$ is well-defined\footnote{We recall that the distributional identities $\Theta(x)\Theta(-x)=0$, $\Theta(x)\Theta(x)=\Theta(x)$ can be justified by first considering $\Theta(x)\Theta(\pm x)$ as a distribution on $\bR\setminus\{0\}$ and then extending it to $\bR$ while preserving the Steinmann-scaling degree $0$, cf. e.g. \cite{BF00} for the definition of the Steinmann-scaling degree. }. Using this definition of $\alpha^{\beta_1,\beta_2,\mu_1,\mu_2}_t$ we find
\begin{proposition}\label{prop_NESSKMS}For $m>0$, $\om_N$ is a KMS state with respect to $\alpha^{\beta_1,\beta_2,\mu_1,\mu_2}_t$ at inverse temperature $\beta = 1$.
\end{proposition}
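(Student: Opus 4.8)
The plan is to reduce the KMS property for the highly non-standard dynamics $\alpha^{\beta_1,\beta_2,\mu_1,\mu_2}_t$ to the ordinary KMS property of the building-block states $\om_{\beta_i,\mu_i}$, exploiting the Gaussian character of $\om_N$ and the ``modewise'' structure already visible in the two-point function \eqref{eq_NESS2pf}. Since $\om_N$ is quasi-free and $\alpha^{\beta_1,\beta_2,\mu_1,\mu_2}_t$ acts diagonally in momentum space on each field insertion, it suffices to verify the KMS condition on the two-point function: one checks that for $A = \Phi^\sharp(f)$, $B = \Phi^\sharp(g)$ the function $F_{AB}(z) \doteq \om_N\!\big(A\,\alpha^{\beta_1,\beta_2,\mu_1,\mu_2}_z(B)\big)$, defined initially for real $z=t$ via \eqref{eq_timeevolutionleftright}, extends analytically to the strip $0<\Im z<1$ with the correct boundary values; the $n$-point case then follows because both the dynamics and the truncated structure factorise over modes and the odd truncated functions vanish. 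The key computation is that, on the momentum-space kernel, the shift appearing in \eqref{eq_timeevolutionleftright} multiplies the $s=\pm1$ piece of $\Delta_{+,N}$ by $\exp\!\big(-(s\omp - s\mu(sp_1))\beta(sp_1) t\big)$, which, upon continuing $t\mapsto t+i$, reproduces exactly the factor $e^{-\beta(sp_1)(s\omp - s\mu(sp_1))}$ that converts $b_{\beta,\mu}(s\omp)$ into the ``$\pm1$-shifted'' Bose factor; this is the standard KMS bookkeeping for a free field, but now done separately in the $p_1>0$ and $p_1<0$ half-spaces with the appropriate $(\beta_i,\mu_i)$.

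Concretely, I would first record the explicit form of $\om_N\!\big(\overline\Phi(x)\,\alpha^{\beta_1,\beta_2,\mu_1,\mu_2}_t(\Phi(y))\big)$ obtained by inserting \eqref{eq_timeevolutionleftright} into \eqref{eq_NESS2pf}: the half-line integrals $\int_{p_1>0}$ and $\int_{p_1<0}$ decouple, each carrying its own inverse temperature $\beta_i$ and chemical potential $\mu_i$, and in each the $t$-dependence enters only through $e^{i s\omp t}\cdot e^{-(s\omp - s\mu_i)\beta_i t}$, i.e. through $e^{-(s\omp-s\mu_i)(\beta_i t - i t)} \cdot (\text{$t$-independent})$ after a harmless rewriting — the point being that analyticity in the strip is controlled by the sign of $s\omp - s\mu_i$, which is positive for $s=+1$ (since $\mu_i\le m\le\omp$) and negative for $s=-1$, exactly as in the ordinary KMS analysis. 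One then observes that $F_{AB}(z)$ is a superposition of the elementary functions $z\mapsto e^{-(s\omp - s\mu_i)(\beta_i - 1)\cdot(\cdots)}$-type terms that are bounded and analytic on $0<\Im z<1$ — here the normalisation $\beta=1$ enters because the strip width must match the $t$-coefficient $\beta(sp_1)$ that has been absorbed into the dynamics — and the boundary value at $\Im z=1$ produces precisely $\om_N\!\big(\alpha^{\beta_1,\beta_2,\mu_1,\mu_2}_t(\Phi(y))\,\overline\Phi(x)\big)$ by the same manipulation run in reverse. The role of $m>0$ is to guarantee $\omp\ge m>0$ bounded away from $0$, so that the exponential weights decay and the $p_i^0=0$, $p_i^1=0$ singularity issue flagged in the footnote does not obstruct the contour shift; it also makes the relevant integrals of Schwartz class so that $\alpha^{\beta_1,\beta_2,\mu_1,\mu_2}_t$ is literally well-defined on $\om_N$.

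For the general $n$-point case I would invoke that $\om_N$ is Gaussian: the truncated $n$-point functions vanish for $n\ne 2$, the full $n$-point functions are sums of products of two-point functions, and $\alpha^{\beta_1,\beta_2,\mu_1,\mu_2}_t$ acts as a (momentum-dependent) ``one-particle'' transformation on each factor, so the analytic function $F_{AB}(z)$ for general $A,B\in\ABU$ is built as a finite sum of products of the two-point building blocks $F^{(2)}_{ij}(z)$, each analytic and bounded on the unit strip; a product and sum of such functions is again analytic and bounded there, with boundary values assembling correctly by the two-point result. The main obstacle I anticipate is not the analyticity per se but the careful treatment of the distributional products $\Theta(-p^1_0 p^1_1)$ and the coincident singularities at $p^i_0 = 0$, $p^i_1 = 0$ in \eqref{eq_timeevolutionleftright}: one must check that multiplying the Schwartz kernel $\widetilde\Delta_{\om_N,n}$ by the discontinuous exponential is a legitimate operation and commutes with the contour shift $t\mapsto t+i\tau$, which is where the hypothesis on the ``Heaviside $\times$ continuous'' nature of the singularities (and the Steinmann-scaling-degree prescription in the footnote) does the real work. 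A secondary subtlety is verifying that the $n$-point function of $\om_N$, which a priori is only a tempered distribution, is genuinely a Schwartz distribution in all variables so that $\alpha^{\beta_1,\beta_2,\mu_1,\mu_2}_t$ applies — but this follows from the explicit Gaussian/quasi-free form and the smoothness of $W_N$ established in Theorem \ref{prop_NESS}.
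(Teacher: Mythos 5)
Your proposal is correct and follows essentially the same route as the paper's proof: reduce the KMS condition to the two-point function using that $\om_N$ is Gaussian and charge-conjugation invariant, and then verify analyticity and the boundary values directly from the explicit momentum-space form \eqref{eq_NESS2pf} together with the definition \eqref{eq_timeevolutionleftright} of the modewise dynamics. The paper states exactly this in two lines; you simply spell out the Bose-factor bookkeeping (watch the stray missing factors of $i$ in your real-$t$ phase factors, which should be oscillatory, not damping) and the distributional caveats that the paper delegates to its footnote.
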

\begin{proof} Since $\om_N$ is Gaussian and charge-conjugation invariant, it is necessary and sufficient to prove the KMS condition for the two-point function $\om_N(\overline{\Phi}(x_1)\Phi(x_2))$; this follows from the form of $\Delta_{+,N}$ given in \eqref{eq_NESS2pf}.
\end{proof}
It is possible to extend the validity of this statement to $m=0$ and $\om_{N,\Psi_N}$ with a little more effort. 

We shall show in the following sections that, at least for $\mu_1=\mu_2=0$, $\om_N$ is stable with respect to perturbations which are ``small'' in a suitable sense. Presumably this can be linked to the fact that $\alpha^{\beta_1,\beta_2}_t\doteq \alpha^{\beta_1,\beta_2,0,0}_t$ commutes with $\alpha_t \doteq \alpha^0_t$.

\subsection{Reasons for non-equilibrium asymptotics and inhomogeneous models}
\label{sec_inhomogeneous}
The expectation prevalent in the literature is that a generic initial state will evolve into a so-called generalised Gibbs ensemble (GGE) \cite{Rigol}, where entropy is maximised for all conserved quantities $I_i$ of system. Formally, a GGE corresponds to a density matrix of the form
$$
\rho = \frac{1}{Z} \exp(-\sum \lambda_i I_i)\,.
$$
If only the Hamiltonian $H$ (and the particle number $N$) are conserved, one expects to have proper thermalisation. 

Formally, the statement made in Proposition \ref{prop_NESSKMS} is tantamount to saying that $\om_N$ corresponds to a density matrix
$$
\rho_N =  \frac{1}{Z_N} \exp(-H_N)
$$
with $H_N$ being the generator of $\alpha^{\beta_1,\beta_2,\mu_1,\mu_2}_t$, i.e. $\frac{d}{dt} \alpha^{\beta_1,\beta_2,\mu_1,\mu_2}_t(A)|_{t=0} = i [H_N,A]$ for suitable observables $A$. \cite{Doyon} have computed $H_N$ in terms of local fields for vanishing chemical potentials and real scalar fields $\phi$. They find
\beq\label{eq_WN}
H_N = \frac{\beta_1 + \beta_2}{2} H + \frac{\beta_1 -\beta_2}{2}(P_1 + Q)
\eeq
$$
H = \int_{\bR^{d-1}} d\bx \;\wick{T_{00}(\bx)}\qquad P_1=\int_{\bR^{d-1}} d\bx \;\wick{T_{01}(\bx)}\qquad Q=\int_{\bR^{2 d-1}} d\bx_1 d\bx_2 \;\wick{q(\bx_1,\bx_2)}\,,
$$
where
$$
T_{\mu\nu} = \partial_\mu \phi \partial_\nu \phi -\frac12 g_{\mu\nu}(\partial_\rho \phi \partial^\rho \phi+ m^2 \phi^2)
$$
and $q$ is quadratic in the fields and non-local, i.e. it is non-vanishing for non-coinciding arguments. One may check that $H$ and $P_1$ are well-defined as a derivation on local observables, i.e. elements of $\ABU$. This means that for all $A\in\ABU$, $[H,A]\in\ABU$ and  $[P_1,A]\in\ABU$. However, $H_N$ is not a derivation on $\ABU$ unless $Q=0$ which is the case only for $m=0$ and $d=2$. Notwithstanding, it is not difficult to see that $[\frac{d}{dt}\alpha_t(H_N),A] = 0$ for all $A\in\ABU$ and $\alpha_t \doteq \alpha^0_t$, cf. \eqref{eq_timeevol}. Essentially, this is the case because $\alpha_t$ commutes with $\alpha^{\beta_1,\beta_2,\mu_1,\mu_2}_t$.

$H$ is the Hamiltonian, i.e. the generator of $x^0$-translations, while $P_1$ is the momentum in $x^1$-direction, the generator of $x^1$-translations. The interpretation is that both quantities are conserved because the linear dynamics is invariant under translations of spacetime. $Q$ is a non-local conserved quantity, which, as \cite{Doyon} show, is vanishing if and only if $d=2$ and $m=0$. The authors of \cite{Doyon} argue that the presence of $Q$ is due to the infinitely many conserved charges of a linear model. For example, formally, the particle number is conserved for each mode of the field. For this reason \cite{Doyon} conjectured that an initial state consisting of two semi-infinite reservoirs brought into contact will evolve into a proper equilibrium state for a non-linear model having only four-momentum as a conserved quantity. Yet, this thermalisation would likely not occur in the rest frame of the initial reservoirs, but in a different rest frame, if the non-linear dynamics considered is still invariant under translations in the $x^1$-direction. For this reason we shall investigate two toy-models with inhomogeneous linear dynamics in order to check whether breaking $P_1$-conservation improves the thermalisation behaviour.

To this end, we consider a complex scalar field with Klein-Gordon equation
$$
K \Phi = 0\,,\qquad K = \partial^2_0 - D + m^2 + U(x^1)\,.
$$
We shall analyse two possible examples for $U(x^1)$. The first case is a ``phase shift'' where $U$ is implicitly specified by providing the normalised and complete (weak) eigenfunctions of $-\partial^2_1 + U(x^1)$ as 
\beq\label{eq_phaseshift}
Y_{p_1}(x^1) = \frac{1}{\sqrt{2\pi}} e^{i p_1 x^1}(\Theta(-x) + \Theta(x) e^{i \delta})\,,\qquad p_1 \in \bR
\eeq
with a constant $\delta \in \bR$. Note that $\overline Y_{p_1} \neq Y_{-p_1}$. In fact, this case corresponds to a possible self-adjoint extension of $-\partial^2_1$, initially defined and symmetric on smooth functions of compact support in $\bR\setminus\{0\}$, to $L^2(\bR)$. The second case we shall consider is $U(x^1) = g \delta(x^1)$ where we choose for simplicity $\delta > 0$ to avoid the discussion of the bound state for $g<0$. The normalised and complete weak eigenfunctions for the repulsive $\delta$-potential are given in Appendix \ref{sec_convergencedelta}.

For both cases, we define an initial state $\om_{G,\Psi_G}$ in complete analogy to Section \ref{sec_NESShomogeneous}, recalling the discussion of KMS states for general spatial potentials in Section \ref{sec_KMS}. The correlation functions of KMS states for the two cases considered here are well-defined distributions. This follows directly from the definition in \eqref{eq_KMS2pf}, but can also be checked explicitly using the mode expansion \eqref{eq_KMS2pfexplicit} and inserting the form of the modes; in fact, one can check that correlation functions are even tempered distributions for the two cases considered here. This is essentially the case because the modes $Y_{p_1}$ in both cases are of the form ``sum of plane waves $\times$ Heaviside distributions''. The explicit form of $\Delta_{+,G}$ is given in \eqref{eq_conv1} and using arguments provided in Appendix \ref{sec_convergence2pf} it is manifest that $\Delta_{+,G}$ is a well-defined distribution also for the inhomogeneous models considered here. Similarly, $\Psi_G$ is well-defined in the sense of a distribution, and thus $\om_{G,\Psi_G}$ is well-defined and its asymptotic limit may be investigated. Unsurprisingly $\om_{G,\Psi_G}$ is not a Hadamard state because of the discontinuity at $x^1=0$ present in $U(x^1)$. However, this is not problematic because we shall need the Hadamard property of $\om_G$ only when discussing non-linear models, where we will only consider non-linear perturbations of a homogeneous linear model.

We collect the above statements and the result of the analysis of the long-time limit of $\om_{G,\Psi_G}$, carried out in Appendix \ref{sec_convergence2pf}.
\begin{theorem}The following statements hold for the phase-shift-model with weak eigenfunctions \eqref{eq_phaseshift} and the repulsive $\delta$-potential $U(x^1)= g\delta(x^1)$, $g>0$, if one of the conditions in \eqref{eq_lowdim} is satisfied.
\begin{enumerate}
\item The bidistribution $\Delta_{+,G}$ constructed as in \eqref{eq_initialstate}, with parameters as in \eqref{eq_betamu} and $\sigma_i$ defined in Proposition \ref{prop_sigma}, is well-defined and defines a charge-conjugation-invariant, Gaussian state $\om_G$ on $\ABU$, and the coherent state $\om_{G,\Psi_G}$ constructed as in \eqref{eq_initialcondensate} is a well-defined state on $\ABU$. 
\item For any admissible choice of data $\beta_1, \beta_2, \beta_3, \mu_1, \mu_2, \mu_3, \psi, \chi_1, c_1, c_2$ the states $\om_N$ and $\om_{N,\Psi_N}$ defined in \eqref{eq_defNESS} are well-defined on $\ABU$ and independent of $\chi_1$, $\psi$. $\om_N$ is an $\alpha^0_t$-invariant, charge-conjugation-invariant, Gaussian state with two-point function \eqref{eq_2pfphaseshift} for the phase-shift-model and \eqref{eq_2pfdelta} for the repulsive $\delta$-potential, respectively. $\om_{N,\Psi_N}$ is an $\alpha^\mu_t$-invariant coherent excitation of $\om_{N}$, i.e. 
$$
\om_{N,\Psi_N} = \om_N \circ \iota_{\Psi_N}\,,\qquad \Psi_N(x^0,x^1)=  \frac{c_1+c_2}{2} e^{i\mu x^0} Y_0(x^1)\,.
$$
\item In the case of the phase-shift-model, $\om_{N,\Psi_N}$ is also independent of $\beta_3$, $\mu_3$.
\end{enumerate}
\end{theorem}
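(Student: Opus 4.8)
The plan is to derive the three items by combining the abstract gluing machinery of Section~\ref{sec_gluing} --- which guides the construction and, with the sole exception of the Hadamard property, still applies here --- with the explicit long-time computation carried out in Appendix~\ref{sec_convergence2pf}. Throughout, one assumes that one of the conditions in \eqref{eq_lowdim} holds (so that the constituent KMS states are states on all of $\ABU$), and one exploits the feature that makes both toy models tractable: the spatial modes $Y_{p_1}$ are finite sums of plane waves times Heaviside distributions, so every bidistribution in sight is tempered and is given explicitly as a $d\bp$-integral.

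For item~1 I would first note that $-\partial_1^2 + U(x^1)$ is in each case a semi-bounded self-adjoint operator on $L^2(\bR)$ --- a self-adjoint extension of $-\partial_1^2$ on $C^\infty_0(\bR\setminus\{0\})$ for the phase shift, the standard repulsive $\delta$-interaction for $U=g\delta$, $g>0$ --- so $L=-D+m^2+U$ is self-adjoint and semi-positive and \eqref{eq_KMS2pf} defines the $\Delta_{+,\beta_i,\mu_i}$. Inserting the explicit modes turns this into the mode expansion \eqref{eq_KMS2pfexplicit}, manifestly a tempered bidistribution; since the causal propagator $\Delta$ of the same $K$ has the analogous expansion, one checks --- this is the computation deferred to Appendix~\ref{sec_convergence2pf} --- that $(\sigma_i\otimes\sigma_j)\Delta_{+,\beta_k,\mu_k}$ is again a well-defined tempered bidistribution, so that $\Delta_{+,G}$ in \eqref{eq_initialstate} makes sense; charge-conjugation invariance is built into the Ansatz. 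Positivity of $\om_G$ I would obtain as in the proof of Theorem~\ref{prop_Hadamardstate}: that argument uses only $\sigma_i^\ast=g\chi_i\Delta$ applied to test functions, the identity $\sigma_1+\sigma_2=\mathrm{id}$ on solutions, and the domination $\Delta_{+,\beta_i,\mu_i}\ge\Delta_{+,\beta_3,\mu_3}$ of Proposition~\ref{prop_domination}, whose proof in turn invokes only monotonicity of $x\mapsto(1-e^{-x})^{-1}$ and $x\mapsto(e^x-1)^{-1}$ through the functional calculus of $L$ and so does not require $V$ to be smooth; the one point to check is that $\sigma_i^\ast f$ now has a mild singularity at $x^1=0$ instead of being smooth, but it still lies in the form domains on which the domination extends by continuity. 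Finally $\om_{G,\Psi_G}=\om_G\circ\iota_{\Psi_G}$ is again a state because $\iota_{\Psi_G}$ is a $\ast$-automorphism and $\Psi_G=\sigma_1\Psi_1+\sigma_2\Psi_2$ is a distributional solution of $K\Psi_G=0$ (each $\Psi_i=c_ie^{i\mu x^0}Y_0(x^1)$ solves $K\Psi_i=0$ since $\mu=\pm m$ and $(-\partial_1^2+U)Y_0=0$), with the understanding that for the repulsive $\delta$-potential there is no bounded zero mode and hence the condensate is trivial.

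For item~2, since $\om_G$ is Gaussian it suffices to show $\Delta_{+,G}(x^0+t,\bx;y^0+t,\by)\to\Delta_{+,N}(x,y)$ in the sense of distributions, with $\Delta_{+,N}$ equal to \eqref{eq_2pfphaseshift} resp.\ \eqref{eq_2pfdelta}; I would carry this out as a family of stationary-phase and Riemann--Lebesgue type estimates on the $d\bp$-integral, along the lines already anticipated in Section~\ref{sec_gluing}. Each $\sigma_i$ propagates the thickened data $g\Phi$, which is time-supported in $(-\epsilon,\epsilon)$, after restricting it in $x^1$ to $\supp\chi_i$; translating by $t\to\infty$ pushes the influence of both the time-localised factor $g$ and of the $\chi_1/\chi_2$ interface out along the light cone, so that in the limit only the translation-invariant bulk part of each $\Delta_{+,\beta_i,\mu_i}$ survives, the Heaviside factors collapsing via $\Theta(x)\Theta(-x)=0$ and $\Theta(x)^2=\Theta(x)$; this gives at once the claimed form of $\Delta_{+,N}$ and its independence of $\psi$ and $\chi_1$. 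Convergence of the condensate, $\om_{G,\Psi_G}\to\om_{N,\Psi_N}$ as in \eqref{eq_defNESS}, then reduces to $e^{-i\mu t}\Psi_G(x^0+t,x^1)\to\frac{c_1+c_2}{2}e^{i\mu x^0}Y_0(x^1)$ --- the same analysis applied to the one-point datum built from the $\Psi_i$ --- while $\overline{\Psi_G}$ enters only lower-order, trivially convergent terms of the coherent $n$-point functions.

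Item~3 is where the two models diverge, and the limit of the cross term $(\sigma_1\otimes\sigma_2+\sigma_2\otimes\sigma_1)\Delta_{+,\beta_3,\mu_3}$ is the one delicate point. For the phase shift the defect does not reflect --- $|Y_{p_1}|$ is constant and a mode of fixed $(p_1,s)$ carries a single sign of $x^1$-group velocity --- so a contribution to the cross term at large $t$ near fixed $(\bx,\by)$ would require that same mode both to reach a neighbourhood of $\bx$ from $\supp\chi_1$ (far left), which selects rightward modes, and to reach a neighbourhood of $\by$ from $\supp\chi_2$ (far right), which selects leftward modes; the resulting phase has no stationary point and the contribution decays to zero, so $\Delta_{+,N}$ --- hence $\om_N$, hence $\om_{N,\Psi_N}=\om_N\circ\iota_{\Psi_N}$ whose datum $\Psi_N$ does not involve $\om_{\beta_3,\mu_3}$ either --- is independent of $\beta_3,\mu_3$. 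For the repulsive $\delta$-potential the incident, reflected and transmitted components of a single mode carry both signs of group velocity, the above cancellation fails, and $\beta_3,\mu_3$ genuinely persist in \eqref{eq_2pfdelta}. The main obstacle throughout is not conceptual but is the bookkeeping in Appendix~\ref{sec_convergence2pf}: turning the ``singularities escape to infinity'' heuristic into honest decay estimates uniform over the test function while managing the products of Heaviside distributions generated both by the modes $Y_{p_1}$ and by the maps $\sigma_i$.
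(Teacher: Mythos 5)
Your proposal follows essentially the same route as the paper: part 1 is the gluing machinery of Section~\ref{sec_gluing} with the Hadamard requirement dropped (positivity via $\sigma_i^*=g\chi_i\Delta$ and the domination of Proposition~\ref{prop_domination}, as in Theorem~\ref{prop_Hadamardstate}), and parts 2--3 rest on the explicit mode-expansion asymptotics of $\Delta_{+,G}$ and $\Psi_G$, which is exactly the content of Appendices~\ref{sec_convergence2pf} and \ref{sec_convergencecondensate}; your group-velocity explanation of why the $\beta_3,\mu_3$-dependence disappears for the phase shift but persists for the $\delta$-potential is the same mechanism the paper finds (the surviving Heaviside factors force $s_4=s_5$ in the phase-shift case, while reflection keeps the cross-correlations alive for the $\delta$-potential). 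The only difference is technical rather than structural: where you would use stationary-phase/Riemann--Lebesgue estimates, the paper smears with spatial test functions and uses Paley--Wiener bounds with ``butterfly'' contour/residue arguments to obtain the $O(t^{-1})$ errors and the explicit limits \eqref{eq_2pfphaseshift} and \eqref{eq_2pfdelta}.
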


The form of $\Delta_{+,N}$ for the phase-shift model differs from the homogeneous case only by the form of the modes $Y_{p_1}(x^1)$, cf. \eqref{eq_2pfphaseshift}. The situation is different for the $\delta$-potential, where $\Delta_{+,N}$ still depends on $\beta_3$ and $\mu_3$ (but not on the initial profiles $\chi_1$ and $\psi$). The reason for this is the strong singularity of the potential. This can be understood by comparing with the perturbative treatment of non-linear models in the following sections. Anticipating some of the corresponding results, we shall find that the initial state for a quadratic perturbation potential which is smooth in space will converge to a final NESS which is independent of the intermediate / cross-correlation state $\om_3$. The potential $U(x^1)=g\delta(x^1)$ can be considered as a quadratic perturbation by expanding the exact result for $\Delta_{+,N}$ perturbatively in the coupling constant $g$. However, the singularity of this potential prevents the vanishing of $\om_3$-dependent terms. This will become manifest in the discussion in the following sections. On the other hand, the singularity in the phase-shift model is weak enough to allow for these terms to vanish asymptotically.

In the phase-shift case one can argue as in the homogeneous case that $\om_N$ is a KMS state with respect to a time-translations that shifts left- and right-movers independently, or, equivalently a KMS state at different temperatures for left- and right-movers. We have not computed the corresponding generator, though it is rather obvious that $\om_N$ is not a proper KMS state in the initial rest-frame. We don't see any indication that it is closer to a KMS state than the limit NESS in a homogeneous case. Unsurprisingly, the linear dynamics is insufficient to achieve proper thermalisation in the initial rest frame, in spite of the inhomogeneities present.

A direct thermodynamic interpretation of $\om_N$ in the case of the $\delta$-potential seems difficult on account of the rather complicated form of its two-point function displayed in \eqref{eq_2pfdelta}. In contrast to the rather weak phase-shift-inhomogeneity, the $\delta$-potential reflects initially left- or right-moving modes so strongly, that there are not thermodynamically separated any longer at large times, which is essentially what we wanted to achieve in the first place. However, we have not been able to show that the limit NESS is in any way closer to being a proper KMS state in the initial restframe than the homogeneous limit NESS. We leave this for future research.

\section{NESS for interacting models}
\label{sec_NESSinteracting}

\subsection{Preparation of the initial state}

We shall now approach the central result of this work, the construction and analysis of interacting NESS in perturbation theory. Our analysis will be restricted to massive real scalar fields $\phi$ in $d=4$ spacetime dimensions with quartic interaction $V = \int_{\bR^4} d^4x\; \phi(x)^4$. We consider only massive fields for simplicity, but without loss of generality. Massless fields can be treated in a similar fashion as massive ones on account of the thermal mass which arises from a convergent resummation of suitable tadpole graphs, as rigorously proven for KMS states in \cite{DHP}. Although the results of the latter paper do not straightforwardly apply to our situation because we are not dealing with proper KMS states, we do not see any difficulties in extending the results of \cite{DHP} to our case.

As in the free case, the first step is to construct an initial state which corresponds to bringing two semi-infinite reservoirs at different inverse temperatures $\be_1$, $\be_2$ into contact at the $x^1=0$ surface. As explained in the review of pAQFT in 
section \ref{sec_paqft}, the Hadamard property of the state of the linear theory on which perturbation theory is modelled is essential in order for all perturbative constructions to be rigorously defined from the outset. For this reason we shall not consider a sharp contact, but a smooth transition region determined by $-\epsilon < x^1< \epsilon$ like in the analysis of linear models in Section \ref{sec_NESSlinear}. The profile of the transition is prescribed by choosing two smooth functions on $\bR$ with $\chi_1 + \chi_2 = 1$ and $\chi_1(x^1) = 1$ for $x^1 < -\epsilon$, $\chi_1(x^1) = 0$ for $x^1 > \epsilon$. As argued in Section \ref{sec_NESSlinear}, we need to prescribe the  initial state in the transition region and for cross-correlations as well in order to have an positive (unitary) total initial state. For the same technical reasons as in the latter section, we choose this state to be KMS at inverse temperature $\beta_3 \ge \beta_i$, $i=1,2$, though physically it should be possible to consider any (Hadamard) state for that purpose. In order to characterise the initial state entirely, we have to prescribe all its correlation functions. This was particularly simple in linear models, where we considered Gaussian initial states which are completely defined by prescribing their two-point function, or coherent excitations thereof, which require in addition the definition of a one-point function. In the case of interacting models the situation is more complicated because states of these models are highly non-Gaussian. Thus, all correlation functions of all observables need to be prescribed individually. It is not obvious at all how to do this at once and in fact we shall not be able to give a closed formula like \eqref{eq_initialstate} in the case of linear models. Instead we shall define an initial state by prescribing Feynman rules for computing all correlation functions at any order in perturbation theory.

 In order to arrive at these rules, we consider an initial state $\omG$ constructed from the data $\chi_1$, $\psi$, $\beta_1$, $\beta_2$, $\beta_3\ge \max(\beta_1,\beta_2)$, where $\psi$ is a smooth function with the properties of $\chi_2 = 1 - \chi_1$, though evaluated on the time-axis $x^0$. We recall that $\psi$ quantifies a region in time in which the initial data of the state $\omG$ is specified. We consider the state $\omG$ for a mass term $m^2 + \delta$ and expand all correlation functions of $\omG$ perturbatively in $\delta$. The result should correspond to an initial state $\omGV$ in perturbation theory with interaction potential $V =\frac\delta 2  \int_{\bR^4} d^4x\; f(x) \phi(x)^2$ in the adiabatic limit $f\to 1$. The terms in the corresponding expansion can be translated into Feynman graphs by considering a similar expansion for a KMS state with mass $m^2 + \delta$ and comparing with the Feynman graphs of the perturbatively defined KMS state which arise from \eqref{eq_KMSFeynman}. Following this path, we get the Feynman rules for an initial state $\omGV$ for a quadratic perturbation potential $V$. Generalising these rules in a straightforward way to any interaction $V$, we arrive at the following definition of $\omGV$ which we state as a theorem.
\begin{theorem} \label{prop_initialinteracting}Let $\omG$ be the Gaussian Hadamard state constructed from the data $\chi_1$, $\psi$, $\beta_1$, $\beta_2$, $\beta_3\ge \max(\beta_1,\beta_2)$ as in \eqref{eq_initialstate} with $\sigma_i$, $i=1,2$ defined in Proposition \ref{prop_sigma}. Let $V \doteq \int_{\bR^4} d^4x\; f(x)v(\phi(x))$ with $f(x)\doteq \psi(x^0) h(\bx)$, $h\in C^\infty_0(\bR^3)$ and $v$ any polynomial. We define a state $\omGV$ on $\AV$ by defining $\omGV(\moller(A_1)\star\dots\star \moller(A_n))$ for arbitrary $A_1,\dots,A_n \in \FTloc$ as follows.
\begin{enumerate}
\item Take all Feynman graphs of $\ombV(\moller(A_1)\star\dots\star \moller(A_n))$ resulting from \eqref{eq_KMSFeynman}, where $\beta$ is considered as a dummy variable.
\item These graphs have external vertices from the $A_i$, internal vertices from the $V$ in  $\moller(A_i)$, and internal vertices from $K_V$ defined in \eqref{eq_KV}.
\item For any connected subgraph $\gamma$ a of such a graph, consider all subgraphs of $\gamma$ which are connected and contain only internal vertices of $K_V$-type, where all lines connected to these vertices are taken as part of such a subgraph, i.e. the external vertices of these subgraphs are all from $\moller(A_i)$.
\item Let $F_\gamma(x_1, \dots, x_k)$ denote the product of the amplitudes of these subgraphs, where $k$ is the total number of lines ending at all external vertices. This means that if e.g. two lines end at the external vertex $x$, then $x$ appears twice in $F_\gamma(x_1, \dots, x_k)$. Decompose $F_\gamma$ as
$$
F_\gamma = (\si_1+\si_2)^{\otimes k} F_\gamma = F_{\gamma,1}+F_{\gamma,2}+F_{\gamma,3}\,,
$$
$$
F_{\gamma,i} \doteq \si^{\otimes n}_i F_\gamma\,,\quad i=1,2\,,\qquad F_{\gamma,3}\doteq  F_\gamma-F_{\gamma,1}-F_{\gamma,2}\,.
$$
\item Replace all propagators in $F_{\gamma,i}$ by those of $\ombi$, replace all other propagators in $\gamma$ by those of $\omG$, perform the simplex integral for $\beta_i$  and sum over $i$.
\end{enumerate}
 The state $\omGV$ defined in this way is well-defined. It is a KMS state at inverse temperature $\beta_1$ ($\beta_2$) with respect to $\atV$ for all observables with support contained in $\mM_1$ ($\mM_2$) with
 $$
 \mM_{1/2} \doteq \big(\bR^4 \setminus  J\left((-\epsilon,\epsilon) \times \supp(\chi_{2/1}) \times \bR^2\right)\big)\cap ((\epsilon,\infty)\times \bR^3)\,,
 $$
 see Figure \ref{fig_M1M2}.
\end{theorem}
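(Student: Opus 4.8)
The plan is to recognise the combinatorial prescription of steps~1--5 as the expansion of a single closed expression, and then to transfer the required properties from the free glued state $\omG$ (a genuine Hadamard state by Theorem~\ref{prop_Hadamardstate}) and from the interacting KMS states of~\cite{FredenhagenLindner}. Let $U_G$ be the ``glued cocycle'' obtained from $U_V(i\beta)$ of~\eqref{eq_KV} by applying the decomposition $(\sigma_1+\sigma_2)^{\otimes k}$ of step~4 to every connected $K_V$-cluster and then evaluating the $\sigma_1$-clusters with $\ombI$-propagators on the $\beta_1$-simplex, the $\sigma_2$-clusters with $\ombII$-propagators on the $\beta_2$-simplex, and the remaining $F_{\gamma,3}$-piece by subtraction, all other lines carrying $\omG$-propagators. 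One checks directly that, for $A=\moller(A_1)\star\dots\star\moller(A_n)$,
\[
\omGV(A)=\frac{\omG\!\left(A\star U_G\right)}{\omG\!\left(U_G\right)}
\]
reproduces exactly the graphs of steps~1--5 once $\omG^{\mathrm{conn.}}$ is expanded as in~\eqref{eq_KMSFeynman}. In this form linearity, normalisation $\omGV(\b1)=1$, and independence of the representation of $A\in\AV$ are manifest, so the prescription genuinely defines a functional on $\AV$. The motivating quadratic case $v(\phi)=\phi^2$, for which $\omGV$ agrees order by order in $\delta$ with the Hadamard state $\omG$ built from the mass term $m^2+\delta$, fixes the rules and already establishes well-definedness and positivity when $v$ is quadratic.

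\textbf{Well-definedness for general $v$.} Every propagator that occurs --- $\Delta_{+,\omG}$, the KMS two-point functions $\Delta_{+,\beta_i}$, and the associated Feynman and retarded/advanced kernels --- is of Hadamard or anti-Hadamard type, and by Proposition~\ref{prop_Hadamard} the maps $\sigma_i\otimes 1$, $1\otimes\sigma_i$, $\sigma_i\otimes\sigma_j$ leave these wave front sets unchanged; hence the decomposed amplitudes $F_{\gamma,i}$ and the glued propagators are again of (anti-)Hadamard type. The microlocal estimates used in Section~\ref{sec_paqft} to define the time-ordered product then make every graph amplitude a well-defined distribution, with coinciding-point singularities absorbed by the renormalisation of $\cdot_T$ already fixed there. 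Convergence of the simplex integrals is precisely where the $\sigma$-decomposition does its work: after step~4 each $K_V$-cluster of type $i$ carries only $\beta_i$-KMS propagators on its imaginary-time-translated legs, and $\alpha_{iu}(K_V)$ for $u\in[0,\beta_i]$ lies in the analyticity strip $S_{\beta_i}$, so the integrand is holomorphic and bounded on the simplex and the convergence argument of~\cite{Lindner,FredenhagenLindner} carries over verbatim; the $\omG$-propagators on the remaining lines are fixed smooth modifications of the vacuum ones and play no role. The formal-power-series structure in the coupling and in $\hbar$ is automatic, and, exactly as in~\cite{FredenhagenLindner}, one checks that the result is independent of $\psi$ and of $\epsilon$ for observables with $x^0>\epsilon$.

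\textbf{Positivity.} This is the main obstacle, since the rules display $\omGV$ only graph by graph. The plan is to run the positivity argument of~\cite{FredenhagenLindner} on the closed form above: $\omG$ is itself a positive Hadamard state, and because $\omG$ is KMS at $\beta_i$ for the free dynamics $\at$ on $\mM_i$ (Section~\ref{sec_NESShomogeneous}), one expects $U_G$ to inherit, sector by sector, the cocycle and positivity properties of $U_V(i\beta_i)$ that realise $\omega^V_{\beta_i}$ as a positive functional --- concretely, a realisation of $\omGV$ as a formal vector state $\langle\xi^G_V|\,\pi_{\omG}(\cdot)\,|\xi^G_V\rangle$ in the GNS representation of $\omG$. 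The genuine difficulty is that $U_G$ does not decompose as cleanly as $U_V(i\beta)$, because of the $F_{\gamma,3}$ cross-terms and because the KMS property of $\omG$ holds only region by region; one therefore has to control these cross-terms by causality, or else to deduce positivity on $\mM_1$ and $\mM_2$ from the identification in the next step and establish it on the causal shadow of the seam region by a separate direct estimate.

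\textbf{KMS on $\mM_1$ and $\mM_2$.} For $A$ supported in $\mM_1$ one uses the explicit supports: $\mM_1$ is causally disjoint from $(-\epsilon,\epsilon)\times\supp(\chi_2)\times\bR^2$, whence, exactly as in the proof of Theorem~\ref{prop_Hadamardstate} where $\sigma_2^{*}=g\chi_2\Delta$, one has $\sigma_2^{*}f=0$ for every $f$ supported in $\mM_1$. Consequently every line of a $K_V$-cluster ending on an external $A$-vertex is forced into the $\sigma_1$/$\beta_1$ sector, and by the causal factorisation $\mR_{V+V'}=\mR_{V}$ for $V'\gtrsim A$ the internal $V$-insertions that survive are likewise confined. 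Since $\mM_1\subset\{x^0>\epsilon\}=\{\psi=1\}$ and, in the adiabatic limit, $V$ is invariant under spatial translations and so cannot reintroduce a left--right asymmetry, the $F_{\gamma,2}$ and $F_{\gamma,3}$ contributions cancel and the surviving graphs are precisely those that~\cite{FredenhagenLindner} obtain for the interacting $\beta_1$-KMS state; thus $\omGV$ agrees with $\omega^V_{\beta_1}$ on observables localised in $\mM_1$, and in particular satisfies the KMS condition for $\atV$ at $\beta_1$ and $t>0$. The argument for $\mM_2$ is identical with $1\leftrightarrow 2$. The cancellation of the $F_{\gamma,2}$ and $F_{\gamma,3}$ terms is the technical heart of this step; it parallels the~\cite{FredenhagenLindner} proof that $\ombV$ is independent of $\psi$ and $\epsilon$ and is carried out with the same causality estimates. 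I expect this cancellation and the positivity step to be the two places that require real care.
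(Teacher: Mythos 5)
There are two genuine gaps. First, your whole construction rests on a closed formula $\omGV(A)=\omG(A\star U_G)/\omG(U_G)$ with a ``glued cocycle'' $U_G$, but such an object is not available: the decomposition $(\si_1+\si_2)^{\otimes k}$ in step 4 acts on the legs of the $K_V$-clusters that end on vertices of $\moller(A_i)$, i.e.\ the way the $\beta_1$/$\beta_2$/cross sectors are assigned depends on how the cluster attaches to the observable $A$, so it cannot be encoded in a single, $A$-independent algebra element that you then insert into $\omG$. The paper explicitly states that no closed formula analogous to \eqref{eq_initialstate} can be given for $\omGV$ and therefore defines the state graph by graph; the only closed expression of the kind you write down appears later, in \eqref{eq_formalNV}, for the limit state $\omNV$, and even there only formally and with caveats. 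Since linearity, normalisation and well-definedness on $\AV$ are all derived in your argument from this unproven identity, that part of the proposal does not stand as written (the paper instead checks directly, via Proposition \ref{prop_Hadamard} and the Hadamard property of $\omG$, that every amplitude produced by the rules is a well-defined distribution — which you also do, and that part is fine).

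Second, positivity is left open in your proposal: you say it ``has to be controlled'' via the cross-terms $F_{\gamma,3}$ or by a separate estimate near the seam, but no argument is given. The paper's point is that none of this is needed, because one is working with formal power series: a formal power series $a$ is of the form $|b|^2$ if and only if its lowest-order term is positive and of even order, and the lowest order (zeroth order in $V$) of $\omGV(A^*\star A)$ is just the expectation value in $\omG$, which is nonnegative since $\omG$ is a genuine state on $\AOon\supset\AV$. So positivity is an elementary consequence of the perturbative setting, not a hard GNS/cocycle problem. Finally, on the KMS part your mechanism is broadly the right one ($\si_2^* f=0$ for $f$ supported in $\mM_1$, causal factorisation of $\moller$), but you should not invoke the adiabatic limit or spatial translation invariance of $V$: the theorem is stated for compactly supported $h$, and the missing ingredient the paper uses instead is that for disconnected graphs of $\ombV$ the simplex integration over $\beta S_n$ factorises into products of lower simplex integrals, which is what lets the surviving $\beta_1$-sector graphs reassemble into those of $\omega^{V}_{\beta_1}$.
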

\begin{proof} We first need to check that the Feynman rules yield well-defined distributions. This is the case because the external vertices of $F_\gamma$ correspond to one slot of $\Delta_{+,\beta_i}$ propagators. Proposition \ref{prop_Hadamard} tells us that the application of $\sigma_i$ to these propagators is well-defined and preserves their Hadamard property. The Feynman and $\Delta_{\pm}$ propagators of $\omG$ have the correct wave front set as well because $\omG$ is a Hadamard state. Thus, all Feynman amplitudes we define are proper distributions and they can be integrated with the test functions at their vertices. 

Positivity of $\omGV$ follows from the fact that a formal power series $a$ is positive, i.e. $a = |b|^2$ with $b$ a formal power series, if and only if the lowest term in $a$ is of even order and positive. This is the case for all expectation values $\omGV(A^*\star  A)$ with $A\in \AV$ because the lowest order in $V$ of $\omGV(A^*\star  A)$ is the zeroth order and because $\omG$ is a state on $\AOon \supset \AV$.

The fact that $\omGV$ is a KMS state for observables with suitable support follows directly from the definition of $\omGV$, the causal properties of $\sigma_i$ and $\moller$, and the fact that for unconnected Feynman graphs of $\ombV$ the integration over $\beta S_n$ factorises into products of integrals over $\beta S_{k}$ and $\beta S_{n-k}$ for suitable $k$.
\end{proof}
\begin{figure}[ht]
\begin{center}
\includegraphics[width=.9\textwidth]{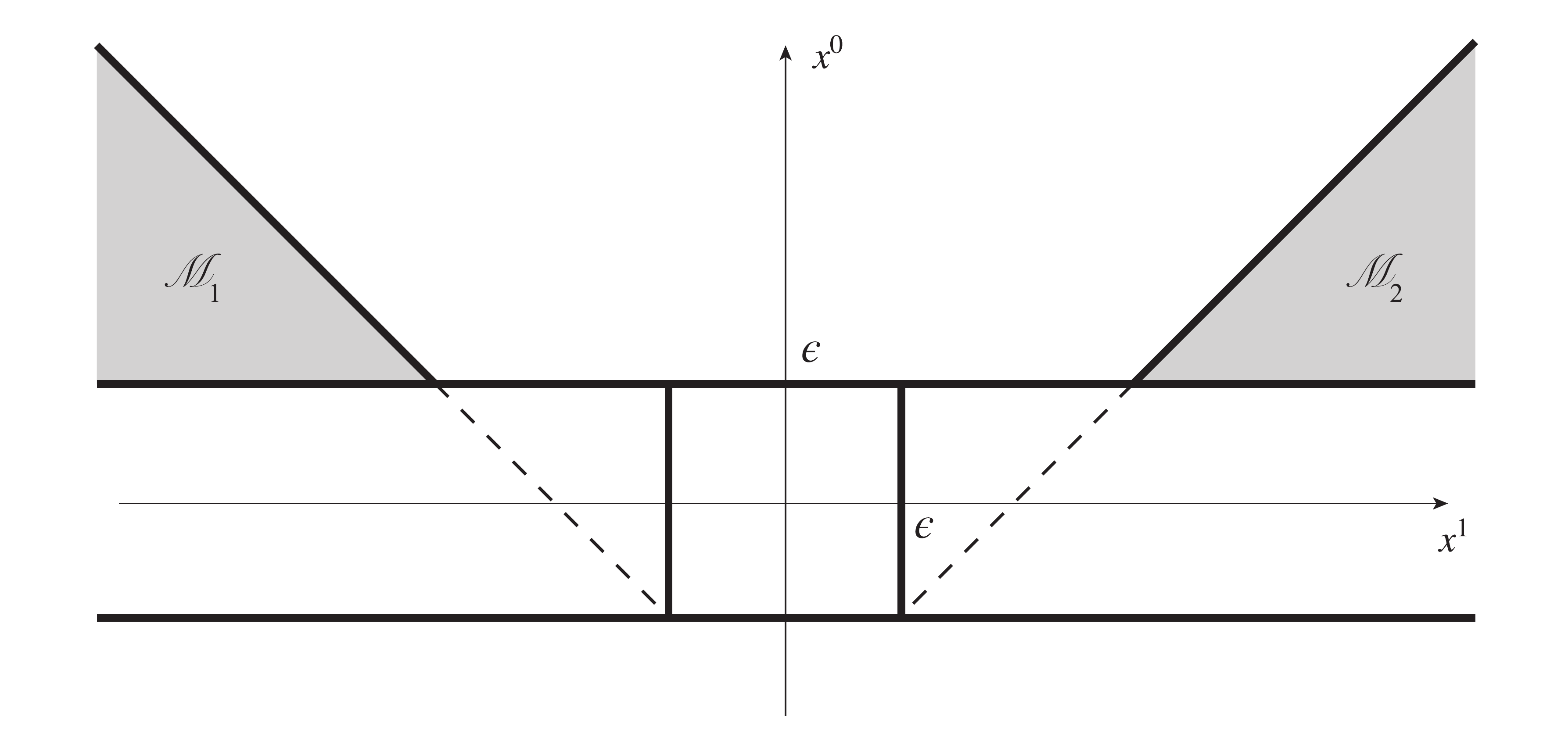}
\end{center}
\vspace{-.5cm}
\caption{The regions  $\mM_{1/2}$ in which $\omGV$ is a KMS state at inverse temperature $\beta_{1/2}$. \label{fig_M1M2}}
\end{figure}
We will provide examples of the Feynman graphs defined above in the next section when we discuss the long-time limit of $\omGV$.

The interacting KMS states $\ombV$ do not depend on the function $\psi$ which defines how the interaction is switched on. $\omGV$ is independent of $\psi$ only for observables localised in $\mM_{1/2}$ for which it is KMS. It depends on $\psi$ for all other observables, which is not surprising because it is not even $\atV$-invariant for these observables.

Note that we have not included an adiabatic limit $h\to 1$ in the definition of $\omGV$. We expect that this adiabatic limit exists at all orders, it clearly does for all observables localised in $\mM_{1/2}$. However, a proof of the existence of the adiabatic limit of $\omGV$ for all observables is considerably more involved than the same proof for $\ombV$ in \cite{FredenhagenLindner}, and we shall leave it for future research. We shall, however, prove the existence of this limit at first order in perturbation theory in the next section.

\subsection{Convergence of the initial state to a NESS}

We expect that the adiabatic limit of the initial state $\omGV$ constructed as in Theorem \ref{prop_initialinteracting}, but for a quadratic interaction $V =\frac\delta 2  \int_{\bR^4} d^4x\; f(x) \phi(x)^2$, converges, at all orders in $\delta$, to a limit NESS $\omNV$  which is equal to the NESS $\omN$ for mass $m^2 + \delta$ perturbatively expanded in $\delta$. We shall now proceed to prove the convergence of $\omGV$ to a limit NESS for a quartic interaction. Our proof will be limited to first order in perturbation theory, and we shall leave a proof to all orders for future research. However, we shall indicate the potential obstructions for an all-order proof and we shall sketch arguments for the absence of these obstructions. 

We shall prove the convergence of the initial state (at first order) both with and without the adiabatic limit. Our motivation for doing so is two-fold. One the one hand, a quartic interaction with a spatial cutoff is both non-linear and breaks translation invariance in all spatial directions including the $x^1$-direction. In view of the discussion in Section \ref{sec_inhomogeneous} about the causes for $\omG$ not converging to a proper KMS state for linear dynamics, there are reasons to hope that an inhomogeneous non-linear dynamics thermalises the system better than a homogeneous one. On the other hand, we shall be able to use the results of our computations for the interaction with spatial cutoff in order to prove stability properties of the NESS under consideration.

\begin{figure}[ht]
\begin{center}
\includegraphics[width=.9\textwidth]{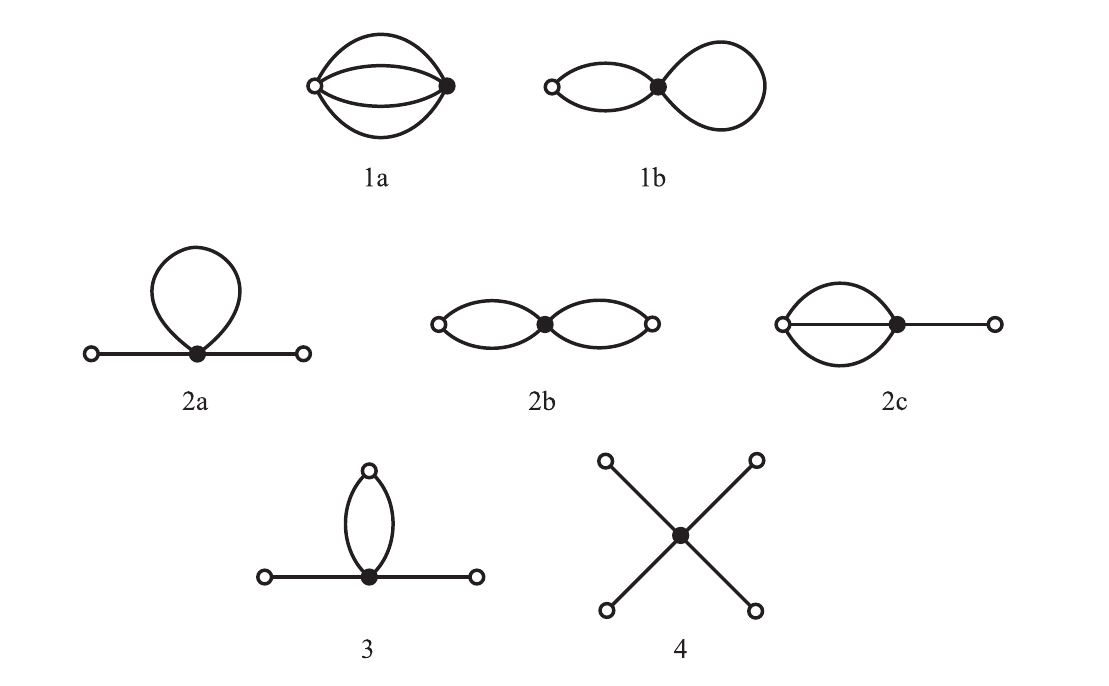}
\end{center}
\vspace{-.5cm}
\caption{The graph topologies we have to consider for convergence at first order. White (black) circles denote external (internal) vertices. \label{fig_graphs}}
\end{figure}

We will analyse the first-order convergence of the initial state graph-by-graph. In Figure \ref{fig_graphs} we enumerate all appearing graph topologies. Thereby we omit all graphs which contain lines that connect external vertices to external vertices. These lines are either $\Delta_{\pm,G}$ or $\Delta_{F,G}$ propagators. We already know that these propagators converge to $\Delta_{\pm,N}$ and $\Delta_{F,N}$ propagators, respectively. The convergence of the graphs we omit ensues from this observation and the convergence of the graphs we consider as follows. We have $W_G = \Delta_{\pm,G} - \Delta_{\pm,\infty} = \Delta_{F,G} -  \Delta_{F,\infty}$. $W_G$ is a smooth function which by Theorem \ref{prop_NESS} converges to the smooth function $W_N = \Delta_{\pm,N} - \Delta_{\pm,\infty} = \Delta_{F,N} -  \Delta_{F,\infty}$. The graphs we consider will, as we will show, converge in the sense of distributions that have wave front sets which make their pointwise multiplication with the aforementioned external $\Delta_{F/\pm,\infty}$ (as well as with any smooth function, of course), well-defined. Thus, in the convergence of graphs with such additional external lines, we do not have to deal with exchanging pointwise products of distributions with asymptotic limits, at least as far as these additional external lines are concerned.

Each graph in Figure \ref{fig_graphs} appears in various versions. Our notation for the various type of propagators and vertices is explained in Figure \ref{fig_graphs_propagators}. $\Delta_F$, $\Delta_\pm$ and $W$ are state-dependent, and we will sometimes indicate the corresponding states by adding appropriate letters to the lines.

We now state our convergence results. To this end, we recall the definition of a van Hove limit, which is the proper formulation for the spatial adiabatic limit. Following \cite{FredenhagenLindner} we define a van Hove-sequence of cutoff functions $\{h_n\}_{n\in\bN}$ by
$$
h_n \in C^\infty_0(\bR^3)\,,\qquad 0\le h_n \le 1\,,\qquad h_n(\bx)=1\,,\; |\bx|<n\,,\qquad  h_n(\bx)=0\,,\; |\bx|>n+1\,.
$$
We say that a functional $F:C^\infty_0(\bR^3) \to \bC$ converges to $L$ in the sense of van Hove, $\vlim_{h\to 1}F(h) = L$, if $\lim_{n\to\infty}F(h_n) = L$ for any  van Hove-sequence. 

\begin{theorem}\label{prop_NESSinteracting} Let $\omGV$ be the state on $\AV$ defined in Theorem \ref{prop_initialinteracting}.
\begin{itemize}
\item[(1)] For all $\chi_1$, $\psi$, $h$, $\beta_1$, $\beta_2$, $\beta_3\ge \max(\beta_1,\beta_2)$, and an arbitrary $A\in\mA_{V}$ localised in the region where $\psi = 1$ it holds that
$$
\om^{V(h)}_N(A) \doteq \lim_{t\to\infty} \omGV(\atV(A))
$$ 
is well-defined at first order in perturbation theory. The state $\om^{V(h)}_N$ defined in this way is independent of $\beta_3$, $\chi_1$ and $\psi$ and the convergence is $O(t^{-1})$.
\item[(2)] The statements in (1) also hold for 
$$
\om^{V}_N(A) \doteq \lim_{t\to\infty}  \vlim_{h\to 1}\omGV(\alpha^V_t(A))\,.
$$ 
\end{itemize}
\end{theorem}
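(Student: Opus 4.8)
The plan is to reduce the claim to a finite family of graph-wise estimates and to feed in the free-field convergence already established in Theorem~\ref{prop_NESS}. Using the intertwining relation $\atV(\moller(A))=\moller(\at(A))$ and the Feynman rules of Theorem~\ref{prop_initialinteracting}, the first-order part in the coupling of $\omGV(\atV(A))$ --- with $A\in\AV$ itself expanded to first order --- is a \emph{finite} sum of Feynman amplitudes whose external legs carry the defining testfunctions of $A$ shifted by $t$ in the time direction; the graph topologies are precisely those of Figure~\ref{fig_graphs}, the single internal vertex being either the genuine quartic vertex smeared with $\psi(x^0)h(\bx)$, or a $K_V$-vertex (at lowest order the quartic vertex smeared with $\dot\psi(x^0)h(\bx)$, complex-time translated by $iu$ with $u\in[0,\beta_i]$, and integrated over the simplex). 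It therefore suffices to show that each such amplitude converges, as $t\to\infty$ and at rate $O(t^{-1})$, to the corresponding first-order amplitude of the interacting NESS $\omNV$ (carrying, in part~(1), the spatial cutoff $h$ in the vertex), and that the transient, cross-correlation and cutoff pieces drop out.

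For every line joining two external legs we write $\Delta_{\pm/F,G}=\Delta_{\pm/F,\infty}+W_G$; the $\Delta_{\pm/F,\infty}$-part is translation invariant, while $W_G$ is smooth and, by the computation underlying Theorem~\ref{prop_NESS}, satisfies $W_G(x^0+t,\bx;\,y^0+t,\by)=W_N(x,y)+O(t^{-1})$ as smooth functions, locally uniformly together with all derivatives. In the graphs containing the quartic vertex one first checks, using that $\Delta_{\pm/F,\infty}$ decays away from coincidence for $m>0$, that the dominant $t$-behaviour comes from the region in which the vertex lies within bounded distance of $\at(\supp A)$; there all propagators adjacent to the vertex are short-range and are controlled by the preceding estimate, so that these graphs converge to the first-order vertex correction built from $\Delta_{\pm/F,N}$. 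In the graphs containing a $K_V$-vertex, that vertex stays localized near $x^0\in(-\epsilon,\epsilon)$ (the support of $\dot\psi$) while the external legs run off to time $\approx t$; the lines between them --- ordinary propagators, $\sigma_i$-smeared propagators, or the complex-time translates thereof --- all decay as $t\to\infty$ by the dispersive decay of solutions of the Klein--Gordon equation with smooth timelike-compact data, the relevant support property being $\supp\sigma_i\Phi\subset J((-\epsilon,\epsilon)\times\supp(\chi_i)\times\bR^2)$ from Proposition~\ref{prop_sigma}, together with the wave-front bounds of Proposition~\ref{prop_Hadamard}. Collecting the surviving contributions reproduces the first-order Feynman rules of $\omNV$, and the wave-front control of Proposition~\ref{prop_Hadamard} ensures that all pointwise products in the graphs --- in particular with the external $\Delta_{\pm/F,\infty}$-lines split off above --- remain well-defined distributions throughout.

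Independence of the limit from $\beta_3$ holds because the only $\beta_3$-dependent contributions come from the mixed terms $(\sigma_1\otimes\sigma_2+\sigma_2\otimes\sigma_1)\Delta_{+,\beta_3}$; in each of them the two slots are pinned to opposite sides of the transition region, so after the shift by $t$ their common support escapes to null infinity and the term vanishes --- exactly the mechanism of the free computation in Appendix~\ref{sec_convergence2pf}, carrying the same $O(t^{-1})$ rate. Independence from $\chi_1$ is of the same nature: the $\chi_1$-dependence lives entirely in the $\sigma_i$ and in the gluing, and at large $t$ it is transported along lightcones out of any fixed compact set, leaving only the asymptotic mode data $\beta(sp_1),\mu(sp_1)$. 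Independence from $\psi$, for $A$ localized where $\psi=1$, is inherited from the corresponding property of the interacting KMS states $\ombV$ proved in \cite{FredenhagenLindner}: all $\psi$-dependent data (the past temporal cutoff and the $K_V$-insertions) sit near $x^0\in(-\epsilon,\epsilon)$, hence are causally invisible to $\at(\supp A)$ for $t$ large. This proves part~(1). For part~(2) one proves in addition that the van Hove limit $h\to1$ exists at first order and commutes with $t\to\infty$: the only $h$-sensitive graphs are those with the quartic vertex integrated against $\psi(x^0)h(\bx)$, and for $m>0$ the spatial decay of $\Delta_{\pm/F,N}$ and of $\Delta_{\pm/F,G}$ makes these integrals converge absolutely as $\supp h\nearrow\bR^3$, uniformly in $t\ge1$, so the two limits may be exchanged; this is the first-order instance of the adiabatic-limit argument of \cite{FredenhagenLindner}, adapted to the glued setting.

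The main obstacle is the second step: one must carry out, graph by graph, the oscillatory-integral / stationary-phase estimates generalising those of Appendix~\ref{sec_convergence2pf} so as to establish simultaneously that the diagonal contributions converge to the NESS rules and that the cross-terms and $K_V$-graphs vanish, \emph{all} at the uniform rate $O(t^{-1})$, while keeping the wave-front sets under control so that the split-off external lines can be re-attached. It is precisely the proliferation and growing complexity of these graphs at higher perturbative order that keeps the present argument at first order; a uniform version of the same estimates would be expected to give convergence at all orders, as anticipated in the main text.
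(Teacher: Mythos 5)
There is a genuine gap, and it sits exactly where the paper locates the real difficulty: the graphs with a spectral ($K_V$) vertex and the adiabatic limit. Your argument treats the $K_V$-graphs as transients that decay as $t\to\infty$ because the vertex is pinned to $x^0\in(-\epsilon,\epsilon)$ while the external legs run off to time $\approx t$. That reasoning is sound for part (1), where the vertex is smeared with $\dot\psi(x^0)h(\bx)$ with $h$ of \emph{compact} support, so the vertex integration is over a fixed compact region and the $O(1/t)$ decay of each $\sigma_i$-smeared thermal propagator (the paper establishes only $t^{-1}$, not the vacuum $t^{-3/2}$, for these lines) kills the amplitude. But in part (2) the $K_V$-vertex is itself $h$-dependent, so your claim that ``the only $h$-sensitive graphs are those with the quartic vertex integrated against $\psi(x^0)h(\bx)$'' is false, and after $\vlim_{h\to 1}$ the spatial vertex integration is unbounded: the pointwise $O(1/t)$ decay no longer gives a vanishing contribution. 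In fact the spectral graphs must \emph{not} vanish in the adiabatic limit --- they carry the interacting thermal corrections of $\omNV$ (for $\beta_1=\beta_2$ they are what makes the limit reproduce $\ombV$), and the $\beta_3$-independence is not a support/propagation statement but follows in the paper from a momentum-space argument: after replacing $[\sigma_s\Delta_{+,\beta}]$ by the translation-invariant $\Delta^{(s)}_{+,\beta}$ up to $O(t^{-1})$, stationary phase plus the momentum $\delta$- and $\Theta$-factors force $s_1=\dots=s_n$, which eliminates the $\beta_3$ terms. Your proposal has no mechanism producing these surviving contributions, so as written it would yield the wrong limit state in part (2).

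A second, related gap concerns the dynamical-vertex graphs in the adiabatic limit. You argue for exchanging $h\to1$ with $t\to\infty$ via absolute convergence ``uniformly in $t$'' from the spatial decay of $\Delta_{\pm/F,N}$ and $\Delta_{\pm/F,G}$; but the thermal parts decay only like $|x_1|^{-1}$ in the $x^1$-direction (Appendix C), which is not integrable, so absolute convergence does not follow, and more importantly this bypasses the central obstruction the paper must handle: after the adiabatic limit the time integration at the vertex can produce secular terms, growing linearly in $t$, whenever frequency combinations on the mass shell vanish. The paper's proof consists precisely in exhibiting the cancellations of these terms (between the two graphs of each pair with a relative minus sign, by the antisymmetry of $\Delta$ against symmetric factors, and by mass-shell kinematics for $m>0$) before the remaining expressions can be identified with the NESS Feynman rules and shown to be $O(t^{-1})$-close via butterfly-contour/stationary-phase estimates. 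Your locality-plus-dispersive-decay heuristic does not address this cancellation, so the claimed uniform $O(t^{-1})$ rate and even the existence of the $t\to\infty$ limit in the adiabatic case are not established by the proposal.
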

\begin{proof}The validity of the statements is proved by computing the convergence of all Feynman graphs in Figure \ref{fig_graphs}. This convergence is demonstrated in Appendix \ref{sec_convergenceV}. The proof strategy can be subsumed as follows, where we write $V$ as $V(h)$ in all expressions considered prior to taking the adiabatic limit $h\to 1$:
\begin{enumerate}
\item Let $\widetilde\om^{V(h)}_N$ denote the state defined as $\om^{V(h)}_G$, but with the difference that we replace all $\omG$ propagators in all Feynman graphs by $\omN$ propagators, and all $(\sigma_i \otimes 1)\Delta_{\beta_j,+}$ propagators in these graphs by $\Delta^{(s)}_{+,\beta_i}$ ones. The latter bidistributions are defined and discussed in Section \ref{sec_spectral}, and their main properties are (i) $[(\sigma_i \otimes 1)\Delta_{\beta_j,+}](x,y) = \Delta^{(s)}_{+,\beta_j}(x,y)+O(1/x^0)$ for $i=1,2$ and $s=(-1)^i$ and (ii) $\Delta^{(s)}_{+,\beta_j}$ is translation-invariant in space and time.

\item With this in mind, one first argues that (at first order)
$$\om^{V(h)}_G \circ \alpha^{V(h)}_t = \widetilde\om^{V(h)}_N \circ \alpha^{V(h)}_t + O(t^{-1})$$
$$\vlim_{h\to 1}\om^{V(h)}_G \circ \alpha^{V(h)}_t = \vlim_{h\to 1}\widetilde\om^{V(h)}_N \circ \alpha^{V(h)}_t + O(t^{-1})$$
\item Finally, one shows that (at first order)
$$\lim_{t\to\infty} \widetilde\om^{V(h)}_N \circ \alpha^{V(h)}_t = \om^{V(h)}_N
\,,\qquad \lim_{t\to\infty} \vlim_{h\to 1}\widetilde\om^{V(h)}_N \circ \alpha^{V(h)}_t = \omNV\,.
$$
\end{enumerate}

The computations in Appendix \ref{sec_convergenceV} are carried out for a quartic interaction, but they can be directly generalised to any polynomial interaction.
\end{proof}

We now discuss potential obstructions for extending the proof to all orders and argue how these obstructions can be likely overcome. We do not see any difficulties for extending the proof for spatially localised $V=V(h)$ to all orders. The only non-trivial task is to phrase all-order quantities in terms of manageable closed expressions. The situation is of course different in the adiabatic limit $h\to 1$. The difficulty in step 2.~indicated in the proof above is to argue that the $O(t^{-1})$ error term is of this form even in the adiabatic limit. This is non-trivial because the temporal integrations at the interaction vertices may potentially yield secular expressions, i.e. expressions which are growing in time. It is not at all obvious how to generalise our arguments used in the discussion at first order to all orders in a closed form, though we do not see any problems in discussing each order individually. Secular terms are also potentially problematic in the third step. In fact, the computations in Appendix \ref{sec_convergenceV} show that such secular terms do exist, but combine to form translation invariant expressions if we sum all topologically equivalent graphs. This is precisely what happens at the level of graphs and at all orders for the KMS state $\ombV$. We will sketch in the next section that $\omNV$ satisfies a generalised KMS condition as well, which indicates the absence of secular effects in the analysis of convergence of $\omGV$ to $\omNV$ at all orders. In particular, we expect $\omGV$ to converge (both at large times and as a power series) to the corresponding non-perturbatively contructed NESS for a quadratic potential $V$, generalising the same statement in \cite{Drago}. However, we leave a rigorous proof of these facts for future research.

\begin{figure}[ht]
\begin{center}
\includegraphics[width=.9\textwidth]{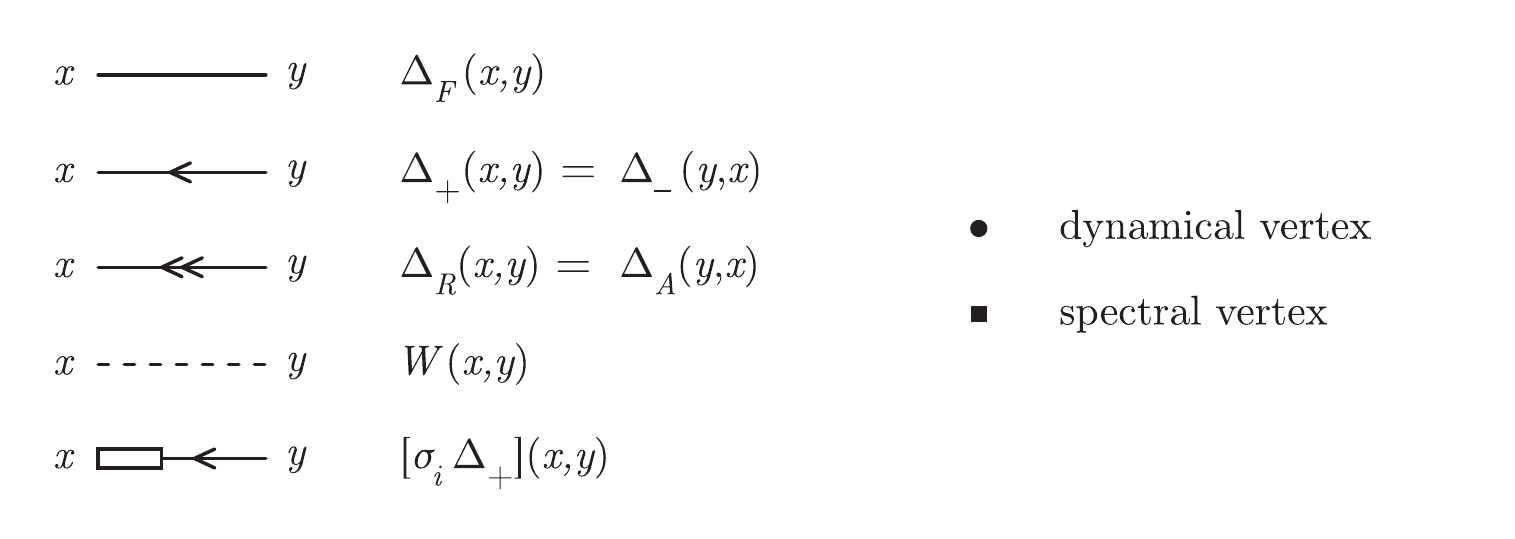}
\end{center}
\vspace{-.5cm}
\caption{The notation we use for various propagators and vertices. By ``dynamical vertex'' and ``spectral vertex'' we mean vertices originating from $V(f)=\int_{\bR^4}d^4x\; \phi(x)^4 f(x)$ and $V(\dot f)$, respectively. We recall that the spectral vertices are shifted in imaginary time $x^0 + iu$ with $u$ being integrated over, cf. \eqref{eq_KMSFeynman}. \label{fig_graphs_propagators}}
\end{figure}

\section{Properties of the NESS}
\label{sec_properties}

We have shown in Proposition \ref{prop_NESSKMS} in which rigorous sense the state $\omN$ is a KMS state at different temperatures for left- and right-moving modes for a linear model. We suspect that an analogous statement holds for the interacting state $\omNV$. We shall not provide a rigorous formulation and proof of this fact, but shall make a few heuristic remarks in that respect, before proving rigorously the stability of $\omNV$, which one would expect to hold if $\omNV$ were KMS. The form of the Feynman graphs of $\omNV$ in the adiabatic limit at first order found in Appendix \ref{sec_convergenceV} motivate the following -- formal -- direct definition of $\omNV$ which is a generalisation of \eqref{eq_araki}.
\beq\label{eq_formalNV}
\omNV(A) = \frac{\omN(A\star U^{\beta_1,\beta_2}_V(i))}{\omN(U^{\beta_1,\beta_2}_V(i))}\,,\quad U^{\beta_1,\beta_2}_V(t) = P_+ U_V(\beta_1 t) + P_- U_V(\beta_2 t)\,,
\eeq
where $P_\pm$ are projectors on right- and left-moving modes\footnote{In more detail, \eqref{eq_formalNV} has to be interpreted in the following way: compute all Feynman graphs resulting from this definition by using \eqref{eq_KMSFeynman}, writing all propagators as Fourier transforms and interpreting $P_\pm$ in a suitable sense like in Proposition \ref{prop_NESSKMS}; then cancel all terms in the resulting expressions which contain $\widetilde{\dot\psi}$ -- the Fourier transform of the derivative of the temporal switch-on function -- evaluated outside of the origin. This last step is not necessary for the proper KMS state $\ombV$, because such terms cancel identically in that case. However, if $\beta_1 \neq \beta_2$, these terms do not cancel, but constitute non-stationary contributions, which, at least at first order, are $O(t^{-1})$ if we evaluate \eqref{eq_formalNV} on $\atV(A)$, rather than $A$.}. This corresponds to an -- also entirely formal -- density matrix of the form 
$$
\rho^V_N = \frac{1}{Z^V_N} \exp\big(- \beta_1 P_+(H + V) - \beta_2 P_- (H + V)\big)\,,
$$
with $H$ being the free Hamiltonian. In particular, the non-linear interaction does not mix left- and right-moving modes, because of translation invariance. We have discussed this rigorously at first order for graphs with a ``spectral'' internal vertex in Section \ref{sec_spectral}, but the same discussion also applies for graphs with a ``dynamical'' internal vertex. This indicates that $\omNV$ is not ``closer'' to a KMS state in the original rest-frame than $\omN$, and since $\omNV$ does not appear to be more isotropic than $\omN$, the same is likely true also in a different rest-frame. If we consider the interacting NESS for an interaction localised in space, i.e. without taking the adiabatic limit, then the interaction does mix left- and right-moving modes, though the fact that at first order all graphs with a ``spectral'' internal vertex vanish at large times for such interactions seems to indicate that this mixing is insufficient to achieve proper thermalisation. The failure of the initial state to thermalise properly in a non-linear model is likely an artifact of perturbation theory. In fact, a state in perturbation theory is a proper KMS state only if it has this property already at zeroth order, which is obviously not the case for $\omNV$. One may hope that non-perturbative treatments or suitable resummations of perturbative terms can reveal a better thermalisation behaviour. Such analyses are, however, beyond the scope of this work.

\subsection{Stability}

A prominent feature of KMS states is their stability with respect to local perturbations, see e.g. \cite{JP03, BR} for a review of corresponding results. The ground work laid by \cite{FredenhagenLindner} opened the possibility to generalise corresponding results in quantum statistical mechanics to perturbative QFT, a path successfully pursued by the authors of \cite{DFP,DFP2}. In fact, in \cite{DFP} it has been shown that the interacting KMS states $\ombV$ are stable under local perturbations in the sense that, for a local $V$ and without considering the adiabatic limit,
$$
\lim_{t\to\infty}\omb \circ \atV = \ombV\,.
$$
This stability holds only for local perturbations, in fact the authors of \cite{DFP} show that the ergodic mean of the adiabatic limit of $\omb \circ \atV$ is not a KMS state, but only a NESS. We remark that local perturbations in quantum statistical mechanics are self-adjoint elements of the algebra of observables. In this sense, in pAQFT, $V$ is not a local perturbation anymore in the adiabatic limit.

The results we have obtained so far enable us to prove a similar result for the NESS constructed in this paper, though only at first order, since we have only rigorously shown the existence of the interacting NESS with this limitation. 

\begin{theorem}\label{prop_stability} Let $V \doteq \int_{\bR^4} d^4x\; f(x)\phi(x)^4$ with $f(x)\doteq \psi(x^0) h(\bx)$, $h\in C^\infty_0(\bR^3)$ and $\psi\in C^\infty(\bR)$, $\psi(x^0) = 0$ for $x^0 < -\epsilon < 0$, $\psi(x^0) = 1$ for $x^0 > \epsilon$ and let $W \doteq \int_{\bR^4} d^4x\; f_W(x)v(\phi(x))$ with $f_W(x)\doteq \psi(x^0) h_W(\bx)$, $h_W\in C^\infty_0(\bR^3)$, and $v$ an arbitrary polynomial. The following statements hold at first order in perturbation theory for an arbitrary $A\in\mA_{V+W}$ localised in the region where $\psi = 1$.
\begin{itemize}
\item[(1)]  Let $\om^{V(h)}_N$, $\om^{V(h)+W(h_w)}_N$ be states defined as in Theorem \ref{prop_NESSinteracting}. It holds
$$
\lim_{t\to\infty}\om^{V(h)}_N(a^{V+W}_t(A)) = \om^{V(h)+W(h_w)}_N(A)\,.
$$
\item[(2)] Let $\om^{V}_G$, $\om^{V+W}_G$ be states defined as in Theorem \ref{prop_initialinteracting}. It holds
\begin{align*}
 &\lim_{t\to\infty}  \vlim_{h\to 1}\om^{V}_G(\alpha^{V+W}_t(A))\\
  = \;& \lim_{t\to\infty}  \vlim_{h\to 1}\om^{V+W}_G(\alpha^{V+W}_t(A)) \\
  \doteq \;&\om^{V+W(h_w)}_N(A)\,.
\end{align*}
\end{itemize}
\end{theorem}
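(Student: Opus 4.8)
The plan is to reduce the stability statement to the convergence result of Theorem~\ref{prop_NESSinteracting} by an interchange-of-limits argument at first order, exploiting that $W$ is a \emph{spatially localised} perturbation whose switch-on profile $\psi$ coincides with the one used for $V$. First I would observe that, since $\psi$ is common to both $V$ and $W$, the combined interaction $V+W = \int f(x)\big(\phi(x)^4 + v(\phi(x))\big)$ with $f = \psi h_{\mathrm{tot}}$ is exactly of the type covered by Theorems~\ref{prop_initialinteracting} and~\ref{prop_NESSinteracting}; hence $\om^{V+W(h_w)}_N(A) \doteq \lim_{t\to\infty}\vlim_{h\to1}\om^{V+W}_G(\alpha^{V+W}_t(A))$ exists, is $O(t^{-1})$-convergent, and is independent of $\beta_3$, $\chi_1$, $\psi$. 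This immediately gives the second equality in (2), so the content of the theorem is the first equality: replacing the initial state $\om^{V+W}_G$ by $\om^{V}_G$ (equivalently, in (1), replacing $\om^{V(h)+W(h_w)}_N$ on the right by $\om^{V(h)}_N$ evolved with the full dynamics) does not change the limit.

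The key step is a Dyson-type expansion of $\alpha^{V+W}_t$ around $\alpha^{V}_t$. At first order in the coupling, $\alpha^{V+W}_t(A) = \alpha^{V}_t(A) + i\int_0^t \alpha^{V}_s\big([K_W, \alpha^{V}_{t-s}(A)]\big)\,ds + O(2)$, where $K_W = \moller(W(\dot\psi h_w))$ is the spectral generator associated with $W$, cf.~\eqref{eq_KV}; since $A$ is localised where $\psi = 1$, only the $\dot\psi$-supported part of $W$ contributes, and that part is compactly supported in spacetime. Applying $\om^{V}_G$ and using Theorem~\ref{prop_NESSinteracting}(2), the leading term converges to $\om^{V}_N(A)$, while the correction term, after the adiabatic limit $h\to1$ in the \emph{original} interaction $V$, converges to $\int_0^\infty \om^{V}_N\big(\alpha^{V}_s([K_W,\ldots])\big)\,ds$ because $\om^{V}_N$ is $\atV$-invariant (being a limit of $\om^{V}_G\circ\atV$) and the integrand decays as the kernels translate off the fixed compact support of $K_W$. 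The same Dyson expansion applied to $\ombV$ in place of $\om^{V}_G$ reproduces exactly the first-order term of $\om^{V+W(h_w)}_N$ via the Feynman-rule characterisation of Theorem~\ref{prop_initialinteracting}; comparing the two expansions graph-by-graph — using that the only new graphs are those with one extra vertex from $W(\dot\psi h_w)$ attached to the $\omN$-type propagators, all of which are handled by Proposition~\ref{prop_Hadamard} and the spectral bidistributions $\Delta^{(s)}_{+,\beta_i}$ of Section~\ref{sec_spectral} — shows the two limits agree at first order.

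The main obstacle I anticipate is controlling the interchange of the $t\to\infty$ limit with the van~Hove limit $h\to1$ in the correction term: one must show that the $O(t^{-1})$ remainder in step~2 of the proof of Theorem~\ref{prop_NESSinteracting} survives the adiabatic limit uniformly enough that $\vlim_{h\to1}$ and $\lim_{t\to\infty}$ commute. Concretely, the temporal integration $\int_0^t ds$ at the $K_W$-vertex composed with the spatial integration over the (growing) support of $h$ in the $V$-vertices could in principle produce secular growth; the resolution is the same as in the proof of Theorem~\ref{prop_NESSinteracting} — the $\dot\psi$-supported vertex $K_W$ is genuinely compactly supported in \emph{all} directions (spatially because $h_w$ is fixed), so its contribution is bounded uniformly in $h$, and the potentially secular pieces coming from the $V$-vertices are precisely those already shown in Appendix~\ref{sec_convergenceV} to recombine into translation-invariant, stationary expressions. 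Once this uniformity is in hand, Fubini and dominated convergence close the argument, and the chain of equalities in (2) follows; part (1), being the statement before the adiabatic limit, is then the easier special case where all vertices are compactly supported from the outset and no interchange of limits is needed.
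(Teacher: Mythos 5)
Your proposal circles the right structural feature (the $W$-vertex is compactly supported in space because $h_W$ is never subjected to the van Hove limit), but it misses the single fact on which the paper's proof actually rests, and replaces it with a claim that is both weaker and aimed at the wrong target. The paper's argument is: at first order, the only graphs by which the two sides of (1) and (2) can differ are those containing a \emph{spectral} internal vertex built from $W(\dot\psi h_W)$, and by the analysis of Section \ref{sec_spectral} such graphs, having all their internal integrations confined to the fixed compact region $\supp\dot\psi\times\supp h_W\times[0,\beta_i]$ while the external points are translated to time $t$, decay like $t^{-n}$ (via the uniform $C/|x^0|$ bounds on $\Delta_{+,\beta,s}(x,y^0+iu,\by)$); hence they \emph{vanish} in the limit, and the remaining (dynamical and $V$-spectral) graphs are literally the same on both sides, so nothing further needs to be matched. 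In your write-up this decisive decay statement never appears: you only assert that the $K_W$ contribution is ``bounded uniformly in $h$'', which is not sufficient — uniform boundedness of a non-vanishing $W$-spectral contribution would exactly spoil the claimed equality, since the left-hand states ($\om^{V(h)}_N$, $\om^V_G$) contain no such contribution.

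The gap then propagates into your Dyson/Robinson-type organisation. You claim the Duhamel correction $\int_0^\infty \om^V_N\big(\alpha^V_s([K_W,\ldots])\big)\,ds$ ``reproduces exactly the first-order term of $\om^{V+W(h_w)}_N$'' by comparing with a Dyson expansion of $\ombV$ — but $\ombV$ is not the relevant comparison object (the right-hand sides are built from $\om^{V+W}_G$ and its large-time limit, whose $W$-dependence enters through the pinned spectral vertex $K_W$, not through a cocycle tracking the observable), and the asserted matching is precisely the nontrivial content: one must show that the limit state carries no ``thermal'' $W$-correction, i.e.\ that the $W$-spectral graphs die, which is what Section \ref{sec_spectral} provides and what you would still have to prove on your route. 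Relatedly, your opening reduction — treating $V+W$ as ``exactly of the type covered'' by Theorems \ref{prop_initialinteracting} and \ref{prop_NESSinteracting} with a single cutoff $f=\psi h_{\mathrm{tot}}$ — glosses over the asymmetric treatment of the two spatial cutoffs (van Hove limit in $h$ only, $h_W$ fixed); extending the convergence statement to this mixed situation is again exactly the point where the vanishing of the $W$-spectral graphs for compactly supported $h_W$ is invoked. So the approach could in principle be completed, but as it stands the key lemma is neither stated nor used, and the step that is supposed to replace it (the graph-by-graph matching against $\ombV$) would fail as formulated.
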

\begin{proof}
The statements follow from the fact that the Feynman graphs with a ``spectral'' internal vertex corresponding to $W$ vanish at large times because of the compact support of $h_W$ as shown in Section \ref{sec_spectral}.
\end{proof}

Note that the last statement can be formally interpreted as $\lim_{t\to\infty} \omNV \circ \alpha^{V+W}_t = \om^{V+W(h_w)}_N$. These stability properties of $\omN$ and $\omNV$ show that, while not being KMS with respect to the ``physical'' time-translations $\at$ and $\atV$, these states share some of the prominent features of such KMS states.

\subsection{Entropy production}

A quantity which may be used to describe NESS quantitatively is entropy production \cite{JP01,JP02,JP03, Oj0,Oj1,Oj2, Ruelle 2,Ruelle 3}. In quantum statistical mechanics, this is defined as follows. We consider a $C^*$-algebra $\mA$ with a strongly continuous one-parameter automorphism group $\at$ and a KMS-state $\om$ on $\mA$ for $\at$ at inverse temperature $\beta = 1$. We denote by $\delta_\om$ the generator of $\at$, formally $\delta_\om(A) = i[H,A]$ with $H$ the Hamiltonian corresponding to $\at$. We further consider a self-adjoint $V\in \mA$ which is in the domain of $\delta_\om$; we call such $V$ a local perturbation. $V$ induces a perturbed time-evolution $\atV$ in essentially the same way we have reviewed in Section \ref{sec_paqft}. One may formally think of $\atV$ as being generated by $H+V$, we refer to the above-mentioned references for a proper definition. We set for any state $\eta$ on $\mA$
$$
\sigma_V \doteq \delta_\om(V)\,,\qquad \mE\!\mP_V(\eta) \doteq \eta(\sigma_V)\,.
$$
$\mE\!\mP_V(\eta)$ is called the entropy production (of $\atV$) in the state $\eta$ (relative to $\om$). In \cite{JP01} it is shown that this name is well-deserved, because, for $\eta$ in the folium\footnote{We refer to eg. \cite{Haag, BR} for the definition of a folium. Essentially, the folium of $\om$ encompasses all states which are density matrices on the GNS-Hilbert space of $\om$, cf. Section \ref{sec_KMS} for a brief definition of this Hilbert space.} of $\om$, 
$$
\mE(\eta\circ \atV,\om) = \mE(\eta,\om)  + \int^t_0 ds\; \mE\!\mP_V(\eta\circ \alpha^V_s)\,,
$$
where $\mE(\eta,\om)$ denotes the relative entropy of $\eta$ with respect to $\om$, see \cite{Araki,BR}. A proper definition of this quantity requires Tomita-Takesaki modular theory and we only remark that for finite dimensional systems with $\rho_\eta$, $\rho_\om$ denoting the density matrices corresponding to $\eta$ and $\om$, the relative entropy is
$$
\mE(\eta,\om) = \text{Tr}(\rho_\eta(\log \rho_\eta - \log\rho_\rho))\,.
$$
In  \cite{JP02, JP03} it is shown under suitable assumptions that the relative entropy of $\omNV$ defined as
$$
\omNV\doteq \lim_{t\to\infty} \int^t_0 ds\; \om\circ \alpha^V_s
$$
is non-vanishing (and positive) if and only if $\omNV$ is not in the folium of $\om$. Physically, this means that $\omNV$ is thermodynamically different from $\om$, in particular, it may not be KMS. In fact in \cite{JP02, JP03} it is argued that a positive entropy production is sufficient for the existence of non-trivial thermal currents.

The difficulty in applying these ideas to perturbative QFT lies in the fact that many strong results on which quantum statistical mechanics is based are not at our disposal in pAQFT. In particular, we are always dealing with formal power series and quantities which are unbounded operators in any Hilbert space representation. Yet, in \cite{DFP2} the authors have succeeded to provide generalisations of the definitions of relative entropy and entropy production in pAQFT and for states which are KMS states translated in time with respect to a perturbed time-evolution. These definitions have been shown to be meaningful in the sense that they possess many qualitative and quantitative properties which the corresponding expressions in quantum statistical mechanics have.

In this section, we aim to compute the entropy production in the state $\omNV$ relative to $\omN$. The non-vanishing of this quantity could be interpreted as an indication that $\omNV$ is thermodynamically different from $\omN$, while its vanishing may indicate that this is not the case. We have argued at the beginning of this section \ref{sec_properties} and seen in Theorem \ref{prop_stability} that $\omNV$ does not seem to be different from $\omN$ and thus we shall not be surprised to find that the aforementioned entropy production is in fact vanishing. In contrast to the analysis in \cite{DFP2}, our initial state $\omGV$ is not a KMS state and thus we shall not be able to use the definitions of relative entropy in \cite{DFP2} in order to motivate the definition of entropy production. Instead we will define entropy production directly by the obvious generalisation of the definition in \cite{DFP2}, and, thus, we shall not be able to give strong arguments why this quantity deserves that name. That said, we recall that $\omN$ is a KMS state at inverse temperature $\beta=1$ with respect to $\alpha^{\beta_1,\beta_2}_t$ by Proposition \ref{prop_NESSKMS}, and set
$$
\mE\!\mP_V\left(\om^{V(h)}_N\right)\doteq \lim_{t\to\infty} \frac{d}{d\tau}\omGV\left.\left(\atV\left(\alpha^{\beta_1,\beta_2}_\tau (K_V)\right)\right)\right|_{\tau=0}\,,
$$
\beq\label{eq_entropyproduction}
e\!\mP_V\left(\omNV\right)\doteq \lim_{t\to\infty}\frac{d}{d\tau}\vlim_{h\to1} \frac{1}{I(h)}\omGV\left.\left(\atV\left(\alpha^{\beta_1,\beta_2}_\tau (K_V)\right)\right)\right|_{\tau=0}\,,
\eeq
$$
 I(h) \doteq \int_{
\bR^3}d\bx \;h(\bx)\,,
$$
with $K_V$ defined in \eqref{eq_KV}. In the definition of the entropy production in the adiabatic limit, we have to divide by the ``volume'' of $h$ in order to be able to get a finite quantity in the first place. Thus, $e\!\mP_V$ may interpreted as quantifying the production of entropy density, cf. \cite{DFP2}. We remark that the lowest order of $\mE\!\mP_V$ and $e\!\mP_V$ is quadratic in the interaction. For this reason, one can unambiguously claim that these quantities have a definite sign (in the sense of perturbation theory) if they are non-vanishing at lowest order. However, we shall find that they do vanish at this order, as anticipated.

\begin{proposition}\label{prop_entropyproduction} Let $\om^{V(h)}_N$ and $\omNV$ be the states defined in Theorem \ref{prop_NESSinteracting}. Their entropy production (density) $\mE\!\mP_V\left(\om^{V(h)}_N\right)$ and $e\!\mP_V\left(\omNV\right)$ defined in \eqref{eq_entropyproduction} is vanishing at lowest non-trivial order in perturbation theory for a quartic interaction.
\end{proposition}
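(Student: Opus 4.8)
\medskip\noindent\emph{Proof plan.} The plan is to expand the defining formulae for $\mE\!\mP_V$ and $e\!\mP_V$ in \eqref{eq_entropyproduction} to second order in the number of interaction vertices, which is the lowest order at which these quantities can be non-trivial, and to show that every Feynman integral that occurs either vanishes individually in the limit $t\to\infty$ (and, for $e\!\mP_V$, in the subsequent van~Hove limit) or reassembles into the $\tau$-derivative at $\tau=0$ of a quantity invariant under $\alpha^{\beta_1,\beta_2}_\tau$.

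I would first record the cancellation at first order, which also explains why the lowest non-trivial order is quadratic. To leading order $K_V = V(\dot\psi h)$, and $\tfrac{d}{d\tau}\alpha^{\beta_1,\beta_2}_\tau(K_V)|_{\tau=0}$ is again of first order, so the order-one part of $\mE\!\mP_V(\om^{V(h)}_N)$ is $\lim_{t\to\infty}\omG\bigl(\at\bigl(\tfrac{d}{d\tau}\alpha^{\beta_1,\beta_2}_\tau(V(\dot\psi h))|_{\tau=0}\bigr)\bigr)$. Since $\alpha^{\beta_1,\beta_2}_\tau$ commutes with $\at\doteq\alpha^0_t$ and $\omG\circ\at\to\omN$ with the $O(t^{-1})$ rate established in Theorem~\ref{prop_NESS}, this equals $\tfrac{d}{d\tau}\omN\bigl(\alpha^{\beta_1,\beta_2}_\tau(V(\dot\psi h))\bigr)|_{\tau=0}$, which vanishes because $\omN$ is KMS, hence $\alpha^{\beta_1,\beta_2}_\tau$-invariant, by Proposition~\ref{prop_NESSKMS}. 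The same computation disposes of the contribution in which $K_V$ is taken at second order: there $\omGV$ and $\atV$ are replaced by $\omG$ and $\at$, and $K_V^{(2)}$ (the second-order part of $\moller(V(\dot\psi h))$) is a fixed micro-causal functional on which $\omG\circ\at\to\omN$ and $\omN$ is $\alpha^{\beta_1,\beta_2}_\tau$-invariant; the same argument applies in the adiabatic limit after dividing by $I(h)$.

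The remaining second-order contributions come from pairing the first-order part of $\omGV$ (Theorem~\ref{prop_initialinteracting}) or of $\atV$ with $K_V^{(1)}=V(\dot\psi h)$, on which $\alpha^{\beta_1,\beta_2}_\tau$ and $\partial_\tau$ act. Classified by the Feynman rules of Theorem~\ref{prop_initialinteracting}, each such graph carries the external $\phi^4$-vertex $x$ of $K_V$, of weight $\dot\psi(x^0)h(\bx)$, together with exactly one further $\phi^4$-vertex $y$, which is either (a)~a ``spectral'' vertex of $K_V$-type from the state $\omGV$, shifted to imaginary time $y^0+iu$ and integrated over the simplex, of compactly supported weight $\dot\psi(y^0)h(\by)$, or (b)~a ``dynamical'' vertex of weight $\psi(y^0)h(\by)$ coming either from the interaction cutoff in $\omGV$ or from the first-order term of $\atV$. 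For the type~(a) graphs I would invoke the mechanism of Section~\ref{sec_spectral}, already used in Theorems~\ref{prop_NESSinteracting} and~\ref{prop_stability}: a graph containing a spectral internal vertex with a compactly supported spatial weight vanishes as $t\to\infty$, and in the adiabatic case one argues as in Theorem~\ref{prop_NESSinteracting}(2) that the $1/I(h)$-normalised van~Hove limit of these contributions vanishes as well. For the type~(b) graphs I would use the convergence established in Appendix~\ref{sec_convergenceV}: as $t\to\infty$ every $\omG$-propagator becomes an $\omN$-propagator and every $(\sigma_i\otimes 1)\Delta_{+,\beta_j}$-propagator becomes the translation-invariant bidistribution $\Delta^{(s)}_{+,\beta_i}$ of Section~\ref{sec_spectral}, with an $O(t^{-1})$ remainder; the $t$-independent limit of each such graph then equals, graph by graph, the $\tau$-derivative at $\tau=0$ of the corresponding graph of $\omega^{V(h)}_N\bigl(\alpha^{\beta_1,\beta_2}_\tau(K_V)\bigr)$, which is $\tau$-constant because the limit NESS is ``modewise KMS'', i.e.\ invariant under $\alpha^{\beta_1,\beta_2}_\tau$ at this order (cf.\ \eqref{eq_formalNV} and the discussion opening Section~\ref{sec_properties}). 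Hence the $\tau$-derivative of each type~(b) graph is $O(t^{-1})$ and drops out in the limit, which gives $\mE\!\mP_V(\om^{V(h)}_N)=0$ and, after the van~Hove limit, $e\!\mP_V(\omNV)=0$.

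The step I expect to be the main obstacle is the interchange of the $\tau$-differentiation with the limits $t\to\infty$ and $h\to1$: one must verify that the $O(t^{-1})$ remainders produced by the convergence arguments of Appendix~\ref{sec_convergenceV} depend differentiably on the shift parameter $\tau$ with a derivative that is again $O(t^{-1})$, uniformly for $\tau$ in a neighbourhood of $0$; equivalently, that letting $\alpha^{\beta_1,\beta_2}_\tau$ act on the external leg of each graph does not produce secular terms growing in $t$. This is the same non-secularity control that underlies the stationarity statements in Section~\ref{sec_NESSinteracting}, and it reduces to the elementary Feynman-integral bookkeeping collected in the appendices; the residual computations are routine.
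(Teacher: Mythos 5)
Your overall mechanism — expand to the quadratic order, replace every propagator by its stationary large-$t$ limit with $O(t^{-1})$ control, and let the $\tau$-derivative annihilate the limit — is the mechanism of the paper's proof, and your handling of the dynamical-vertex (type~(b)) graphs and of the limit/derivative interchange matches it. The genuine gap is your treatment of the spectral-vertex (type~(a)) graphs in the adiabatic limit. It is true (Section~\ref{sec_spectral}, used in Theorem~\ref{prop_stability}) that such graphs vanish as $t\to\infty$ when $h$ is kept fixed, so your argument covers $\mE\!\mP_V\bigl(\om^{V(h)}_N\bigr)$; but after the van Hove limit they do \emph{not} vanish. Section~\ref{sec_spectral} computes their nonvanishing stationary limit explicitly (the contributions with $\sum_i r_i=0$ survive), and these are precisely the contributions that make $\omNV$ differ from $\omN$, cf.\ \eqref{eq_formalNV}; Theorem~\ref{prop_NESSinteracting}(2) asserts their convergence, not their vanishing. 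The $1/I(h)$ normalisation does not rescue the claim either: since the external vertex $K_V$ is itself extensive, the spectral graphs scale like $I(h)$ and the normalised van Hove limit is finite and generically nonzero. What actually makes them harmless is the $\tau$-derivative: one replaces $[\sigma_s\Delta_{+,\beta_j}]$ by the stationary $\Delta^{(s)}_{+,\beta_j}$ up to $O(t^{-1})$ — a replacement that commutes with the $\alpha^{\beta_1,\beta_2}_\tau$-shift and with $\partial_\tau$ because the convergence holds separately for left- and right-movers and $\tau$-derivatives do not spoil the decay and analyticity used there — and then observes that the surviving large-$t$ terms transfer zero total frequency to the external vertex while the $\Theta$-restrictions confine all lines to a single momentum half-space, hence to a single $\beta_i$; such terms are invariant under $\alpha^{\beta_1,\beta_2}_\tau$, so their $\tau$-derivative vanishes. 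This is the argument the paper applies uniformly to both graph types, and it is what you must add for type~(a) in the adiabatic case.

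A secondary point: for type~(b) you justify $\tau$-constancy by appealing to the ``modewise KMS'' property of $\omNV$ via \eqref{eq_formalNV}, which the paper presents only heuristically; the non-circular route is the one you mention in passing, namely the invariance of the limit propagators $\Delta_{\pm/F,N}$, $W_N$ and $\Delta^{(s)}_{+,\beta_i}$ under the modewise shift, which is how the paper argues. Likewise, the existence of the $1/I(h)$-normalised adiabatic limit is obtained in the paper by reducing to the density $k_V(\bx)$ as in \cite{Lindner,DFP2}; you use the normalisation but do not address this step.
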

\begin{proof} We recall that $K_V = \moller(V(\dot\psi h))$. $\alpha^{\beta_1,\beta_2}_\tau$ commutes with $\at$ and thus with $\atV$. Consequently
$$
\atV \left(\alpha^{\beta_1,\beta_2}_\tau \left(\moller(V(\dot\psi h))\right)\right) = \mR_{\alpha^{\beta_1,\beta_2}_\tau(V)}\left(\at(\alpha^{\beta_1,\beta_2}_\tau (V(\dot\psi h))\right)\,.
$$
In the case of the adiabatic limit, we can argue as \cite{Lindner,DFP2} that 
\begin{align*}
\vlim_{h\to1} \frac{1}{I(h)}\omGV\left(\atV\left(\alpha^{\beta_1,\beta_2}_\tau (K_V)\right)\right)= \vlim_{h\to1}\omGV\left(\atV\left(\alpha^{\beta_1,\beta_2}_\tau (k_V(\bx))\right)\right)=\text{const.}
\end{align*}
with
$$
k_V(\bx) \doteq \moller\left(\int_{\bR}dx^0 \;\dot\psi(x^0) \phi(x^0,\bx)^4\right)\,.
$$
We have to compute graphs of type 1a and 1b in Figure \ref{fig_graphs}, where the 1b graph has an additional line closing on itself at the external vertex. We know from the proofs of Theorem \ref{prop_NESS} and Theorem \ref{prop_NESSinteracting} that
\beq\label{eq_production1}
\Delta^n_{\pm/F,G}(x,y) = \Delta^n_{\pm/F,N}(x,y) + O(1/x^0)\,,\qquad \Delta_{+,\beta_i,s}(x,y) = \Delta^{(s)}_{+,\beta_i}+ O(1/x^0)\,,
\eeq
cf. Section \ref{sec_spectral} for the definition of the latter propagators. These statements also hold if we translate $x^0$ and $y^0$ with $\alpha^{\beta_1,\beta_2}_\tau$ and derive with respect to $\tau$, because the convergence occurs separately for the left- and right-moving parts of these distributions and the $\tau$ derivatives do not alter the necessary analyticity and decay properties on which our  convergence proof was based. Thus, we can argue as in the proof of Theorem \ref{prop_NESSinteracting} that we can replace the propagators on the left hand sides of \eqref{eq_production1} by those on the right hand side up to $O(t^{-1})$ errors which vanish in the limit $t\to\infty$. However, the propagators on the right hand sides of \eqref{eq_production1} are all invariant under shifting both arguments with $\alpha^{\beta_1,\beta_2}_\tau$ and, thus, the prevailing expressions are constant in $\tau$ and vanish if we derive with respect to $\tau$.\end{proof}

\vspace{1cm}
\noindent{\bfseries Acknowledgements}\\

We would like to thank Stefan Hollands and Nicola Pinamonti for interesting discussions.

\appendix
\section{Convergence of the NESS for linear models}
\label{sec_convergence}

\subsection{Convergence of the two-point function}
\label{sec_convergence2pf}

The first steps of the computation are the same for all linear models treated in the main body of the paper. We compute the long-time limit of the evolution of $\Delta_{+,G}$ starting from
$$
\Delta_{+,G} = \sum^2_{i,j=1} (\sigma_i \otimes \sigma_j) \Delta_{+,\beta_{ij},\mu_{ij}}\qquad \beta_{11} \doteq  \beta_1,\, \beta_{22} \doteq \beta_2,\, \beta_{12}\doteq\beta_{21} \doteq \beta_3
$$
and  $\mu_{ij}$ defined analogously, where we recall that $\sigma_i = \Delta \chi_i g$, $g = \ddot \psi + 2 \dot \psi \partial_{0}$. We consider linear models with equation of motion
$$
(\partial^2_{0} - D + m^2 + U(x^1))\Phi = 0\,,\qquad D = \sum^{d-1}_{i=1}\partial^2_i
$$
Using the normalised and complete modes $Y_{k_1}(x^1)$ of the operator $-\partial^2_1 + U(x^1)$, the two-point function of a KMS state is, cf. \eqref{eq_KMS2pfexplicit}
$$
\Delta_{+,\beta,\mu}(x,y) = \frac{1}{(2\pi)^{d-2}} \int d\bp  \frac{1}{2\omp}\sum_{s=\pm 1} e^{i s \omp (x^0 - y^0)} \overline{Y_{p_1,s}(x^1)}Y_{p_1,s}(y^1) e^{i \bpp( \bxp-\byp)} s\, b_{\beta,\mu}(s \omp)\,,
$$
$$
\bxp = (x^2,\dots,x^{d-1})\,,\qquad \bpp = (p_2,\dots,p_{d-1})\,,\qquad \omp = \sqrt{|\bp|^2+m^2}\,,
$$
$$
b_{\beta,\mu}(\om) \doteq \frac{1}{\exp(\beta (\om-\mu)) - 1}\,,\qquad Y_{k_1,+} \doteq Y_{k_1}\,,\qquad Y_{k_1,-} \doteq \overline{Y_{k_1}}\,.
$$
We shall also need the mode expansion of the causal propagator $\Delta  = \Delta_R - \Delta_A$
$$
\Delta(x,y) = \frac{- i}{(2\pi)^{d-2}} \int d\bp  \frac{1}{2\omp}\sum_{s=\pm 1} s\, e^{i s \omp (x^0 - y^0)} \overline{Y_{p_1,s}(x^1)}Y_{p_1,s}(y^1) e^{i \bpp( \bxp-\byp)} \,.
$$
Thus we have 
$$
\Delta_{+,G}(x,y) = \frac{-1}{(2\pi)^{3(d-2)}} \int dv \,dz \, d\bp \, d\bk\, d\bq \sum_{s_1,s_2,s_3,s_4,s_5 = \pm 1} \frac{s_1\,s_2\,s_3 \,b_{s_4,s_5}(s_3\omq)}{8 \omp \omk \omq} \qquad \times
$$
$$
\times \qquad e^{i s_1 \omp(x^0 - v^0) + i s_2 \omk(y^0 - z^0)}g(v^0)g(z^0)  e^{i s_3 \omq(v^0- z^0)}\chi_{s_4}(v^1)\chi_{s_5}(z^1)\qquad \times
$$
$$
\times \qquad e^{i\bpp(\bxp-\bvp)+i\bkp(\byp-\bzp)+i\bqp(\bvp - \bzp)} \overline{Y_{p_1,s_1}(x^1)}Y_{p_1,s_1}(v^1)\overline{Y_{k_1,s_2}(y^1)}Y_{k_1,s_2}(z^1)\overline{Y_{q_1,s_3}(v^1)}Y_{q_1,s_3}(z^1)
$$
where we set $\chi_{\pm1} \doteq \chi_{1/2}$ and $b_{+1,+1} \doteq b_{\beta_1,\mu_1}$, $b_{-1,-1} \doteq b_{\beta_2,\mu_2}$, $b_{\mp1,\pm1} \doteq b_{\beta_3,\mu_3}$. We can easily perform the integrals w.r.t. $v^0$ and $z^0$ explicitly. We can also perform the integrals w.r.t. $\bvp$ and $\bzp$, which give $\delta$ distributions for the parallel momenta $\bkp$, $\bqp$ and then perform the integrals w.r.t. to these momenta. Further we integrate $\Delta_G(x,y)$ with spatial test functions $f$ and $g$ in order to improve the large momentum behaviour in the complex domain, which will allow us to use the residue theorem later. This step is legitimate as we wish to prove the convergence of $\Delta_{+,G}$ in the sense of distributions after all. Finally, we observe that the cutoff functions $\chi_i$, $i=1,2$ are convolutions of the Heaviside step function $\Theta$ with $\nu \doteq \partial_{1} \chi_2$ of compact support and unit integral. 
\beq\label{eq_chiform}
\chi_1 = (1-\Theta)* \nu\,,\qquad \chi_2 = \Theta* \nu\qquad \Rightarrow\qquad \chi_s(x)=\lim_{\epsilon\downarrow 0}\frac{-s i}{\sqrt{2\pi}}\int d\lambda \frac{\widetilde{\nu}(\lambda)}{\lambda-  is\epsilon}e^{-i\lambda x}
\eeq
where $\widetilde{\nu}(\lambda)$ is the Fourier transform of $\nu$. After these considerations we find
$$
\Delta_{+,G}(x^0+t,y^0+t,f,g) = \frac{1}{(2\pi)^{3(d-2)+1}} \int dv^1 \,dz^1 \, d\bp \, dk_1\, dq_1 \, d\lambda_1\,d\lambda_2\qquad \times 
$$
$$
\times \qquad \sum_{s1,s2,s3,s4,s5 = \pm 1} \frac{s_1\,s_2\,s_3 \,s_4 \,s_5 \,b_{s4,s5}(s_3\omq)\,\tilde{\nu}(\lambda_1)\tilde{\nu}(\lambda_2)}{8\, \omp \,\omega(k_1,\bpp)\, \omega(q_1,\bpp)\,(\lambda_1 - i s_4 \epsilon)\,(\lambda_2 - i s_5 \epsilon)} \qquad \times
$$
\begin{equation}\label{eq_conv1}
\times \qquad e^{i s_1 \omp(x^0 +t) + i s_2 \omega(k_1,\bpp)(y^0 +t)}H(s_1,s_2,s_3,\bp,k_1,q_1)\qquad \times
\end{equation}
$$
\times \qquad e^{-i\lambda_1 v^1} e^{-i\lambda_2 z^1}  Y_{p_1,s_1}(v^1)Y_{k_1,s_2}(z^1)\overline{Y_{q_1,s_3}(v^1)}Y_{q_1,s_3}(z^1) \tilde f_{-s_1}(\bp) \tilde g_{-s_2}(k^1,-\bpp)
$$
where 
$$
\omega(q,\bpp) \doteq \sqrt{q^2 + |\bpp|^2 + m^2}\,,\qquad \tilde f_{s}(\bp) \doteq  \frac{1}{\sqrt{2\pi}^{d-2}} \int d\bx \, e^{i\bpp \bxp} Y_{p_1,s}(x^1)\,,
$$
\begin{align*}
H(s_1,s_2,s_3,\bp,k_1,q_1) & \doteq 2\pi\big(s_1 \omp + s_3\omega(q_1,\bpp)\big)\big(s_2\omega(k_1,\bpp) - s_3\omega(q_1,\bpp)\big)\quad\times\\
&\quad\times\quad \widetilde{\dot\psi}\big(s_1 \omp - s_3\omega(q_1,\bpp)\big)\widetilde{\dot\psi}\big(s_2\omega(k_1,\bpp) + s_3\omega(q_1,\bpp)\big)
\end{align*}
In the homogeneous case $U(x^1)=0$ the Fourier transforms $\tilde f(\bp)$, $\tilde g(\bp)$ are by the Paley-Wiener theorem entire analytic functions which satisfy the bound
$$
|\tilde f(p_1,\bpp)| < C_N (1+|p_1|)^{-N} e^{A|\Im{p_1}|}
$$
where $N$ is arbitrary, $C_N$ is a constant depending on $N$ and $f$ and $A$ is a constant indicating the size of the compact support of $f$ in space. In the inhomogeneous cases we consider modes $Y_{p_1}$ that will be linear combinations of expressions of the form $c_{p_1} \Theta(\pm x^1) e^{i\pm p_1 x^1}$. It is not difficult to see that Fourier transforms of compactly supported smooth functions which are half-sided in the first spatial coordinate are again entire analytic but satisfy in general only the weaker bounds
$$
|\tilde f(p_1,\bp)| < C (1+|p_1|)^{-1} e^{A|\Im{p_1}|}
$$
because in general one can perform only a single partial integration in order to get the polynomial decay due to the boundary term at $x^1 = 0$. The analyticity and decay are of course modulated by the properties of the coefficients $c_{p_1}$ appearing in the modes.

We then perform the integrals with respect to $v^1$ and $z^1$. At this stage the specific form of the modes $Y_{p_1}$ becomes relevant and we discuss each case separately.

\subsubsection{The homogeneous case}

We do not discuss the case $U(x^1)=0$ separately as it is a special case of the following two models.

\subsubsection{Phase shift}

The normalised and complete eigenfunctions of $-\partial^2_1+U(x^1)$ in this case are 
$$
Y_{p_1}(x^1) = \frac{1}{\sqrt{2\pi}} e^{i p_1 x^1}(\Theta(-x) + \Theta(x) e^{i \delta})
$$
with a constant $\delta \in \bR$. Note that $\overline Y_{p_1} \neq Y_{-p_1}$. We sketch the remaining steps in computing the integral without giving all rather long intermediary results.

\begin{itemize}
\item The $v^1$ and $z^1$ integrals can be performed easily using
$$
\int dx\, e^{i\lambda x} \Theta(\pm x) = \frac{\pm i}{\lambda \pm i \epsilon}\,.
$$
As a result, we get expressions of the form $(s_1 p_1 - s_3 q_1 - \lambda_1 \mp i \epsilon)^{-1}$ and $(s_2 k_1 + s_3 q_1 - \lambda_2 \mp i \epsilon)^{-1}$ respectively, each modulated with the constant phase $\delta$ where applicable.

\item We now perform the $k_1$ integral using the residue theorem. The $k_1$ integrand has the following properties 
\begin{enumerate}
\item It has poles at $k_1=-s_2 s_3 q_1+ s_2 \lambda_1 \pm i s_2\epsilon$.
\item Owing to the appearance of $\om(k_1,\bpp)$, it has two branch cuts located on the imaginary axis for $\Im k_1 \ge \sqrt{|\bpp|^2 + m^2}$ and $\Im k_1 \le -\sqrt{|\bpp|^2 + m^2}$.
\item It is analytic everywhere else and, for sufficiently large $t$, rapidly decreasing for large $|k_1|$ in either the upper right and lower left quadrant of the complex plane if $s_2 = 1$, or the upper left and lower right quadrant of the complex plane if $s_2 = -1$.
\item It is bounded by $|k_1|^{-2}$ for large $|k_1|$ because $H/\om(k_1,\bpp) = O(|k_1|^0)$ for large $|k_1|$ and the pole and $\tilde g$ each contribute a $|k_1|^{-1}$ decay.
\end{enumerate}

\begin{figure}[ht]
\includegraphics[width=\textwidth]{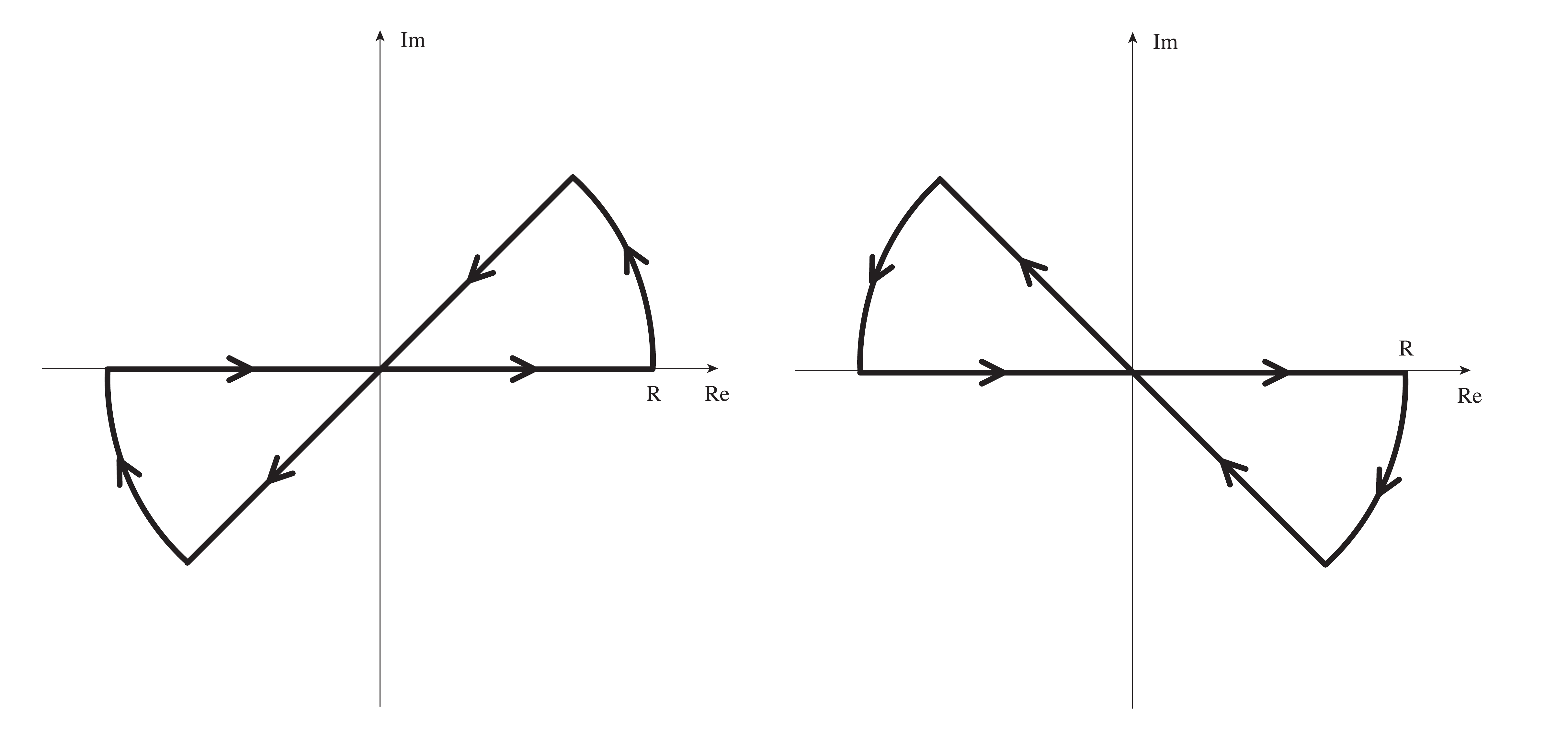}
\caption{``Butterfly'' integration contours used in the limit $R\to \infty$ in various integrals evaluated at asymptotic times. \label{fig_butterfly}}
\end{figure}

Thus we may use left / right ``butterfly'' integration contour depicted in Figure \ref{fig_butterfly} for $s_2 = +1$ / $-1$ in the limit $R\to \infty$. The circular parts of the contour give a vanishing contribution to the contour integral in the limit of infinite $R$ because of the $|k_1|^{-2}$ decay of the integrand. The integrand $I_d(k_1)$ on the diagonal part of the contour can be estimated as
\beq\label{eq_conv_estimate_phase}
|I_d(k_1)|\le C\exp\left(-\left(\frac{y^0+t - A_g - A_\nu}{\sqrt{2}}\right)|k_1|\right)
\eeq
where $A_g$ and $A_\nu$ are the Paley-Wiener constants for $g$ and $\nu$ respectively. Thus, the corresponding integral on the diagonal part of the contour behaves like $O(t^{-1})$ for large $t$ and up to this error term the $k_1$ integral is vanishing or the residue at a pole if the the real and imaginary part of the pole have the correct sign; here the winding number is $+1$ / $-1$ for a pole in the upper right quadrant / lower left quadrant. Thus an expression of the form $(s_2 k_1 + s_3 q_1 - \lambda_2 \mp i s_2 \epsilon)^{-1}F(k_1)$ gives $\pm  2 \pi i s_2 \Theta(\mp( s_2 s_3 q_1 - s_2 \lambda_2))F(-(s_2 s_3 q_1 - s_2 \lambda_2))$ after integration.

\item Next we perform the $\lambda_2$ integral. The integrand inherits the growth/decay properties of the $k_1$ integrand and has a pole at $0 + i s_5 \epsilon$ and branch cuts at the imaginary axis located at $\Re \lambda_2 = s_3 q_1$. Thus we may use the same butterfly contours as above but centered at $\lambda_2 = s_3 q_1 + 0 i$ rather than at the origin. Note that the discontinuities of the integrand at this point induced by the Heaviside functions (or by integrable inverse powers of $k_1$ induced by $m=0$ or $m-\mu_i=0$) are inessential. One may deform the contour away from this point and consider the limit of vanishing deformation. As a result, the integral of an expression of the form $\pm  2 \pi i s_2 \Theta(\mp( s_2 s_3 q_1 - s_2 \lambda_2))F(-(s_2 s_3 q_1 - s_2 \lambda_2))(\lambda_2 - i s_5 \epsilon)^{-1}G(\lambda_2)$ gives -- up to an $O(t^{-1})$ error term from the diagonal part of the contour -- $(2 \pi i)^2 s_2 \delta_{s_5,\pm 1}\Theta(-s_2 s_3 s_5 q_1)F(-s_2 s_3 q_1)G(0)$ after integration.

\item Afterwards we perform the $q_1$ integral. This integrand again inherits the growth/decay properties of the $k_1$ integrand which are modulated by the Bose-Einstein factor $b_{ij}$ which induces additional singularities on the imaginary axis and at most improves the decay for large $|q_1|$ otherwise. The integrand has pole factors $(s_1 p_1 - s_3 q_1 - \lambda_1 \mp i \epsilon)^{-1}$ which still survived from the $z^1$ integration. We use again the butterfly-contours centered at the origin and find that an expression of the form $(s_1 p_1 - s_3 q_1 - \lambda_1 \mp i \epsilon)^{-1}F(q_1)$ gives $\pm  2 \pi i s_2 \Theta(\mp( s_1 s_2  p_1 - s_2 \lambda_1))F(s_1 
s_3 p_1 - s_3 \lambda_1)$ after integration.

\item Finally we perform the $\lambda_1$ integral. We use the aforementioned butterfly contours once more, this time centered at $\Re\lambda_1 = s_1 s_2 p_1$. We find that an expression of the form $\pm  2 \pi i s_2 \Theta(\mp( s_1 s_2  p_1 - s_2 \lambda_1))F(s_1 s_2 s_3 p_1 - s_3 \lambda_1)(\lambda_1 - i s_4 \epsilon)^{-1}G(\lambda_1)$ gives\\$(2 \pi i)^2 s_2 \delta_{s_4,\pm 1}\Theta(-s_1 s_2 s_4 p_1)F(s_1 s_2 s_3 p_1)G(0)$ up to an $O(t^{-1})$ error term after integration. 

At this point, $k_1$ and $q_1$ in $\om(k_1,\bpp)$ and $\om(q_1,\bpp)$ have been replaced by $+p_1$ or $-p_1$ such that both expressions are equal to $\omp$. Consequently $H(s_1,s_2,s_3,\bp,k_1,q_1) \to  \omp^2(s_2 - s_1)(s_1 + s_3)$. This enforces $s_1 = -s_2 = s_3$ which cancels the phase factors $e^{i(s_2+s_3)\delta}$ and $e^{i(s_1-s_3)\delta}$, cancels the non-time-translation invariant terms and implies that the remaining Heaviside functions are $\Theta(s_1 s_4 p_1) \Theta(s_1 s_5 p_1)$ which in turn enforces $s_4 = s_5$ and implies that the $\beta_3$-contribution vanishes.  Altogether we find, also using $\tilde \nu(0) = \sqrt{2\pi}^{-1}\int d x^1 \nu(x^1) = \sqrt{2\pi}^{-1}$,
$$
\Delta_{+,G}(x^0+t,y^0+t,f,g) =$$

$$=  \int d\bp \sum_{s1,s4 = \pm 1} \frac{s_1 b_{s4,s4}(s_1\omp)}{2\, \omp} \Theta(s_1 s_4 p_1) e^{i s_1 \omp(x^0-y^0)} \tilde f_{-s_1}(\bp) \tilde g_{s_1}(-\bp) + O(t^{-1})\,,
$$
and thus
$$
\lim_{t\to\infty}\Delta_{+,G}(x^0+t,y^0+t,\bx,\by)  = \Delta_{+,N}(x,y) \doteq 
$$
\beq\label{eq_2pfphaseshift}
\doteq \frac{1}{(2\pi)^{d-2}} \int d\bp \sum_{s=\pm 1} e^{i s \omp (x^0 - y^0)} \overline{Y_{p_1,s}(x^1)}Y_{p_1,s}(y^1) e^{i \bpp( \bxp-\byp)} s\, b_{\beta(sp_1),\mu(sp_1)}(s \omp)\,,
\eeq
$$
\beta(p_1) \doteq \Theta(p_1) \beta_1  + \Theta(-p_1) \beta_2\,,\qquad \mu(p_1) \doteq \Theta(p_1) \mu_1  + \Theta(-p_1) \mu_2\,.
$$

\end{itemize}

\subsubsection{Delta potential}
\label{sec_convergencedelta}
The normalised and complete eigenfunctions for the potential $U(x^1) = g \delta (x^1)$, $g>0$ are, see e.g. \cite{Patil}
$$
Y_{p_1}(x^1) = a_p e^{i p_1 x^1} + b_p e^{-i p_1 x^1} + d_p\epsilon(x^1) \sin(p_1 x^1)\,,
$$
$$
 a_p \doteq \frac{1}{2 \sqrt{2\pi}} \left(1 + \frac{1}{\sqrt{1+c^2_p}}\right)\,,\qquad b_p \doteq \frac{1}{2 \sqrt{2\pi}}\left(-1 + \frac{1}{\sqrt{1+c^2_p}}\right)\,,\quad d_p\doteq \frac{c_p}{\sqrt{2\pi}\sqrt{1+c^2_p}}\,,
$$
$$ 
c_p\doteq \frac{ g}{2 p_1}\,,\qquad \epsilon(x^1)\doteq \Theta(x^1) - \Theta(-x^1)\,,
$$

The computation strategy is analogous to the case of the simple phase shift and we may perform the integrations in the same order and using the same contours, with the only difference being the fact that some terms contain $\delta$ distributions rather than poles and thus their integration is simplified. The coefficients $a_p$, $b_p$ and $d_p$ are analytic apart from two branch cuts on the imaginary axis and are bounded by constants for large $|p_1|$, thus they do not interfere with the requirements for using the residue theorem as we did in the previous case. An identity which is used several times in the computation is
$$
\int dz \,e^{i a z}\epsilon(z) \sin(b z) = \frac 12 \left(\frac{1}{a+b+i\epsilon}+\frac{1}{a+b-i\epsilon}-\frac{1}{a-b+i\epsilon}-\frac{1}{a-b-i\epsilon}\right)\,.
$$
Again we omit the various lengthy intermediary results and only state the final result which is still reasonably ugly
\beq\label{eq_2pfdelta}
\lim_{t\to\infty}\Delta_{+,G}(x^0+t,y^0+t,\bx,\by)  = \Delta_{+,N}(x,y) \doteq 
\eeq
$$
= \frac{1}{(2\pi)^{d-2}}\int d\bp \sum_{s_1,s_2,s_3=\pm1} \frac{s_1 b_{s_2,s_3}(s_1 \omp)}{2\, \omp}  e^{i s_1 \omp(x^0-y^0)} e^{i \bpp(\bxp-\byp)} \qquad \times 
$$
$$
\times \qquad \left(A(p_1,s_1,s_2,s_3)\overline {Y_{p_1,s}(x^1)}Y_{p_1,s}(y^1)+B(p_1,s_1,s_2,s_3)Y_{p_1,s}(x^1)Y_{p_1,s}(y^1)\right)\,,
$$
$$
A(p_1,s_1,s_2,s_3)\doteq  \left(a_p^2 \Theta(s_1s_3 p_1)+b_p^2 \Theta(-s_1s_3 p_1)+\frac14 d^2_p + \frac{i s_1 b_p d_p}{2}\epsilon(p_1)\right)\quad \times $$
$$
\times \quad \left(a_p^2 \Theta(s_1s_2 p_1)+b_p^2 \Theta(-s_1s_2 p_1)+\frac14 d^2_p + \frac{i s_1 b_p d_p}{2}\epsilon(p_1)\right) -\left(\frac{a_p d_p}{2}\right)^2\,,
$$
$$
B(p_1,s_1,s_2,s_3) \doteq \frac{s_1 i a_p d_p}{2}\epsilon(p_1)\quad \times
$$
$$
\times \quad \left(a^2_p(\Theta(s_1s_2 p_1)-\Theta(-s_1s_3 p_1))+b^2_p(\Theta(-s_1s_2 p_1)-\Theta(s_1s_3 p_1))+\frac{s_1 i b_p d_p}{2}\epsilon(p_1)\right)
$$
which manifestly satisfies $\overline{\Delta_{+,N}(x,y)} = \Delta_{+,N}(y,x)$ as it should and also depends on $\beta_3$, but not on the shape of the transition function $\chi_1$. 

\subsection{Convergence of the condensate}
\label{sec_convergencecondensate}

We want to show that $e^{-i\mu t} \Psi_G(x^0+t,x^1) = \Psi_N(x^0,x^1)+  O(t^{-1})$ for large $t$. This follows from $e^{-i\mu t} [\sigma_i f](x^0+t,x^1) = \frac12 f(x^0,x^1)+  O(t^{-1})$ for $f(x^0,x^1)\doteq e^{i\mu x^0}Y_{0}(x^1)$. We show the latter identity for simplicity for the homogeneous case. The inhomogeneous cases can be treated with a little more effort using the computational techniques employed in the proof of the convergence of the two-point function. In the homogeneous case we choose $\sqrt{2\pi}Y_{0}(x^1) = 1$ instead of $Y_{0}(x^1)$. In analogy to the discussion of the convergence of the two-point function, we write $\Delta$ as a Fourier transform with respect to $\bp$ and $\chi_i$ as a Fourier transform with respect to $\lambda$. We can directly perform the temporal and spatial integrals in the convolution of $\Delta$ with $g \chi_i f$. As a result we get $\delta$-distributions $\delta(p^1-\lambda)$ and $\delta(\bpp)$. After performing the $\bp$ integral, we are left with
$$
e^{-i\mu t} [\sigma_i f](x^0) = e^{i\mu x^0}\sum_{s_1 = \pm 1} -i s_1 s_2\int d\lambda \frac{\widetilde \nu(\lambda)}{\lambda - i s_2 \epsilon}\frac{(s_1 \om_\lambda+\mu)\widetilde{\dot\psi}(s_1 \om_\lambda-\mu)}{2  \om_\lambda}e^{i(s_1  \om_\lambda - \mu)(x^0 + t) }e^{-i \lambda x^1}
$$
with $ \om_\lambda \doteq \sqrt{\lambda^2 + m^2}$ and $s_2 = \pm 1 $ for $\chi_{1/2}$. We can not use a butterfly-contour integral because the pole is at the origin. Instead, for $m>0$, we set $\rho =\lambda t$ and find, using that $\tilde\nu$, $\widetilde{\dot\psi}$ are Schwartz with $\tilde\nu(0)=\widetilde{\dot\psi}(0)=\sqrt{2\pi}^{-1}$, and that $\mu = \pm m$,
$$
e^{-i\mu t} [\sigma_i f](x^0) = e^{i\mu x^0} \frac{-i s_2}{2\pi}\int d\rho \frac{e^{i\frac{\mu}{m} |\rho|} }{\rho - i s_2 \epsilon}+ O(t^{-1})
$$
The remaining integral can be computed, in fact
$$
\int d\rho \frac{e^{i a  |\rho|} }{\rho - i s_2 \epsilon} = s_2 \pi i \qquad \forall a \in \bR\,,
$$
which implies $e^{-i\mu t} [\sigma_i f](x^0) = \frac12 e^{i\mu x^0} + O(t^{-1})$ for $m>0$. For $m=0$ and thus $\mu=0$, both terms in the sum over $s_1$ contribute, but $(s_1 \om_\lambda + \mu)/(2\om_\lambda) = s_1/2$ and thus we get the same result as for $m>0$.

\subsection{Time evolution at finite times}
\label{sec_convergencefinite}
Here we explain the main steps for numerical evaluation of the time evolution of $\om_G$ at finite times. We use the notation from the previous sections of the appendix. We have
$$
[(\sigma_i \otimes \sigma_i) W_{\beta_i}](x_1,x_2) = 
$$
$$
= \frac{1}{(2\pi)^9} \int d^4y \, d^4 z \, d\bp \, d\bk \, d\bq \, \frac{\sin(\omp(x^0_1 - y^0)) \sin(\omk(x^0_2 - z^0)) g(y^0)g(z^0) \cos(\omq(y^0 - z^0))}{ \omp \omk \omq (\exp(\omq \beta_i) - 1) }\quad \times
$$
$$
\times \quad \chi_i(y_1)\chi_i(z_1) \exp i \left(\bp(\bx_1 - \by)+\bk(\bz - \bx_2) + \bq(\by - \bz)\right)
$$
$$
= \frac{1}{(2\pi)^5}\int dy^0\,dy^1 \, dz^0\,dz^1 \, \, d\bp \, dk^1 \, dq_1 \, \quad \times
$$
$$\times \quad \frac{\sin(\omp(x^0_1 - y^0)) \sin(\omega(k^1,\bpp)(x^0_2 - z^0)) g(y^0)g(z^0) \cos(\omega(q_1,\bpp)(y^0 - z^0))}{2 \omp \omega(k^1,\bpp) \omega(q_1,\bpp) (\exp(\omega(q_1,\bpp) \beta_i) - 1) }\quad \times
$$
$$
\times \quad \chi_i(y_1)\chi_i(z_1) \exp i \left(\bpp(\bxpI - \bxpII)+(p^1 x^1_1 - k^1 x^1_2) + (q_1 - p^1)y^1 + (k^1 - q_1)z^1\right)
$$
We choose $\chi_i$ to be convolutions of step functions with normalised Gaussians
$$
\chi_i(x) \doteq \int \Theta_\mp (x-y) G_a(y) dy\,,\qquad G_a(y) \doteq \frac{\exp(-\frac{y^2}{2 a^2})}{\sqrt{2\pi}a}\,,
$$
where the upper sign is for $i=1$ and the lower one is for $i=2$. Using the convolution theorem for the Fourier transform we can ``pull out'' the transform of the Gaussians as a multiplicative factor.
$$
[(\sigma_i \otimes \sigma_i) W_{\beta_i}](x_1,x_2) = \frac{1}{(2\pi)^4}\int dy^0\,dy^1 \, dz^0\,dz^1 \, \, d\bp \, dk^1 \, dq_1 \, \quad \times
$$
$$\times \quad \frac{\sin(\omp(x^0_1 - y^0)) \sin(\omega(k^1,\bpp)(x^0_2 - z^0)) g(y^0)g(z^0) \cos(\omega(q_1,\bpp)(y^0 - z^0))}{ \omp \omega(k^1,\bpp) \omega(q_1,\bpp) (\exp(\omega(q_1,\bpp) \beta_i) - 1) }\quad \times
$$
$$
\times \quad \Theta_\mp(y_1) \Theta_\mp(z_1) \exp i \left(\bpp(\bxpI - \bxpII)+(p^1 x^1_1 - k^1 x^1_2) + (q_1 - p^1)y^1 + (k^1 - q_1)z^1\right) \quad \times
$$
$$
\times \quad G_{\frac 1 a}(q_1 - p^1)G_{\frac 1 a}(k^1 - q_1)\frac{1}{a^2}
$$
Finally we perform the necessary derivatives and set $x = x_1 = x_2$
$$
[(\partial_0 \otimes \partial_0)(\sigma_i \otimes \sigma_i) W_{\beta_i}](x,x) = \frac{1}{(2\pi)^4}\int dy^0\,dy^1 \, dz^0\,dz^1 \, \, d\bp \, dk^1 \, dq_1 \, \quad \times
$$
$$\times \quad \frac{\cos(\omp(x^0 - y^0)) \cos(\omega(k^1,\bpp)(x^0 - z^0)) g(y^0)g(z^0) \cos(\omega(q_1,\bpp)(y^0 - z^0))}{  \omega(q_1,\bpp) (\exp(\omega(q_1,\bpp) \beta_i) - 1) }\quad \times
$$
$$
\times \quad \Theta_\mp(y_1) \Theta_\mp(z_1) \exp i \left((p^1 - k^1) x^1 + (q_1 - p^1)y^1 + (k^1 - q_1)z^1\right)  G_{\frac 1 a}(q_1 - p^1)G_{\frac 1 a}(k^1 - q_1)\frac{1}{a^2}\,,
$$
$$
[(\partial_j \otimes \partial^j + m^2)(\sigma_i \otimes \sigma_i) W_{\beta_i}](x,x) = \frac{1}{(2\pi)^4}\int dy^0\,dy^1 \, dz^0\,dz^1 \, \, d\bp \, dk^1 \, dq_1 \, \quad \times
$$
$$\times \quad \frac{\cos(\omp(x^0 - y^0)) \cos(\omega(k^1,\bpp)(x^0 - z^0)) g(y^0)g(z^0) \cos(\omega(q_1,\bpp)(y^0 - z^0))(\bpp^2 + k^1 p^1 + m^2)}{   \omp  \omega(k^1,\bpp) \omega(q_1,\bpp) (\exp(\omega(q_1,\bpp) \beta_i) - 1) }\quad \times
$$
$$
\times \quad \Theta_\mp(y_1) \Theta_\mp(z_1) \exp i \left((p^1 - k^1) x^1 + (q_1 - p^1)y^1 + (k^1 - q_1)z^1\right)  G_{\frac 1 a}(q_1 - p^1)G_{\frac 1 a}(k^1 - q_1)\frac{1}{a^2}\,,
$$
$$
[(\partial_0 \otimes \partial_1)(\sigma_i \otimes \sigma_i) W_{\beta_i}](x,x) = \frac{1}{(2\pi)^4}\int dy^0\,dy^1 \, dz^0\,dz^1 \, \, d\bp \, dk^1 \, dq_1 \, \quad \times
$$
$$\times \quad \frac{\cos(\omp(x^0 - y^0)) \sin(\omega(k^1,\bpp)(x^0 - z^0)) g(y^0)g(z^0) \cos(\omega(q_1,\bpp)(y^0 - z^0))(-i k^1)}{    \omega(k^1,\bpp) \omega(q_1,\bpp) (\exp(\omega(q_1,\bpp) \beta_i) - 1) }\quad \times
$$
$$
\times \quad \Theta_\mp(y_1) \Theta_\mp(z_1) \exp i \left((p^1 - k^1) x^1 + (q_1 - p^1)y^1 + (k^1 - q_1)z^1\right)  G_{\frac 1 a}(q_1 - p^1)G_{\frac 1 a}(k^1 - q_1)\frac{1}{a^2}\,.
$$
The strategy to compute these integrals is as follows. We choose $\psi$ to be a step function and the integrals over $y^0$ and $z^0$ can be easily performed analytically as in the computation of the convergence. Unfortunately this does not apply to the remaining integrals. In order to avoid the numerical computation of multidimensional oscillating integrals we employ approximations which allow us to perform the momentum space integrals analytically.

\begin{itemize}
\item We choose a non-vanishing mass $m$ which sets the scale. We require $\beta_i m\gg 1$ in order to have $1/(\exp(\beta_i \omega - 1)\approx \exp(-\beta_i \omega)$. If $\beta_i m$ is too large we get an overall suppression and the computation is more sensitive to numerical errors. It turns out that $\beta_i m \gtrsim 50$ is sufficient for an overall numerical error in the percent regime.
\item  The mass also leads to a suppression of $\frac{\omega(q_1,\bpp)}{m} \gg \frac{1}{\sqrt{\beta_i m}}$. Thus for sufficiently large $\beta_i m$ we can approximate $\omega(q_1,\bpp)$ by $\omega(q_1,\bpp) \approx m + \frac{{q_1}^2}{2 m} +\frac{|\bpp|^2}{2 m}$ and analogously for $\omega(q_1,\bpp)^{-1}$.
\item The transformed Gaussians lead to a suppression of $|q_1 - p^1| \gg a^{-1}$, $|q_1 - k^1| \gg a^{-1}$. If $\frac{1}{a m}\lesssim \frac{1}{\sqrt{\beta_i m}}$, then we may approximate $\omega(k^1,\bpp)$ and $\omega(p^1,\bpp)$ in analogy to $\omega(q_1,\bpp)$. Consequently, after decomposing the sines and cosines into exponentials, all momentum space integrands are of the form ``Gaussian $\times$ polynomial'' and can be integrated analytically. The expressions are rather lenghty and difficult to manage without a computer algebra system, thus we do not display them here.
\item The remaining two spatial integrals have to be computed numerically. 
\item The errors from the analytic approximation and the numerical integration can be estimated by computing for the case of $\om_G(\wick{|\Phi^2(x)|})$ the expressions without the Heaviside functions but with the Gaussians in order to maintain the validity of the approximations. It is not difficult to see that for all values of $a$ including $a = 0$ one obtains the -- constant -- expectation values of $\wick{|\Phi^2(x)|}$ in the KMS state.
\end{itemize}

\section{Convergence of the NESS in perturbation theory}
\label{sec_convergenceV}

In this section we compute the convergence of the Feynman graphs of $\omGV$ at first order in perturbation theory. At this order, we have only one internal vertex, which is either dynamical or spectral. We consider all graphs with a dynamical vertex first. 

\subsection{Graphs with a dynamical vertex}
\label{sec_dynamical}

We will show that in the limit $t\to\infty$ all $\omG$ propagators in the Feynman graphs converge to $\omN$ propagators. This does not follow immediately from the convergence of $\omG$ to $\omN$ because the integration at the dynamical vertex is over an unbounded time interval. In other words, for each finite $t$ we can replace $V$ by a functional with compact spacetime support. However, the support of this functional is not uniformly bounded in $t$. For this reason, we need to show that, after performing the spacetime integral at the dynamical vertex, we obtain, in some cases -- in the adiabatic limit -- only after summing appropriate related graphs, expressions in which we can replace all $\omG$ propagators by $\omN$ propagators up to an $O(t^{-1})$ error. This already proves that the dynamical graphs are independent of $\chi_1$ and $\beta_3$. We then show that they are independent of $\psi$ in the limit as well, and show that this limit exists.

Before discussing individual graphs, we record a few useful identities. Using $\Delta_{F/\pm,G} = \Delta_{F/\pm,\infty} + W_G$, $\Delta_F-\Delta_- = i \Delta_R$, we can simplify renormalised pointwise products of $\Delta_{F,G}$ as follows
\beq \label{eq_loopexp}
(\Delta^2_{F,G})_\ren - \Delta^2_{-,G} = (\Delta^2_{F,\infty})_\ren - \Delta^2_{-,\infty} + 2 i \Delta_R W_G\,,
\eeq
and similarly for higher powers. This allows us to write comparatively simple expressions for renormalised Feynman loops. We shall need the Fourier transforms of (pointwise products of) the various propagators. To this avail, we use the following conventions for the Fourier transform for the remainder of this section. We define the Fourier transform of a tempered distribution $u(x)$, $x\in \bR^4$ as
$$
\widetilde u (p^0,\bp) \doteq \frac{1}{(2\pi)^2}\int_{\bR^4} d^4 x\; e^{i p^0 x^0 - i \bp \cdot \bx} u(x)\,.
$$
For Fourier transforms of $u(x)$, $x\in \bR^{d}$, $d<4$ we set
$$
\widetilde u (\bp) \doteq \frac{1}{(2\pi)^2}\int_{\bR^d} d\bx\; e^{- i \bp \cdot \bx} u(\bx)\,.
$$
Abusing notation, we consider translation invariant propagators as being tempered distributions in one variable which we denote by the same symbol. Recalling
$$
\Delta_{\pm,\infty}(x) = \frac{1}{(2\pi)^3}\int_{\bR^3}d\bp \; \frac{e^{\mp i \omp x^0 + i \bp \cdot \bx}}{2 \omp}\,,\qquad \omp\doteq \sqrt{|\bp|^2 + m^2}\,,
$$
$$
 \Delta_{F,\infty}(x) = \frac{-i}{(2\pi)^4}\int_{\bR^4}d^4p \; \frac{e^{-i p^0 x^0 + i \bp \cdot \bx}}{p^2 + m^2 - i\epsilon}\,,\qquad p^2 \doteq - (p^0)^2 + |\bp|^2\,,
$$
we find for pointwise squares
$$
\widetilde{\Delta^2_{\pm,\infty}}(p) = \frac{1}{(2\pi)^4} \int_{\bR^3}d\bq \frac{\delta(p^0 \mp (\omq +\ompq))}{2 \omq \omega_{\bp-\bq}}
$$
\begin{align*}
\widetilde{(\Delta^2_{F,\infty})_\ren}(p) & = -\frac{1}{(2\pi)^6} \int_{\bR^4}d^4q \left(\frac{1}{(q^2 + m^2 - i\epsilon)((p-q)^2 + m^2 - i\epsilon)}-\frac{1}{(q^2 + m^2 - i\epsilon)^2}\right)+C\,,\\
&= \frac{i}{(2\pi)^5}\int_{\bR^3}d\bq\left[\frac12 \frac{1}{p^0 + \omq - \om_{\bp-\bq}}\left(\frac{1}{\omq}\frac{1}{p^0+\omq + \ompq-i\epsilon}+\frac{1}{\ompq}\frac{1}{p^0-\omq - \ompq+i\epsilon}\right)+\right.\\
&\qquad \qquad \qquad \qquad + \left.\frac14 \frac{1}{\omq^3}\right] + C\\ 
& =\int_{\bR^3}d\bq\left( \frac{f_+(p^0,\bp,\bq)}{p^0-\omq - \ompq+i\epsilon}+\frac{f_-(p^0,\bp,\bq)}{p^0+\omq + \ompq+i\epsilon} + f_r(p^0,\bp,\bq)\right) + C
\,,
\end{align*}
$$
f_+(p^0,\bp,\bq) \doteq \frac{i}{(2\pi)^5}\frac12 \frac{1}{p^0 + \omq - \ompq}\frac{1}{\ompq}\qquad f_-(p^0,\bp,\bq) \doteq \frac{i}{(2\pi)^5}\frac12 \frac{1}{p^0 + \omq - \ompq}\frac{1}{\omq}
$$
where $C$ is a renormalisation constant and $f_\pm$ encodes the ``positive/negative frequency'' part of $(\Delta^2_{F,\infty})_\ren$ corresponding to $\Delta^2_{\pm,\infty}$. Recall that we are only considering $m>0$ for simplicity and note that the first factor in the last expressions is not a pole term because the integrand is regular for case where its denominator is vanishing. We shall also need Fourier transforms of $W_G(x,y)$ and its coinciding point limit
$$
w_G(x) \doteq W_G(x,x)
$$
which we can read off from e.g. \eqref{eq_conv1}, viz.
\begin{align*}
\widetilde {W_G}(p,k)& = (2\pi)^4 \int_{\bR}d\lambda \sum_{s_1,s_2,s_3,s_4,s_5 = \pm 1} \frac{f_W(\bp, \bk,\lambda,s_1,s_2,s_3,s_4,s_5)}{(p_1 - \lambda - i s_4 \epsilon)(k_1 + \lambda - i s_5 \epsilon)}\quad \times\\
& \qquad\times \quad \delta(p^0 + s_1 \omp)\delta(k^0 + s_2 \omk)\delta(\bkp+\bpp)
\end{align*}
where 
\begin{align*}
f_W(\bp, \bk,\lambda,s_1,s_2,s_3,s_4,s_5) &\doteq \frac{1}{(2\pi)^6}\frac{s_1 s_2 s_4 s_5\, b_{s_4,s_5}(\omega(\lambda,\bpp)\tilde \nu(p_1 - \lambda)\tilde \nu(k_1 + \lambda)}{8 \omp \omk \omega(\lambda,\bpp)}\quad \times\\
& \qquad \times \quad H(s_1,s_2,s_3,\bp,k_1,\lambda)\,.
\end{align*}
and we are using the same notation as in Section \ref{sec_convergence2pf}. For $w_G$ we have
\begin{align*}
\widetilde{w_G}(k) & = (2\pi)^2 \int_{\bR^3\times \bR}d\bl d\lambda\;  \sum_{s_1,s_2,s_3,s_4,s_5 = \pm 1} \frac{f_W(\bk-\bl, \bl,\lambda,s_1,s_2,s_3,s_4,s_5)}{(k_1 - l_1 - \lambda - i s_4 \epsilon)(l_1 + \lambda - i s_5 \epsilon)}\quad \times\\
& \qquad \times \quad \delta(k^0 + \om_{\bk,\bl,s_1,s_2})\delta(\bkp)\,.
\end{align*}
$$
 \om_{\bk,\bl,s_1,s_2} \doteq s_1 \om_{\bk - \bl}+ s_2 \om_{\bl}
$$
Finally, we record the Fourier transforms of the temporal switch-on function $\psi$ and of $\psi(x^0) \Theta(a-x^0)$
$$
\widetilde\psi(\lambda) = \frac{1}{i} \frac{\widetilde{\dot \psi}(\lambda)}{\lambda - i\epsilon}\,,\qquad \widetilde{[\psi \Theta(a-\cdot)]}(\lambda) = -\frac{1}{i\lambda}\left(e^{-i\lambda a}\widetilde{\dot \psi}(0)- \widetilde{\dot \psi}(\lambda)\right)\,,\quad a > \epsilon\,.
$$
We note that $ \widetilde{[\psi \Theta(a-\cdot)]}(\lambda)$ is a Schwartz function with
$$
 \widetilde{[\psi \Theta(a-\cdot)]}(0)= a + \gamma_\psi\,,\qquad \gamma_\psi\doteq  \int_{\bR} dx\;x \dot\psi(x) \,.
$$

\subsubsection{Graphs of type 1a and 1b}

We have to consider the following graphs, with all propagators being $\omG$ propagators. Note the minus sign between each pair of graphs of the same topology.
\begin{center}
\includegraphics[width=\textwidth]{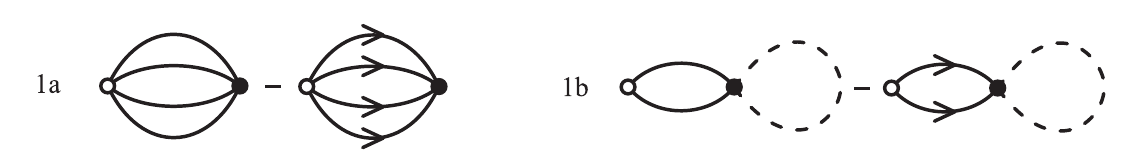}
\end{center}
We only discuss the 1b graphs, the convergence of the 1a graphs can be shown with similar arguments. Using \eqref{eq_loopexp} we decompose these graphs into a sum of two terms. The first of these terms is of the form
\begin{align*}
A_{1\text{b},1}(x) & = \int_{\bR^4} dz\; u(x-z) \psi(z^0) h(\bz) w_G(z) \\
&= C \int_{\bR^4 \times \bR^4} d^4p \,d^4k\; \tilde u(p) \widetilde \psi(k^0 - p^0) \widetilde h(\bp - \bk) \widetilde w_G(k) e^{-i p^0 x^0 + i \bp \cdot \bx}
\end{align*}
where $u \doteq (\Delta^2_{F,\infty})_\ren - \Delta^2_{-,\infty}$, $x$ is the external vertex, $C$ is a suitable constant and we use the notation introduced at the beginning of this section \ref{sec_dynamical}. The finite renormalisation freedom of  $(\Delta^2_{F,\infty}(x-z))_\ren $ is a multiple of  $\delta(x-z)$ and thus gives a contribution to $A_{1\text{b},1}(x)$ proportional to $w_G(x) = W_G(x,x)$. We already know that this converges to $W_N(x,x)$ by Theorem \ref{prop_NESS} proven in Section \ref{sec_convergence2pf}, thus we ignore this contribution. In order to evaluate the other contribution to $A_{1\text{b},1}(x)$, we perform the $k^0$ and $p^0$ integrals, using the residue theorem (with a contour closed in the upper half plane) and $x^0>0$ sufficiently large (larger than a radial bound of the support of $h$) for the pole contributions of the latter. We find
$$
A_{1\text{b},1}(x) = \int_{\bR^{13}}d\bp \,d\bk\, d\bq \,d\bl\,  d\lambda\; \widetilde h (\bp - \bk)e^{i\bp \cdot \bx}\sum_{s_1,s_2,s_3,s_4,s_5 = \pm 1} \frac{f_W(\bk-\bl, \bl,\lambda,s_1,s_2,s_3,s_4,s_5)\delta(\bkp)}{(k_1 - l_1 - \lambda - i s_4 \epsilon)(l_1 + \lambda - i s_5 \epsilon)}\quad \times
$$
$$
\times \quad \left[ C_1\sum_{s_6 = \pm 1}s_6\frac{e^{-s_6 i(\omq+\ompq)x^0}}{4 \omq \ompq} \frac{\widetilde{\dot \psi}\big(-\om_{\bk,\bl,s_1,s_2}- s_6(\omq + \ompq)\big)}{-\om_{\bk,\bl,s_1,s_2} - s_6(\omq + \ompq) - i\epsilon}\quad + \right.
$$
$$
+ \quad \left. C_2\; e^{i \om_{\bk,\bl,s_1,s_2} x^0} \left(\sum_{s_7 = \pm 1}\frac{f_{s_7}(- \om_{\bk,\bl,s_1,s_2} ,\bp,\bq)}{- \om_{\bk,\bl,s_1,s_2} -s_7(\omq + \ompq)+i\epsilon}+f_r(- \om_{\bk,\bl,s_1,s_2} ,\bp,\bq)\right)\widetilde{\dot \psi}(0)\right]\,.
$$
for suitable constants $C_1$ and $C_2$. The $C_1$ term corresponds to the $\Delta^2_{-,\infty}$ contribution and the residue of the $f_+$ pole in $(\Delta^2_{F,\infty})_\ren$, the $C_2$ term corresponds to the residue of the pole in $\widetilde \psi$.

We first consider the temporal limit without the spatial adiabatic limit and start with the $C_2$ term. We first perform the $k_1$ integral using a butterfly-contour centered at $k_1 = 0$ like in Section \ref{sec_convergence2pf}. On account of poles this fixes $k_1$ to be a suitable function of the other integration variables including $\lambda$ and $p_1$. Afterwards we perform the $\lambda$ and $p_1$ integrals -- in the order explained below -- using again a butterfly contour centered at values of these variables corresponding to $k_1(\lambda,p_1,...)=0$. The integrands are analytic on these contours and the circular parts of the contours vanish because of $\widetilde h$, whereas the diagonal parts give $O(1/x^0)$ contributions for sufficiently large $x^0$. For each integration we get a sum of two residues coming from either the poles in $\widetilde w_G$ (``$W$-poles'') or the poles in $(\Delta^2_{F,\infty})_\ren$  (``$F$-poles''). If we integrate a residue from an $F$-pole further, we only have $W$-poles left to consider. Thus, after the $k_1$, $\lambda$ and $p_1$ integration, we have a sum of terms which can be classified, in obvious notation indicating the order of integration, by ${k_1\atop \to}W{\lambda\atop \to}W{p_1\atop \to}F$, ${k_1\atop \to}W{\lambda\atop \to} F{p_1\atop \to} W$, ${k_1\atop \to}F{p_1\atop \to} W{\lambda\atop \to} W$. Let us discuss each of these contributions separately. After the first two integrations in ${k_1\atop \to}W{\lambda\atop \to}W{p_1\atop \to}F$, we get an expression which corresponds to replacing $w_G(z)$ by $w_N(z) = \text{const.}$, modulo an $O(1/x^0)$ error. In particular, this amounts to setting $\om_{\bk,\bl,s_1,s_2} = 0$ above which ``removes'' the $F$-poles because $m>0$ and thus $\omq + \ompq \ge 2m > 0$. Consequently, there are no poles left to integrate with respect to $p_1$ and we can perform the $p_1$ integral without using the residue theorem. We shall do this in a moment. First we shall argue that the contributions ${k_1\atop \to}W{\lambda\atop \to} F{p_1\atop \to} W$ and ${k_1\atop \to}F{p_1\atop \to} W{\lambda\atop \to} W$ are $O(1/x^0)$. This is the case because the residues at the $F$-poles contain $\Theta(-\om_{\bk,\bl,s_1,s_2}\pm 2m)$ factors, with $k_1$ replaced accordingly. However, after performing the subsequent integrations, we will always end up with $\om_{\bk,\bl,s_1,s_2} = 0 \neq \pm 2m$ up to $O(1/x^0)$ errors and with $k_1$ replaced accordingly. These considerations imply that the $C_2$ contribution to $A_{1\text{b},1}(x)$ is proportional to
$$
\int_{\bR^{6}}d\bp \,d\bq\; \widetilde h (\bp ) e^{i\bp \cdot \bx}\left(\sum_{s_7 = \pm 1}\frac{f_{s_7}(0 ,\bp,\bq)}{-s_7(\omq + \om_{\bp-\bq})}+f_r(0,\bp,\bq)\right)\widetilde{\dot \psi}(0) + O(1/x^0)\,.
$$
This integral converges absolutely and is manifestly independent of $\beta_3$, $\chi_1$, and also of $\psi$ because $\widetilde{\dot \psi}(0) = \sqrt{2\pi}^{-1}$ for all admissible $\psi$. In particular, the $\bq$ integral converges because of the renormalising subtraction in the definition of $(\Delta^2_{F,\infty})_\ren$. We remark that the convergence of the integral and the independence of $\psi$ are analogous to the same properties of the corresponding Feynman graph and integral for a perturbative KMS state.

We now discuss the $C_1$ contribution to $A_{1\text{b},1}(x)$, still without performing the spatial adiabatic limit. This contribution can be discussed in close analogy to the $C_2$ contribution. The pole structures are the same, with the only difference being that the poles of $\widetilde \psi$ in $C_1$ play the role of the $F$-poles in $C_2$. In order to follow the arguments in the analysis of the $C_2$ contribution, we first need to perform an integration in order to make the integrand oscillating in $k_1$. We can achieve this by integrating in either $p_2$ or $p_3$ along a butterfly contour centered at the origin. This gives the residue at the $\widetilde \psi$-pole, and thus $s_6(\omq + \ompq) = - \om_{\bk,\bl,s_1,s_2}$, up to an $O(1/x^0)$ error. The $k_1$ and $\lambda$ integrals may now be performed as before with the result that  we can replace $w_G$ in the $C_1$ contribution by $w_N$ modulo $O(1/x^0)$. Hence, the $C_1$ contribution to $A_{1\text{b},1}(x)$ is proportional to
$$
\int_{\bR^{6}}d\bp \,d\bq\; \widetilde h (\bp ) e^{i\bp \cdot \bx}\sum_{s_6 = \pm 1}s_6\frac{e^{-s_6 i(\omq+\ompq)x^0}}{4 \omq \ompq} \frac{\widetilde{\dot \psi}(- s_6(\omq + \ompq))}{ - s_6(\omq + \ompq) }+O(1/x^0) = O(1/x^0)\,,
$$
where the last identity follows by stationary phase arguments or by integrating w.r.t. e.g. $p_1$ along a butterfly contour.

We now discuss the spatial adiabatic limit. We find the same expressions as above, but with $\widetilde h$ proportional to a $\delta$-distribution. It is not difficult to see that the corresponding integrals all exist after integrating $A_{1\text{b},1}(x)$ with a test function in $x$. Thus, $A_{1\text{b},1}(x)$ exists as a distribution in the adiabatic limit. After performing the $\bp$ integration, the remaining integrand is meromorphic and we can use the same arguments as the ones used in the absence of the adiabatic limit in order to arrive at the conclusion that $w_G$ can be replaced in $A_{1\text{b},1}(x)$ by $w_N$ modulo an $O(1/x^0)$ error. In order to have all integrals converging, we can integrate in $x^0$ with respect to a test function $f(x^0-t)$, $f\in C^\infty_0(\bR)$ and consider $t\to \infty$. We are allowed to do this because we want to show convergence of $A_{1\text{b},1}(x)$ in the distributional sense. Consequently, we find in the adiabatic limit
$$
A_{1\text{b},1}(x)= C_1 \int_{\bR^{3}}\,d\bq\; \left(\sum_{s_7 = \pm 1}\frac{f_{s_7}(0 ,0,\bq)}{-2s_7\om_{\bq}}+f_r(0,0,\bq)\right)\widetilde{\dot \psi}(0) \quad +
$$
$$
+ \quad C_2  \int_{\bR^{3}}d\bq\; \sum_{s_6 = \pm 1}s_6\frac{e^{-2 s_6 i\omq x^0}}{4 \omq^2} \frac{\widetilde{\dot \psi}(- 2 s_6 \omq)}{ - 2 s_6 \omq }+O(1/x^0) \,,
$$
with suitable constants $C_1$ and $C_2$. The first integral vanishes identically because its integrand does, and the second one is $O(1/x^0)$ by e.g. stationary phase arguments.

We discuss the second contribution to the 1b graphs, which is of the form
\begin{align*}
A_{1\text{b},2}(x) & = \int_{\bR^4} dz\; u(x,z) \psi(z^0) h(\bz) w_G(z) 
\end{align*}
with 
$$
u(x,z) = 2 i \Delta_R(x-z)W_G(x,z) = 2i \Theta(x^0-z^0) \Delta(x-z)W_G(x,z)\,.
$$ 
We recall that $W_G$ is smooth such that $u$ is a well-defined (tempered) distribution. We first discuss the situation without the adiabatic limit. We have
$$
A_{1\text{b},2}(x^0+t,\bx) = 2i \int^{x^0}_{-\epsilon-t}\int_{\bR^3}dz^0 \,d\bz\; \Delta(x-z)W_G(x+t,z+t) \psi(z^0+t) h(\bz) w_G(z+t) 
$$
where we recall that $\psi(z^0) = 0$ for $z^0 < -\epsilon$. We decompose the $z^0$ integral into two parts,
\begin{align*}
A_{1\text{b},2}(x^0+t,\bx) & = 2i \int^{x^0}_{x^0-a}\int_{\bR^3}dz^0 \,d\bz\; \Delta(x-z)W_G(x+t,z+t) \psi(z^0+t) h(\bz) w_G(z+t) \\
&\quad + 2i \int^{x^0-a}_{-\epsilon-t}\int_{\bR^3}dz^0 \,d\bz\; \Delta(x-z)W_G(x+t,z+t) \psi(z^0+t) h(\bz) w_G(z+t) \,,
\end{align*}
where $a>1$ is such that $h(\bx) = 0$ for $|\bx|>a-1$.
The first contribution can be interpreted as evaluating a distribution of compact support ($\Delta$ times two Heaviside distributions) on a smooth function which is continuous in $t$. Thus, we may replace $W_G$ and $w_G$ by $W_N$ and $w_G$ up to $O(t^{-1})$ errors in this contribution. The integrand of the second term is a smooth function. On its integration domain, and uniformly in $x$ varying over compact sets, we have
$$
|\Delta(x-z)| \le  \frac{C_\Delta}{(x^0 - z^0)^{3/2}}\,, \qquad w_G(z+t) = w_N(0) + \frac{C_{w}}{t}\,,
$$
$$
W_G(x+t,z+t)  = W_N(x-z) +  \frac{C_{W,1}}{t}\,,\qquad |W_N(x-z)| \le  \frac{C_{W,2}}{x^0 - z^0}
$$
for suitable constants $C_\Delta$, $C_{W,1}$, $C_{W,2}$, $C_w$ and $t>0$. The decay estimates for $\Delta$ and $W_N$ are derived in Section \ref{sec_asymptotic}. These considerations imply that the $O(t^{-1})$ error terms are still $O(t^{-1})$ after the integration, and that the integral of the other terms exists.

We now discuss $A_{1\text{b},2}(x)$ in the adiabatic limit starting from 
\begin{align*}
A_{1\text{b},2}(x) & =  C \int_{\bR^{16}} d^4k \,d^4l \,d^4m\,d^4n\;e^{-i (k^0+m^0) x^0 + i (\bk+\bm) \cdot \bx}\quad\times \\
&\quad\times\quad\widetilde \Delta(k) \widetilde{[\Theta(x^0-\cdot) \psi]}\big(-k^0+n^0+l^0\big) \widetilde W_G(m,n) \delta \big(\bk-\bn-\bl\big)\widetilde w_G(l) 
\end{align*}
with $C$ a constant. We insert the Fourier transforms and perform the integrals over the temporal components of the momenta, finding
\begin{align*}
A_{1\text{b},2}(x) & =  C \int_{\bR^{17}} d\bk \,d\bl \,d\bm\,d\bn\,d\bq\,d\lambda\,d\mu\;\sum_{s_i,r_i = \pm1}s_6\,e^{-i (s_6 \omk - s_1 \omm) x^0 + i (\bk+\bm) \cdot \bx}\quad\times \\
&\quad\times\quad\frac{f_W(\bm, \bn,\lambda,s_1,s_2,s_3,s_4,s_5)}{(m_1 - \lambda - i s_4 \epsilon)(n_1 + \lambda - i s_5 \epsilon)} \frac{f_W(\bl-\bq, \bq,\mu,r_1,r_2,r_3,r_4,r_5)}{(l_1 - q_1 - \mu - i r_4 \epsilon)(q_1 + \mu - i r_5 \epsilon)}\quad\times \\
&\quad\times\quad  \frac{1}{\omk} 
\widetilde{[\Theta(x^0-\cdot) \psi]}\big(-s_6 \omk -s_2 \omn -\om_{\bl,\bq,r_1,r_2}\big)\delta \big(\bk-\bn-\bl\big) \delta(\bmp+\bnp)\delta(\blp)\,.
\end{align*}
Once again, it is not difficult to see that all integrals exist after integrating $A_{1\text{b},2}(x)$ with a test function in $x$; this implies that  $A_{1\text{b},2}(x)$ exists as a distribution in the adiabatic limit. We set
$$
A_{1\text{b},2}(x) \doteq B_{1\text{b},2}(x,\tau)|_{\tau=x^0}
$$
where $B_{1\text{b},2}(x,\tau)$ is defined by replacing $\widetilde{[\Theta(x^0-\cdot) \psi]}$ with $\widetilde{[\Theta(\tau-\cdot) \psi]}$ in $A_{1\text{b},2}(x)$. We know that we can replace $W_G$ and $w_G$ in $B_{1\text{b},2}(x,\tau)$ by $W_N$ and $w_G$ up to an $O(1/x^0)$ error. However, since the integrand of $B_{1\text{b},2}(x,\tau)$ is $O(\tau)$ if the argument of $\widetilde{[\Theta(x^0-\cdot) \psi]}$ is vanishing, we don't know if the $O(1/x^0)$ errors are still vanishing for large $x^0$ also after setting $\tau = x^0$. To show that this is actually the case, we consider $\partial_\tau B_{1\text{b},2}(x,\tau)$ which is manifestly uniformly bounded in $\tau$. We define $D(t,\tau)\doteq \int_{\bR^4}d^4x\; f(x^0-t,\bx)\partial_\tau B_{1\text{b},2}(x,\tau)$ with $f$ a test function. We can perform the integrals in $D(t,\tau)$ in the following order. We first integrate with respect to $\bk$, $\blp$ and $\bnp$. The resulting integrand is oscillating in $l_1$ and $n_1$ and we can perform the $l_1, \mu, n_1, \lambda$ integrals using butterfly contours. The result of each of these four integrals is a residue plus an error term corresponding to the integral on the diagonal part of the butterfly contour, which is an integral of an absolutely integrable function. Thus we get a sum of $2^4$ terms, one of which corresponds to considering only the residue of each integral, while the others correspond to consider at least one of the error terms. We denote the former contribution by $D_N(t,\tau)$ and thus have $D(t,\tau) = D_N(t,\tau) + D_E(t,\tau)$, with $D_E(t,\tau) \doteq D(t,\tau) - D_N(t,\tau) $. The $(t,\tau)$-dependence of remaining integrand in $D_N(t,\tau)$, which is absolutely integrable with respect to the remaining variables $\bm$ and $\bq$, is $\exp(-i(s_6+s_2)\omm(t-\tau))$. This follows from the fact that the residues of the $l_1, \mu, n_1, \lambda$ integrals enforce $l_1 = n_1-m_1 = 0$. Thus, $D_N(t,\tau)$ is bounded in $(t,\tau)$, and constant for $t=\tau$. The situation is different for $D_E(t,\tau)$, which is asymptotically decaying in $t$ uniformly in $\tau$. To see this, let us consider the contribution $D_{E_1}(t,\tau)$ to $D_E(t,\tau)$ which corresponds to considering the residues of the $l_1, \mu, n_1$ integrals and the error terms of the $\lambda$-integral. This contribution reads
$$
D_{E_1}(t,\tau) =   \int_{\bR^{4}} d\bm \;d\lambda\; F(\lambda, \bm) \exp\left(+is_1 \omm t+i\sqrt{(\lambda - i s_6(\lambda - m_1))^2+\bmp^2}(-s_6(t-\tau)+s_2 \tau)\right)
$$
where $F$ is such that the integrand is absolutely integrable for sufficiently large $t$. By stationary phase arguments, $D_{E_1}(t,\tau) = O(t^{-2})$ uniformly in $\tau$. With similar arguments one can also show that $D_{E}(t,\tau) = O(t^{-2})$ uniformly in $\tau$. This implies that we can replace $W_G$ and $w_G$ in $A_{1\text{b},2}(x)$ by $W_N$ and $w_N$ up to an $O(1/x^0)$ error, viz.
\begin{align*}
A_{1\text{b},2}(x) & =  C \int_{\bR^{3}} d\bp\sum_{s_6,s_2=\pm1}\frac{s_6\, e^{-i (s_6+s_2)\omp x^0 }b_{\beta(-s_2 p_1)}(\omp)\widetilde{[\Theta(x^0-\cdot) \psi]}(-(s_6+s_2)\omp)}{\omp^2} + O(1/x^0)\\
&= i C \int_{\bR^{3}} d\bp\sum_{s=\pm1}\frac{s\, b_{\beta(-s p_1)}(\omp)}{\omp^2}  \frac{\left(\widetilde{\dot \psi}(0)-e^{-i 2 s \omp x^0 } \widetilde{\dot \psi}(-2 s \omp)\right)}{-2 s \omp}+ O(1/x^0)
\end{align*}
with $b_{\beta(p_1)}(\omp) = \exp\big((\beta_1 \Theta(p_1)+\beta_2 \Theta(-p_1))\omp-1\big)^{-1}$. Note in particular that the $s_6 + s_2 = 0$ terms in the first expression, which are both $O(x^0)$ and thus potentially obstructing the existence of the large-time limit, cancel each other. The remaining $\widetilde{\dot \psi}(0)$-term is manifestly independent of $\psi$, finite and constant in $x^0$. The $\widetilde{\dot \psi}(-2 s \omp)$-term is $O(1/x^0)$ by stationary phase arguments.

\subsubsection{Graphs of type 2a, 2b and 2c}

The graphs with two external vertices and a dynamical internal vertex are enumerated in the following figure, with all propagators being $\om_G$ propagators. We omit all graphs which are related by exchanging the two external vertices.
\begin{center}
\includegraphics[width=\textwidth]{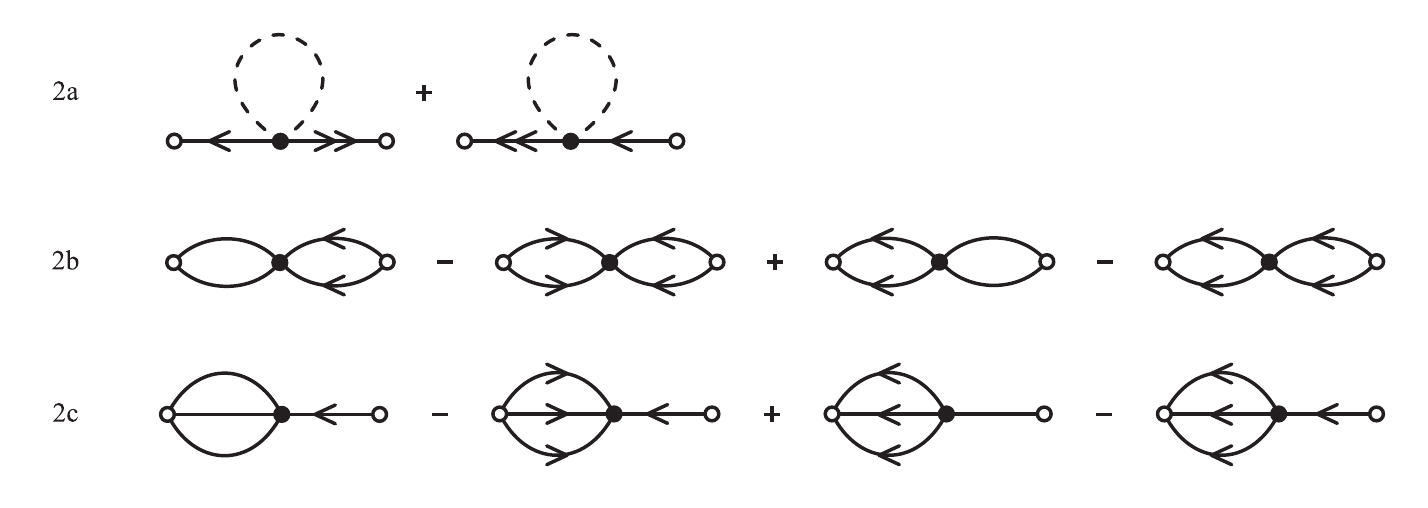}
\end{center}
We have seen in the discussion of the graphs with one external vertex that in the adiabatic limit terms which grow linearly with time may arise. In the graphs with one vertex, these terms identically cancel and thus do not pose obstructions for the long time limit. Similar terms appear in the adiabatic limit for graphs with more external vertices. In the case of two external vertices, there are three distinct mechanisms which prevent such terms to be problematic at large times. We shall discuss them now. To this end, we remark that these terms only appear in the contributions with $\om_N$ propagators and not in the error terms of the $\om_G - \om_N$ contributions which vanish at large times. With this in mind, we consider the following expression
$$
A_{D,W_1,W_2}(x,y) \doteq \int_{\bR^4}dz\;W_1(x-z)\Delta_R(x-z) \psi(z^0) D(z-y)W_2(z-y)
$$
where we assume that the integral exists, that all quantities are tempered distributions, and that
$$
W_i(x) = W_i(-x) = \overline{W_i(x)}\,,\qquad W_i(x^0,\bx) = W_i(x^0,-\bx)\,,\qquad i\in\{1,2\}\,,
$$
$$
D(-x) = \overline{D(x)}\,,\qquad D(x^0,\bx) = D(x^0,-\bx)\,.
$$
We have
\begin{align*}
A_{D,W_1,W_2}(x,y) & = \int_{\bR^{16}}d^4k\,d^4p\,d^4q\,d^4l\;\widetilde{W_1}(k)\widetilde{\Delta}(p)\widetilde D(q)\widetilde{W_2}(l)e^{i(\bk+\bp)\cdot (\bx-\by)}\delta(\bq + \bl - (\bk + \bp))\quad \times\\
& \qquad \times  \quad \widetilde{[\Theta(x^0-\cdot) \psi]}(q^0+l^0-(k^0+p^0))e^{-i\big((k^0+p^0)x^0-(q^0+l^0)y^0\big)} \,.
\end{align*}
We observe that terms linear in $x^0$ can only arise if the integrand has support where $q^0+l^0=k^0+p^0$. For example, if we consider the 2c graphs with all propagators being vacuum ones, then this does not occur because four vectors on the mass shell (recall that we consider only $m>0$) can not sum to zero if three are pointing in the same direction. The next observation is that, if $D$ is real (and thus symmetric), then $\widetilde D(-p) = \widetilde D(p)$, whereas $\widetilde \Delta(-p) = - \widetilde \Delta(p)$, because $\Delta$ is antisymmetric. This implies that all contributions with $q^0+l^0=k^0+p^0$ cancel for a symmetric $D$ after integration. The last mechanism which renders potentially growing terms harmless is a cancellation which is similar to the one we have observed in graphs with one external vertex. To this end we first consider $B(x,y)\doteq A_{D,W_1,W_2}(x,y) + A_{\overline{D},W_1,W_2}(y,x)$  and find after a few manipulations
\begin{align*}
B(x,y) & =  \int_{\bR^{16}}d^4k\,d^4p\,d^4q\,d^4l\;\widetilde{W_1}(k)\widetilde{\Delta}(p)\widetilde D(q)\widetilde{W_2}(l)e^{i(\bk+\bp)\cdot (\bx-\by)}\quad \times\\
& \qquad \times \quad\left( \widetilde{[\Theta(x^0-\cdot) \psi]}(q^0+l^0-(k^0+p^0)) e^{-i\big((k^0+p^0)x^0-(q^0+l^0)y^0\big)}\quad -\right.\\
 & \qquad \quad -\quad \left. \widetilde{[\Theta(y^0-\cdot) \psi]}(k^0+p^0-(q^0+l^0)) e^{-i\big((q^0+l^0)x^0-(k^0+p^0)y^0\big)}\right)\quad \times \\
& \qquad \times \quad\delta(\bq + \bl - (\bk + \bp))\,.
\end{align*}
Clearly, all contributions of the integral for $q^0+l^0=k^0+p^0$ are of the form $(x^0 - y^0) \times g(x^0-y^0,\bx-\by)$ for a suitable $g$, and thus harmless in the limit $x^0 + t$, $y^0 + t$ for $t\to\infty$. Another cancellation occurs in $C(x,y)\doteq A_{D,W_1,1}(x,y) +A_{ \overline{D},1,W_1}(y,x)$ with $D = \Delta_{+,\infty}$. For such a contribution we find
\begin{align*}
C(x,y) & =  \int_{\bR^{16}}d^4k\,d^4p\,d^4q\;\widetilde{W_1}(k)\widetilde{\Delta}(p)\widetilde{\Delta_{+,\infty}}(q)\delta(\bq + \bl - (\bk + \bp))\quad \times\\
& \qquad \times \quad\left( \widetilde{[\Theta(x^0-\cdot) \psi]}(q^0-p^0-k^0) e^{-i\big((k^0+p^0)x^0-q^0 y^0\big)}e^{i(\bk+\bp)\cdot (\bx-\by)}\quad -\right.\\
 & \qquad \quad -\quad \left. \widetilde{[\Theta(y^0-\cdot) \psi]}(p^0-q^0-k^0) e^{-i\big((k^0+q^0)x^0-p^0y^0\big)}e^{i(\bk+\bq)\cdot (\bx-\by)}\right)\,.
\end{align*}
If $q^0-p^0-k^0 = 0$ or $p^0-q^0-k^0=0$ respectively for the two contributions, then we can exchange $\bq$ for $\bp$ in the second term because of the similarities of $\widetilde{\Delta}$ and $\widetilde{\Delta_{+,\infty}}$. Consequently, all contributions to $C(x,y)$ which grow linearly in either $x^0$ or $y^0$ combine to form time-translation-invariant expressions.

If we consider all graphs of the types 2a, 2b and 2c with all propagators being $\om_N$ ones and expand all such propagators into sums of $\om_\infty$ propagators and $W_N$, we can observe that all contributions are harmless in the large time limit because they fall into one of the cases we just discussed.

Apart from this observation, the analysis of the large time limit of the graphs with two external vertices $(x^0+t,\bx)$ and $(y^0+t,\by)$ and one dynamical vertex can be made using arguments and computational steps which are very similar to the ones used and made in the analysis of graphs with one external vertex. In particular, the integrals exist in the adiabatic limit, all $\om_G$ propagators can be replaced by $\om_N$ propagators up to $O(t^{-1})$ error terms, and the limit $t \to \infty$ of the remaining expressions exists and is independent of $\psi$.

\subsubsection{Graphs of type 3 and 4}
One again we enumerate all possible graphs of these type, omitting all graphs which are related by exchanging external vertices. The empty squares are placeholders for either type of single arrow, with the limitation that in the first two graphs of type 3 both arrows must point in the same direction.
\begin{center}
\includegraphics[width=\textwidth]{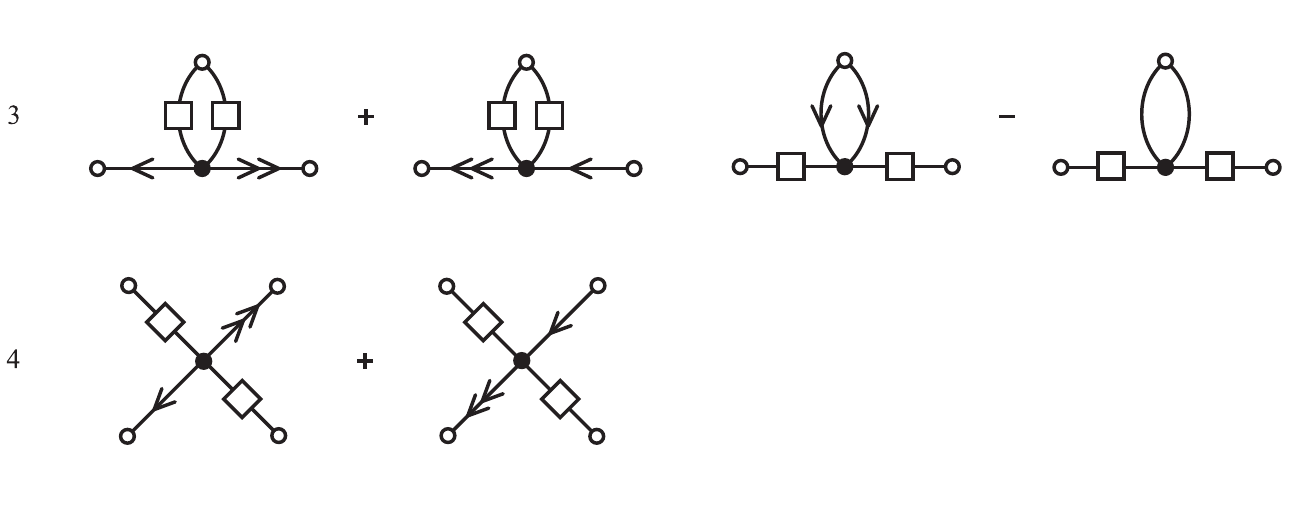}
\end{center}
As with the graphs of type 2, we remark that all these expressions exist in the adiabatic limit, and that either with our without this limit the propagators of $\om_G$ can be replaced by $\om_N$ propagators up to $O(t^{-1})$ errors. The long time-limits of the $\om_N$ contributions exist and are independent of $\psi$. All these statements can be verified using the same arguments we already employed in discussing the graphs with less vertices. In particular, the fact that potential linearly growing terms are absent in the large time limit can be seen for the first two classes of type 3 graphs and all type 4 graphs in the above figure by slightly generalising the corresponding discussion for two external vertices by adding an additional vertex factor which depends only on the remaining (one or two) vertices and thus does not interfere with the mechanisms that prevent or cancel the linearly growing terms arising from the two vertices considered. For the last two classes of type 3 graphs, the linearly growing terms can be seen to cancel much like the ones in the discussion of type 1 graphs.

\subsection{Graphs with a spectral vertex}
\label{sec_spectral}

We now discuss graphs with the only internal vertex being a spectral vertex. In these graphs, all propagators connected to external vertices are of the form $[\sigma_i \Delta_{+,\beta_j}](x,z)$ with $x$ an external vertex, $z$ the internal vertex and combinations of $i\in\{1,2\}$, $j\in\{1,2,3\}$ which are ``allowed'' by the definition of $\omGV$ in Theorem \ref{prop_initialinteracting}. Moreover, all propagators connecting the internal vertex to itself are of the form $W_{\beta_j}(z,z)=\text{const.}$. $[\sigma_i \Delta_{+,\beta_j}](x,z)$ is distribution of Hadamard type by Proposition \ref{prop_Hadamard}. Thus, its pointwise products exist, which was of course necessary for $\omGV$ to be well-defined in the first place. In the analysis of the existence of the adiabatic limit of the spectral graphs, as well as in the analysis of their asymptotic limits with or without the adiabatic limit, it does, as will be evident from the following discussion, not make a difference whether the external vertices are separate or coinciding. Thus it is sufficient to analyse only the graphs of type 2a and 4 which are displayed in the following figure.
\begin{center}
\includegraphics[width=0.6\textwidth]{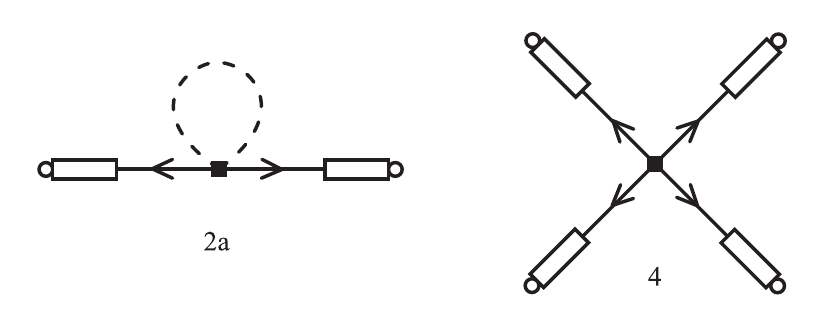}
\end{center}
We define
$$
\Delta_{+,\beta,s}(x,y)\doteq [\sigma_s \Delta_{+,\beta}](x,y)\,,\qquad s=\pm 1\,,\quad \sigma_{\pm 1} \doteq \sigma_{1/2}\,
$$
and investigate the behaviour of this distribution for large $x^0$. We shall in fact do this for the analytically continued distributions $\Delta_{+,\beta,i}(x,y^0+iu,\by)$, $u\in [0,\beta]$, which appear in the graphs with a spectral index. We have, using the notation of Section \ref{sec_convergence2pf},
\begin{align*}
\Delta_{+,\beta,s}(x,y^0+iu,\by) & =\int_{\bR^4}d^4z\; \Delta(x-z) g(z^0) \chi_s(z^1) \Delta_{+,\beta}(z,y^0+iu,\by)\\
&= \frac{-s i}{(2\pi)^4}\int_{\bR^6}d\bp\,d\bk\; \sum_{s_1,s_2=\pm1}\frac{s_1s_2e^{i(s_1\omp x^0 - s_2\omk y^0)+s_2\omk u}b_\beta(s_2\omk)}{4 \omp \omk}\quad \times\\
&\quad\times\quad(s_1\omp + s_2 \omk) \widetilde{\dot\psi}(s_1\omp - s_2 \omk)\frac{\tilde\nu(k_1-p_1)}{k_1 - p_1 - is_3 \epsilon}e^{i(\bp \bx - \bk \by)}\delta(\bpp - \bkp)\,,
\end{align*}
Using a butterfly contour as in Section \ref{sec_convergence2pf} we can perform the $p_1$ integral and find
$$
\Delta_{+,\beta,s}(x,y^0+iu,\by)  = \Delta^{(s)}_{+,\beta}(x,y^0+iu,\by) + O(1/x^0)\,,
$$
$$
\Delta^{(s)}_{+,\beta}(x,y) \doteq \frac{1}{(2\pi)^3}\int_{\bR^3}d\bp\; \sum_{s_1\pm1}\Theta(s s_1 p_1)\frac{s_1 e^{i(s_1\omp (x^0 -  y^0)}b_\beta(s_1\omp)}{2 \omp }e^{i\bp \cdot(\bx - \by)}\,.
$$
We note that $\Delta^{(s)}_{+,\beta}$ is a Schwartz distribution and that
$$
\Delta_{+,N}=\Delta^{(+1)}_{+,\beta_1}+\Delta^{(-1)}_{+,\beta_2}\,.
$$
Moreover, the analysis of the asymptotic properties of $\Delta_{+,N}$ in Section \ref{sec_asymptotic} implies that, for sufficiently large $x^0$ and $u\in [0,\beta]$
$$
\Delta^{(s)}_{+,\beta}(x^0+iu,\bx) \le \frac{C}{|x^0|}
$$
uniformly in $\beta$ and in $\bx$ varying over compact subsets. These considerations allow us to straightforwadly discuss the graphs with a spectral vertex in the absence of the adiabatic limit. In fact, their amplitudes are of the form
$$
A_{\beta_i,s_1,\dots,s_n}(x_{(1)},\dots,x_{(n)}) = W_{\beta_i}(0,0)^{\delta_{n,2}}\int_{\bR^4}d^4z\int^{\beta_i}_{0}du \; \dot\psi(z^0)h(\bz)  \prod^n_{j=1}\Delta_{+,\beta_i,s_j}(x_{(j)},z^0+iu,\bz)
$$
with $\beta_i = \beta_{1/2}$ for $s_1 =\dots=s_n = \pm1$ and $\beta_i = \beta_3$ otherwise; additionally, $n\in\{2,4\}$. For sufficiently large $x^0_{(j)}$ for all $j$, the integrand is bounded by $|\min(x^0_{(1)},\dots,x^0_{(n)})-z^0|^{-n}$ and the integral is vanishing like $|\min(x^0_{(1)},\dots,x^0_{(n)})|^{-n}$ because all integrations are over a compact fixed region. This shows that, at first order, all graphs with a spectral vertex are vanishing at large times if we do not perform the adiabatic limit.

We now analyse $A_{\beta_i,s_1,\dots,s_n}\big(x^0_{(1)}+t,\bx_{(1)},\dots,x^0_{(n)}+t,\bx_{(n)}\big)$ for large times after performing the adiabatic limit. Replacing all factors in the integrand by their Fourier transforms, it is manifest that the $z$-integral exists in the sense of a distribution in the adiabatic limit, uniformly in $t$. Thus, for sufficiently large $t$, we may replace the $\Delta_{+,\beta,s}$-propagators by $\Delta^{(s)}_{+,\beta}$-propagators up to an $O(t^{-1})$ error. After doing so, and performing the $u$ integral, we find, for a suitable constant $C$,
$$
A_{\beta_i,s_1,\dots,s_n}\big(x^0_{(1)}+t,\bx_{(1)},\dots,x^0_{(n)}+t,\bx_{(n)}\big)=C\sum_{r_1,\dots,r_n=\pm1} \int_{\bR^{3n}}\prod^n_{j=1}d\bp^{(j)} \frac{r_j b_{\beta_i}(r_j \om_{\bp^{(j)}})}{\om_{\bp^{(j)}}}\Theta\big(s_j r_j p^{(j)}_1\big)\quad\times
$$
$$
\times\quad \exp \left(i\sum^n_{j=1}\left(r_j \om_{\bp^{(j)}} (x^0_{(j)}+t) + \bp^{(j)} \cdot \bx_{(j)}\right)\right)\frac{\exp\left(\beta_i\sum^n_{j=1}r_j \om_{\bp^{(j)}}\right)-1}{\sum^n_{j=1}r_j \om_{\bp^{(j)}}}\quad\times
$$
$$
\times\quad \widetilde{\dot\psi}\left(\sum^n_{j=1}r_j \om_{\bp^{(j)}}\right)\delta\left(\sum^n_{j=1}\bp^{(j)} \right) + O(t^{-1})\,.
$$
We face the situation that we sum $n$ four-vectors on the mass-shell whose spatial components sum up to zero. This implies
$$
\sum^n_{j=1}r_j \om_{\bp^{(j)}} = \sum^n_{i=1}r_i \sum^n_{j=1} \om_{\bp^{(j)}}\,.
$$
The contributions with $\sum_i r_i \neq 0$ are $O(t^{-1})$ by stationary phase arguments. The $\delta$-factor implies in conjuction with the $\Theta$-factors that the prevailing contributions with $\sum_i r_i = 0$ are only non-vanishing if
$$
\sum^n_{j=1}r_j s_j = 0\,,
$$
i.e.~if $s_1 = \dots = s_n$. Consequently, we arrive at the conclusion that the graphs with a spectral vertex are asymptotically independent of $\beta_3$ in the adiabatic limit. We note that a similar observation implies that $\omNV = \ombV$ for $\beta_1 = \beta_2$, meaning that the $\Theta$-terms cover all the momentum space which is accessible to an interacting KMS state if we sum over both possible signs of $s_1 = \dots = s_n = \pm1$.

\section{Asymptotic properties of $\omb$ and $\omN$}
\label{sec_asymptotic}

\begin{proposition}  $\Delta_{+/F,\beta}(t,\bx)$ and their spacetime derivatives are for  large $t$ and $m>0$, $d=4$ bounded by $C_0 |t|^{-3/2}$ uniformly in $\bx$ (varying over compact sets). This also holds for the analytic continuation $\Delta_{+}(t-iu,\bx)$, $0\le u \le \beta$. In the absence of spacetime derivatives, the $|t|^{-3/2}$ decay is sharp.

$\Delta_{+/F,N}(t,\bx)$ and their spacetime derivatives are for  large $t$ and $m>0$, $d=4$ bounded by $C |t|^{-1}$ uniformly in $\bx$ (varying over compact sets), for large $x_1$ and $m>0$ bounded by $C_1 |x_1|^{-1}$ and for large $x_2$, $x_3$ exponentially decaying. In the absence of spacetime derivatives, the $|t|^{-1}$ decay and $|x_1|^{-1}$ decay are sharp.
\end{proposition}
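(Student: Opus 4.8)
Since $\Delta_F=\Delta_++i\Delta_A$ and the advanced Green's function $\Delta_A$ of $K$ is supported in the closed past light cone of the origin, one has $\Delta_{F,\beta}(t,\bx)=\Delta_{+,\beta}(t,\bx)$ and $\Delta_{F,N}(t,\bx)=\Delta_{+,N}(t,\bx)$ at every argument lying in the open future ($t$ large, $\bx$ in a fixed compact set) or spacelike ($x_1$, resp.\ $x_2,x_3$, large, the remaining coordinates bounded); at all these arguments the Wightman functions are smooth, so the bounds are pointwise. I would work from the mode representations \eqref{eq_KMS2pf}, \eqref{eq_NESS2pf} and split $\Delta_{+,\beta}=\Delta_{+,\infty}+W_\beta$ and $\Delta_{+,N}=\Delta_{+,\infty}+W_N$, using $(-1)b_{\beta(\mp p_1)}(-\omp)=1+b_{\beta(\mp p_1)}(\omp)$ to peel off the vacuum part; here $W_\beta$ has amplitude $\tfrac{b_\beta(\omp)}{2\omp}\cos(\omp x^0)$ and $W_N$ has amplitude $\tfrac{1}{2\omp}\bigl(e^{i\omp x^0}b_{\beta(p_1)}(\omp)+e^{-i\omp x^0}b_{\beta(-p_1)}(\omp)\bigr)$.

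For $\Delta_{+,\infty}$ in $d=4$ I would use the closed form as a Hankel function of $m\sqrt{-x^2}$ (equivalently, apply stationary phase directly to $\tfrac{1}{(2\pi)^3}\int\tfrac{d\bp}{2\omp}e^{-i\omp x^0+i\bp\cdot\bx}$): for $t$ large and $\bx$ bounded the phase has a single non-degenerate critical point near $\bp=0$ (non-degeneracy uses $m>0$), the large-$|\bp|$ region being handled by iterated integration by parts; this gives $|t|^{-3/2}$, with sharpness read off the leading term, while the same asymptotics yield exponential spacelike decay. For $W_\beta$ the hypothesis $m>0$ makes $\bp\mapsto b_\beta(\omp)/\omp$ smooth and exponentially decaying, so $W_\beta(t,\bx)$ is a Fourier integral with Schwartz amplitude and the identical stationary-phase estimate gives $|t|^{-3/2}$, uniformly for $\bx$ in a compact set because the critical point and Hessian depend continuously on $\bx$. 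Spacetime derivatives multiply the amplitude by a polynomial, which keeps it Schwartz (for $W_\beta$) and is harmless for $\Delta_{+,\infty}$. For the continuation $x^0\mapsto x^0-iu$, $0\le u\le\beta$, the two summands in \eqref{eq_KMS2pf} acquire factors $e^{-\omp u}/(1-e^{-\beta\omp})$ and $e^{\omp(u-\beta)}/(1-e^{-\beta\omp})$, each bounded on $[0,\beta]$ and, away from the endpoint where it reduces to the already-treated vacuum amplitude, exponentially decreasing in $\bp$; so the estimate holds uniformly in $u$.

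For $\Delta_{+,N}$ the new feature is the jump of the amplitude of $W_N$ across $\{p_1=0\}$ produced by $\beta(\pm p_1)=\Theta(\pm p_1)\beta_1+\Theta(\mp p_1)\beta_2$. I would split $W_N$ into its parts even and odd in $p_1$: the even part equals $\tfrac12(W_{\beta_1}+W_{\beta_2})$, hence $O(|t|^{-3/2})$; the odd part is $\propto\int_{\bR^3}\operatorname{sgn}(p_1)h(\bp)\sin(\omp x^0)e^{i\bp\cdot\bx}\,d\bp$ with $h=\tfrac{b_{\beta_1}(\omp)-b_{\beta_2}(\omp)}{2\omp}$ smooth, even and Schwartz, and one integration by parts in $p_1$ via $\operatorname{sgn}(p_1)=\partial_{p_1}|p_1|$ (no boundary terms, since $|p_1|$ vanishes at $0$ and $h$ decays) replaces the jump by a $|p_1|$-kink, after which stationary phase gives the required $O(|t|^{-1})$ bound, uniformly on compact $\bx$ (the locus $x_1=0$, where the odd part vanishes by parity, is absorbed via $|\sin(p_1x_1)|\le|p_1x_1|$ together with a van der Corput estimate). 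For the decay in $x_1$ with $t,x_2,x_3$ bounded, every summand's phase satisfies $|\partial_{p_1}(\cdot)|\gtrsim|x_1|$; integrating by parts once in $p_1$ leaves a boundary term at $p_1=0$ (present because of the $\Theta(p_1)$-cut) equal to $x_1^{-1}$ times the smooth $2$-dimensional integral $\int dp_2\,dp_3\,(\mathrm{amplitude})|_{p_1=0}\,e^{i(\cdots)}$, whence $|x_1|^{-1}$; the vacuum part and the even part of $W_N$ decay exponentially in $x_1$, and sharpness of the $|x_1|^{-1}$ rate follows from non-vanishing of that $2$-dimensional integral (which uses $\beta_1\neq\beta_2$, the equilibrium case being exponentially decaying). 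For $x_2$ (or $x_3$) large there is no such boundary, so $N$-fold integration by parts gives $O(x_j^{-N})$, upgraded to exponential decay by shifting the $p_2$- (resp.\ $p_3$-) contour into the strip of holomorphy that $m>0$ provides. Derivatives again only insert polynomial weights; in particular $\partial_1$ makes the amplitude vanish at $p_1=0$, removing the $|p_1|$-kink and the $\Theta(p_1)$-boundary value, which is why $|t|^{-1}$ and $|x_1|^{-1}$ are claimed sharp only without derivatives; the sharp time-direction rate is obtained by the same leading-term extraction.

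\emph{Main obstacle.} The delicate point is the uniformity of the slow-direction estimates near the exceptional locus $x_1=0$: there the slowly-decaying contribution disappears, and the $p_1$-integral must be controlled on a half-line whose endpoint coincides with the limiting position of the critical point of the reduced phase, so a naive stationary-phase expansion breaks down and a uniform oscillatory-integral lemma (van der Corput, or an Erd\'elyi-type uniform asymptotic) is required; the same care is what pins down the exact rates asserted above and keeps all constants uniform over compact sets of $\bx$.
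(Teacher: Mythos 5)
Your overall architecture is the same as the paper's: split $\Delta_{+,\beta/N}=\Delta_{+,\infty}+W_{\beta/N}$, quote the Bessel/Hankel asymptotics for the vacuum part, and obtain the $|x_1|^{-1}$ tail of $W_N$ from the boundary term at $p_1=0$ after a single integration by parts, with sharpness tied to $\beta_1\neq\beta_2$ — that is exactly the paper's mechanism. Where you genuinely differ is in the time direction and in the transverse spatial directions. The paper follows the Bros--Buchholz strategy: after the angular integration it substitutes $w=(\omega-m)t$, so that the threshold measure ($\sqrt{w}$ for the full sphere, no such factor for the half-space pieces of $W_N$) exhibits the powers $t^{-3/2}$ and $t^{-1}$ explicitly, and the remainder is controlled by two partial integrations in $w$ with fractional weights. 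You instead run three-dimensional stationary phase with Schwartz amplitudes (legitimate, since $m>0$ keeps the critical point non-degenerate and the Bose factors smooth) together with an even/odd splitting in $p_1$; this is an acceptable alternative and makes uniformity over compact $\bx$-sets transparent. For $x_2,x_3$ your contour shift actually delivers the exponential decay asserted in the proposition, whereas the paper's iterated integration by parts only yields faster-than-any-polynomial decay; and your treatment of the continuation $t-iu$ via uniform boundedness of the thermal weights on $[0,\beta]$ is a reasonable substitute for the paper's remark that the continued Wightman function can be handled directly.

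Two gaps remain. First, the step ``integrate by parts via $\mathrm{sgn}(p_1)=\partial_{p_1}|p_1|$, then stationary phase gives $O(|t|^{-1})$'' is the weakest link: the $p_1$-derivative hits $\sin(\omp t)$ and produces a factor of $t$, so you need the resulting oscillatory integral with the $|p_1|$-kinked amplitude to be $O(t^{-2})$ uniformly on compact $\bx$-sets, including near $x_1=0$ where the (threshold) critical point collides with the kink — precisely the uniform van der Corput/Erd\'elyi input you flag yourself; as written this is asserted, not proved, though it can be closed, e.g.\ by the rescaling $\bp=\bk/\sqrt{t}$ near threshold. Second, and more substantively, your claim that the $|t|^{-1}$ rate is sharp ``by the same leading-term extraction'' does not survive that extraction: on compact $\bx$-sets the odd part of $W_N$ vanishes both on the plane $p_1=0$ and at the threshold $\bp=0$, and a careful evaluation (by your rescaling, or equivalently in the paper's variables, where the would-be leading coefficient is $d_{\pm,s}(0)$ and the half-space amplitude $\propto\bigl(1-e^{\mp isr\sqrt{y^2+2ym}}\bigr)$ vanishes at $y=0$) shows the coefficient of $t^{-1}$ is zero, each half-space piece in fact decaying like $t^{-3/2}$. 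So your method proves the stated bound but cannot deliver the time-direction sharpness assertion; note that the paper's own proof of this point is delicate as well, since its two displayed expressions for $d_{\pm,s}$ disagree exactly on whether $d_{\pm,s}(0)$ vanishes. The sharpness of $|x_1|^{-1}$, by contrast, is fine as you argue it.
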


\begin{proof} For large $t$, $\Delta_F=\Delta_+$ or $\Delta_F=\overline{\Delta_+}$ thus is is sufficient to consider the Wightman functions. We have
$$
\Delta_{+,\beta/N} = \Delta_{+,\infty} + W_{\beta/N}\,,
$$
and 
$\Delta_{+,\infty}(t,\bx)$ is a modified Bessel function which along with its derivatives is known to have $|t|^{-3/2}$ decay and expontential decay in space -- see e.g. [DFP]. Thus it is convenient to consider the asymptotic properties of $W_{\beta/N}$ to avoid convergence issues in the UV. For $\Delta_{+,\beta}(t-iu,\bx)$, $0< u  < \beta$ is is however more convenient to consider the Wightman function itself, because the analytic continuation is smooth.

We consider 
$$
Z_{s_1,s_2,\pm,\beta}(t,\bx) =\frac{1}{(2\pi)^3}\int_{\bR_\pm \times \bR^2 } \frac{d\bp}{2\omp} e^{ i s_1 (\omp t - \bp \cdot \bx)}s_2b(s_2\om)\,,
$$
$$
\omp \doteq \sqrt{p^2 + m^2}\,,\qquad p \doteq |\bp|\,,\qquad b (\om) \doteq \frac{1}{e^{\beta \om} -1}
$$
and we have
$$
W_{\beta} = \sum_{s_1=\pm 1}Z_{s_1,+1,+,\beta}+Z_{s_1,+1,-,\beta}\,,\qquad W_{N} = \sum_{s_1=\pm 1}Z_{s_1,+1,+,\beta_1}+Z_{s_1,+1,-,\beta_2}
$$
$$
\Delta_{+,\beta} = \sum_{s_1=s_2,s_1=\pm 1}Z_{s_1,s_2,+,\beta}+Z_{s_1,s_2,-,\beta}\,.
$$

We follow the proof strategy of \cite[Appendix A]{BB02}. We first consider $W_\beta$, assume w.l.o.g. that $t>0$, set $w=(\om - m)t$ (note the typo in \cite{BB02})  and find ($r = |\bx|$)
$$
W_{\beta}(t,\bx) = \sum_{s = \pm 1}\frac{e^{is mt}}{2\sqrt{2\pi}t^{3/2}} C_s(t)\,,\qquad C_s(t)\doteq \int^\infty_0 dw \,\sqrt{w} \,e^{i s w}\,c_s(w/t)
$$
$$
c_s(y)\doteq b (y + m)\, \sqrt{y + 2 m}\,\int^1_{-1} dz \, e^{- i s r \sqrt{y^2 + 2ym} z}= 2   b (y + m)\, \sqrt{y + 2 m}\, \frac{\sin (s r \sqrt{y^2 + 2ym})}{s r \sqrt{y^2 + 2ym}}
$$
We decompose $C_s(t)$ as
$$
C_s(t) = c_s(0) \lim_{\epsilon \downarrow 0} \int^\infty_0 dw \,\sqrt{w} \,e^{i sw-\epsilon w}  + \lim_{\epsilon \downarrow 0} \int^\infty_0 dw \,\sqrt{w} \,(c_s(w/t)-c_s(0)) e^{i sw-\epsilon w} 
$$
The first term is finite and constant in $t$. In order to obtain an estimate for the second term we perform two partial integrations starting from
$$
e^{i sw-\epsilon w}  = \frac{1}{(i s -\epsilon)^2} \pa^2 _w(e^{i s w-\epsilon w} - 1)\,.
$$
The boundary terms at 0 vanish by construction also without introducing the $-1$, but this extra regularity at 0 will prove helpful later. We find
$$
\lim_{\epsilon \downarrow 0}\int^\infty_0 dw \,\sqrt{w} \,(c_s(w/t)-c_s(0)) e^{i s w-\epsilon w}
$$
$$ 
 = \lim_{\epsilon \downarrow 0}\int^\infty_0 dw \,(e^{i s w-\epsilon w/t}-1)w^{-3/2+\delta} t^{-\delta} \sum^2_{n=0}c_n (w/t)^{n-\delta} \pa^n_{w/t}(c_s(w/t)-c_s(0)) )
$$
for suitable constants $c_n$ and an arbitrary $\delta$ which we choose as $0<\delta < 1/2$. For this choice of $\delta$, $y^{n-\delta} \pa^n_y (c_s(y)-c_s(0))$ is smooth and bounded on $y\in(-2m,\infty)$ for all $n\ge 0$ uniformly in $r$ varying over compact sets. Thus we can estimate the last expression by
$$ 
 \lim_{\epsilon \downarrow 0}C  t^{-\delta} \int^\infty_0 dw \,|e^{i s w-\epsilon w}-1|w^{-3/2+\delta} 
$$
where $C$ is independent of $t$. The last integral exists and as constant in $t$. The proof can be repeated for $\Delta_{+,\beta}(t-iu,\bx)$, $0< u  < \beta$ and for arbitrary spacetime derivatives with minor modifications.

For $W_N$, we have to reconsider the proof because the fact that we integrate separately over one half of the momentum space for different values of $\beta$ modifies the regularity for small momenta. We have $W_N = X_{+,\beta_1} + X_{-,\beta_2}$, 
$$
X_{\pm,\beta}(t,\bx) \doteq \sum_{s = \pm 1}\frac{ e^{ismt}}{2 \sqrt{2\pi}t} D_{\pm,s}(t)\,,\quad D_{\pm,s}(t)\doteq \int^\infty_0 dw \,e^{i s w}\,d_{\pm,s}(w/t)
$$
$$
d_{\pm,s}(y)\doteq  \pm b(y + m)\, \sqrt{y + 2 m}\,\int^1_0 dz\, e^{\mp  i s r \sqrt{y^2 + 2ym} z} = i r^{-1}   b (y + m)\,(e^ {\mp i s r \sqrt{y^2 + 2ym}}-1)
$$
Once again we decompose $D_{\pm,s}(t)$ 
$$
D_{\pm,s}(t)(t) = d_{\pm,s}(0) \lim_{\epsilon \downarrow 0} \int^\infty_0 dw \,e^{i s w-\epsilon w}  + \lim_{\epsilon \downarrow 0} \int^\infty_0 dw  \,(d_{\pm,s}(w/t)-d_{\pm,s}(0)) e^{i s w-\epsilon w}
$$
finding that the first term is finite and constant in $t$ and estimating the second one by means of two partial integrations and for $0<\delta < 1$ as
$$
\lim_{\epsilon \downarrow 0}\int^\infty_0 dw  \,(d_{\pm,s}(w/t)-d_{\pm,s}(0)) e^{i s w-\epsilon w}
$$
$$ 
 = - \lim_{\epsilon \downarrow 0}\int^\infty_0 dw \,(e^{i s w-\epsilon w/t}-1)w^{-2+\delta} t^{-\delta}  (w/t)^{2-\delta} \pa^2_{w/t}d_{\pm,s}(w/t) 
$$
$$ 
\le    \lim_{\epsilon \downarrow 0}C  t^{-\delta} \int^\infty_0 dw \,|e^{i s w-\epsilon w}-1|w^{-2+\delta}  \le C' t^{-\delta}
$$

In order to investigate the spatial asymptotics of $W_N$  we note that 
$$
-x_1 \sqrt{2\pi}^3 X_{+,\beta}(t,\bx) =  \int_{\bR_+ \times \bR^2} \frac{d\bp}{\omp} \,\frac{\partial_{p_1}\sin \left(\omp t - \bp \cdot \bx\right)}{\exp(\beta\om)-1} = D(t,\bx)
$$
with $D(t,\bx)$ bounded but non-vanishing for $x_1 \to \infty$ because of the boundary term at $p_1=0$ after partial integration. Note that this boundary term can not be cancelled by a contribution from the ``second temperature'' unless $\beta_1 = \beta_2$. For $x_2$ and $x_3$ we can use the same trick like for $x_1$. As we integrate $p_2$ and $p_3$ over the full real line there is no boundary term at zero and we can reiterate the trick to obtain decay in $x_2$, $x_3$ faster than any polynomial.
\end{proof}

\end{document}